\newtheorem{theorem}{Theorem}
\newtheorem{claim}{Claim}
\newtheorem{lemma}{Lemma}
\newtheorem{corollary}{Corollary}
\newtheorem{definition}{Definition}
\newtheorem{example}{Example}
\newcommand{\bproof}{ \begin{IEEEproof} }
\newcommand{\eproof}{ \end{IEEEproof} }
\newcommand{\beqno}{ \begin{equation*} }
\newcommand{\eeqno}{ \end{equation*} }
\newcommand{\beqa}{\begin{eqnarray*} }
\newcommand{\eeqa}{\end{eqnarray*} }
\newcommand{\beq}{ \begin{equation} }
\newcommand{\eeq}{ \end{equation} }
\newcommand{\calC}{\mathcal{C}}
\newcommand{\calD}{\mathcal{D}}
\newcommand{\calT}{\mathcal{T}}
\newcommand{\calU}{\mathcal{U}}
\newcommand{\dsW}{\mathds{W}}
\newcommand{\dsZ}{\mathds{Z}}
\newcommand{\dsD}{\mathds{D}}
\begin{document}
\title{Improved Lower Bounds for Coded Caching}
\author{\IEEEauthorblockN{Hooshang~Ghasemi and Aditya~Ramamoorthy\\}
\IEEEauthorblockA{Dept. of Electrical \& Computer Eng.\\ Iowa State University, Ames, IA 50011 \\
Email: \{ghasemi, adityar\}@iastate.edu}
\thanks{The material in this work has appeared in part at the 2015 IEEE International Symposium on Information Theory.}
}

\maketitle

\begin{abstract}
Content delivery networks often employ caching to reduce transmission rates from the central server to the end users. Recently, the technique of coded caching was introduced whereby coding in the caches and coded transmission signals from the central server are considered. Prior results in this area demonstrate that carefully designing the placement of content in the caches and designing appropriate coded delivery signals from the server allow for a system where the delivery rates can be significantly smaller than conventional schemes. However, matching upper and lower bounds on the transmission rate have not yet been obtained.
In this work, we derive tighter lower bounds on the coded caching rate than were known previously. We demonstrate that this problem can equivalently be posed as a combinatorial problem of optimally labeling the leaves of a directed tree. Our proposed labeling algorithm allows for significantly improved lower bounds on the coded caching rate. Furthermore, we study certain structural properties of our algorithm that allow us to analytically quantify improvements on the rate lower bound for general values of the problem parameters. This allows us to obtain a multiplicative gap of at most four between the achievable rate and our lower bound.

%

\end{abstract}
\begin{keywords}
coded caching, directed tree, optimal labeling, lower bounds,multiplicative gap
\end{keywords}

\IEEEpeerreviewmaketitle
\section{Introduction}

Content distribution over the Internet is an important problem and is the core business of several enterprises such as Youtube, Netflix, Hulu etc. The operation of such large scale systems presents several challenges, including (but not limited to) storage of the data, ensuring reliable availability and efficient content delivery. One commonly used technique to facilitate delivery is content caching \cite{wessels}. The main idea in ``conventional content caching" is to store relatively popular content in local memory either on the desired device or in a device at the edge of the network such as an intermediate router. This local memory is referred to as the cache. Upon request, this cached content is used to serve the clients, thus reducing the number of bits transmitted from the server and thereby reducing overall network congestion. Note that even web browsers, routinely cache the content of popular websites on a local machine to speed up the loading of webpages. 

Historically, content caching algorithms have attempted to optimize the placement of content in the caches so that the average number of bits that are transmitted from the central server to the end users is minimized \cite{MeyersonMP01,KorupoluPR99,BorstGW10,tanM13}. This often requires some knowledge on the popularity of file requests \cite{wolman99,breslau_et_al99,applegate_et_al10} made by the users. Moreover, the typical approach is to cache a certain fraction of the file and to obtain the remaining parts from the server when the need arises. Coding in the content of the cache and/or coding in the transmission from the server are typically not considered.

The work of \cite{maddahN14} introduced the problem of coded caching, where there is a server with $N$ files and $K$ users each with a cache of size $M$. The users are connected to the server by a shared link (see Fig. \ref{fig:block_diag}). In each time slot each user requests one of the $N$ files.  There are two distinct phases in coded caching.
\begin{itemize}
\item {\it Placement phase:} In this phase, the content of caches is populated. This phase should not depend on the actual user requests (which are assumed to be arbitrary). Typically, this placement phase can be executed in the {\it off-peak} hours where the amount of network traffic is low.

\item {\it Delivery phase:} In this phase, each of the $K$ users request one of the $N$ files. The server transmits a signal of rate $R$ over the shared link that simultaneously serves to satisfy the demands of each of the users.
\end{itemize}
The work of \cite{maddahN14} demonstrates that a carefully designed placement scheme and a corresponding delivery scheme achieves a rate that is significantly lower than conventional caching.  While coded caching promises very significant gains in transmission rates, at this point we do not have matching upper and lower bounds on the $(R,M)$ pairs for a given $N$ and $K$.
\definecolor{myblue}{RGB}{53,122,183}
\begin{figure}[t]
\centering
\includegraphics[scale=1]{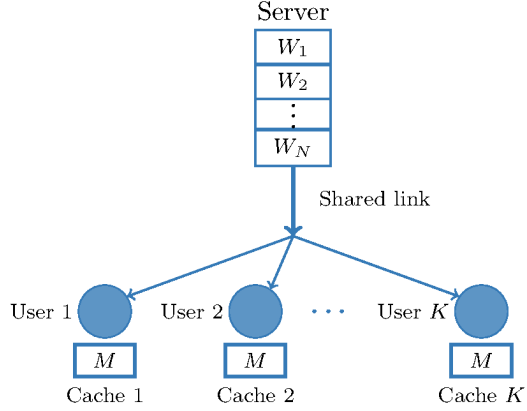}
\caption{Block diagram of coded caching system.}
\label{fig:block_diag}
\end{figure}

In this work our main contribution is in developing improved lower bounds on the required rate for the coded caching problem. We demonstrate that the computation of this lower bound can be posed as a combinatorial labeling problem on a directed tree. In particular, our method generates lower bounds on $\alpha R + \beta M$, where $\alpha, \beta$ are positive integers. We demonstrate that a careful analysis of the underlying combinatorial structure of the problem allows us to obtain significantly better lower bounds than those obtained in prior work \cite{maddahN14,sengupta2015improved,ajaykrishnan2015critical}. In addition, our machinery allows us to show that the achievable rate of \cite{maddahN14} is within a multiplicative factor of four of our proposed lower bound.

This paper is organized as follows. Section \ref{sec:background} discusses the background, related work and summarizes the main contributions of our work. Section \ref{sec:lower_bound} presents our proposed lower bound technique. The multiplicative gap between the achievable rate and our lower bound is outlined in Section \ref{sec:mult_gap}. Our proposed strategy also applies to certain variants of the coded caching problem that have been discussed in the literature; this is explained in Section \ref{sec:variants}. There have been some other approaches presented in the literature \cite{maddahN14,sengupta2015improved,ajaykrishnan2015critical} for improving the lower bound on the coded caching rate. We present comparisons between our approach and the other approaches in Section \ref{sec:comparison}. We conclude the paper with a discussion of opportunities for future work in Section \ref{sec:conclusions}.
\section{Background, Related Work and Summary of Contributions}
\label{sec:background}

In a coded caching system there is a server that contains $N$ files, denoted $W_i, i =1, \dots, N$, each of size $F$ bits. There are $K$ users that are connected to the central server by means of a shared link. Each user has a local cache memory of size $MF$ bits; we denote the cache content by the symbol $Z_i$ (which is a function of $W_1, \dots, W_N$). In each time slot, the $i$-th user demands the file $W_{d_i}$ where $d_i \in \{1, \dots, N\}$.
The coded caching problem has two distinct phases. In the {\it placement phase}, the content of caches is populated; this phase should not depend on the actual user requests (which are assumed to be arbitrary). In the {\it delivery phase}, the server transmits a potentially coded signal that serves to satisfy the demands of each of the users. A pair $(M,R)$ is said to be achievable if for every possible request pattern (there are $N^K$ of them), every user can recover its desired file with high probability for large enough $F$. We let $R^\star(M)$ denote the infimum of all such achievable rates for a given $M$.

The coded caching problem can be formally described as follows.
Let $[m] = \{1, \dots, m\}$, where $m$ is a positive integer.
Let $\{W_n\}_{n=1}^N$ denote $N$ independent random variables (representing the files) each uniformly distributed over $[2^F]$. The $i$-th user requests the file $W_{d_i}$, where $d_i \in [N]$. A $(M,R)$ system consists of the following.
\begin{itemize}
\item $K$ caching functions, $Z_i \triangleq \phi_i(W_1, \dots, W_N)$ where $\phi_i: [2^F] \rightarrow [2^{\lfloor FM \rfloor}]$.
\item A total of $N^K$ encoding functions $\varphi_{d_1, \dots, d_K}(W_1, \dots, W_N)$, so that the delivery phase signal $X_{d_1, \dots, d_K} \triangleq \varphi_{d_1, \dots, d_K}(W_1, \dots, W_N)$. Here, $\varphi_{d_1, \dots, d_K}:[2^F]^N \rightarrow [2^{\lfloor FR \rfloor}]$.
\item For each delivery phase signal and each user, we define appropriate decoding functions. There are a total of $K N^K$ of them.  For the $k$-th user $\mu_{d_1, \dots, d_K;k} (X_{d_1, \dots, d_K}, Z_k)$, $k=1,\dots,K$ so that decoded file $\hat{W}_{d_1, \dots, d_K;k} \triangleq \mu_{d_1, \dots, d_K;k} (X_{d_1, \dots, d_K}, Z_k)$. Here  $\mu_{d_1, \dots, d_K;k}:[2^{\lfloor RF \rfloor}] \times [2^{\lfloor FM \rfloor}] \rightarrow [2^F]$.
\end{itemize}
The probability of error is defined as
\begin{align*}
\max_{(d_1, \dots, d_K) \in [N]^K} \max_{k \in [K]} P(\hat{W}_{d_1, \dots, d_K; k} \neq W_{d_k}).
\end{align*}
\begin{definition}
The pair $(M,R)$ is said to be achievable if for $\epsilon > 0$, there exists a file size $F$ large enough so that there exists a $(M,R)$ caching scheme with probability of error at most $\epsilon$. We define
\begin{align*}
R^{\star}(M) = \inf \{R: (M,R) \text{~is achievable}\}.
\end{align*}
\end{definition}
In this setting, it is not too hard to see that the best that a conventional caching system can do is to simply store an $M/N$ fraction of each file in each of the caches. In order to satisfy the demands of the user, the server has to transmit the remaining $(1-M/N)$ fraction of each of the $K$ files. Thus the transmission rate (normalized by $F$) is given by
\begin{align}
\label{eq:uncoded_rate}
R_{U}(M) = \min(N,K) \bigg{(}1 -\frac{M}{N}\bigg{)}.
\end{align}
Note that $\min(N,K)$ is the transmission rate in the absence of any caching. In \cite{maddahN14}, the factor $(1-M/N)$ is referred to as the {\it local caching gain} as it is gain that is obtained purely from the cache, without any optimization of the transmission from the server. In the setting where we perform nontrivial coding in the cache and delivery phase encoding functions, \cite{maddahN14} demonstrates that a carefully designed placement scheme and a corresponding delivery scheme achieves a rate
\begin{align}
\label{eq:coded_rate}
R_{C}(M) = K\bigg{(}1 -\frac{M}{N}\bigg{)} \cdot \min \bigg{\{}\frac{1}{1 + KM/N}, \frac{N}{K}\bigg{\}},
\end{align}
where $M \in \{0,N/K, 2N/K,\dots, N\}$. Other values of $M$ are obtained by time-sharing between different solutions.

The factor $\frac{1}{1 + KM/N}$ which definitely dominates when $N \geq K$ is referred to as the {\it global caching gain}. It is to be noted that global caching gain depends on the overall cache size across all the users (owing to the term $KM/N$ in the denominator) whereas the local caching gain only depends on the per-user cache size (owing to the term $1 - M/N$). The work of \cite{maddahN14} also shows that the rate $R_C(M)$ is within a factor of $12$ of the information theoretic optimum for all values of $N, K$ and $M$. Furthermore, they compare their achievable rate ({\it cf.} eq. (\ref{eq:coded_rate})) to a cutset bound that can be expressed as follows.
\begin{align}
\label{eq:cutset_bd}
R^\star(M) \geq \max_{s \in \{1, \dots, \min(N,K)\}} \bigg{(} s - \frac{s}{\lfloor N/s \rfloor} M \bigg{)}.
\end{align}

\subsection{Related work}
Coded caching is related to but different from the index coding problem \cite{barBJK11}. In the index coding problem, there are $N'$ sources such that $i$-th source has message $W_i$, $i =1, \dots, N'$. There are $K$ terminals, each of which has some subset of $\{W_1,\dots,W_{N'}\}$ available. In addition, each terminal requests a certain subset of the messages $\{W_1,\dots,W_{N'}\}$. The aim in the index coding problem is to minimize the number of bits that are transmitted on the shared link so that the demands of each user are satisfied. It is well recognized that the index coding problem for arbitrary side information is a computationally hard problem where nonlinear codes may be necessary \cite{barBJK11,lubetzkyS09}. In particular, the optimal {\it linear} index code corresponds to minimizing the rank of an appropriately defined matrix over a finite field. This so called minrank problem \cite{barBJK11} is also known to be computationally hard. It can be observed that for a fixed but {\it uncoded} cache content and a fixed set of demands of the various users, the problem of determining the optimal delivery phase signal in the coded caching problem is equivalent to an index coding problem. Note however, that in the coded caching problem, we allow the cache content to be coded.

Since the original work of \cite{maddahN14}, there have been several aspects of coded caching that have been investigated. Reference \cite{maddahN14mr_tradeoff} considers the scenario of decentralized caching when the placement phase is driven by the users who randomly populate their caches with subsets of the files stored at the server.
Approaches for updating the cache content are considered in \cite{pedarsaniMN14} and the case of files with different popularity scores are considered in \cite{maddahN14nonuniform_demand} and \cite{jiTLC14zipf,hachemKD14a}. Security issues in this domain are considered in \cite{senguptaTC13}. The work of \cite{karamNMD14} considers the more general case of hierarchical coded caching, where certain intermediate nodes in the network are equipped with potentially larger caches and investigates methods for minimizing the overall traffic in such networks (see also \cite{hachemKD14}). Coded caching where each user requests multiple files was investigated in \cite{jiTLC14}. The case of device-to-device (D2D) wireless networks where there is no central server was examined in \cite{jiCM13,senguptabeyondd2d}. Systems with files of differing sizes were examined in \cite{zhang2015coded}.

In addition to these contributions, there have been other lines of work that deal with content caching. In a parallel line of work \cite{shanmugamGDMC13,golrezaeiMDC13,jiCM13,yueYCGZ14} consider the problem of femtocaching in a wireless setting where in addition to a central server (or base station), there are helpers (with caches) interspersed in a cell that help the end users satisfy their demands. The goal is again to consider caching strategies that minimize the overall rate, but the solution approaches do not consider the worst case rate over all possible demand patterns; instead the popularity scores of the different files are explicitly taken into account. Moreover, while coding is considered, it is conceptually different in the sense that the coding is only restricted to parts of the same file and coding across different files is not considered.

There has also been parallel work on establishing lower bounds for the coded caching problem. In \cite{sengupta2015improved}, the Han's inequality was leveraged to obtain an improved lower bound. A multiplicative gap of $8$ between their lower bound and the achievable rate in eq. (\ref{eq:coded_rate}) ws established. The work of \cite{ajaykrishnan2015critical} also presents a lower bound technique. As discussed in Section \ref{sec:comparison}, their technique can be considered as a special case of our work. The specific case of $N=K=3$ was considered in \cite{tian2015note} via a computational approach. We compare our technique with these other approaches in Section \ref{sec:comparison}.

\subsection{Summary of our contributions}
In this work our main contribution is in developing improved lower bounds on the rate for the coded caching problem. We show that the cutset based bound in eq. (\ref{eq:cutset_bd}) is significantly loose and propose a larger class of lower bounds that are significantly tighter.
Our specific contributions include the following.
\begin{itemize}
\item We demonstrate that the computation of our lower bound can be posed as a combinatorial labeling problem on a directed tree. Our method generates lower bounds on $\alpha R^\star + \beta M$, where $\alpha, \beta$ are positive integers. While the cutset bound only optimizes over at most $\min(N,K)$ choices, our technique allows us to consider many more $(\alpha, \beta)$ pairs\footnote{The cutset bound can be considered as a special case of our bound}.
\item  We demonstrate that a careful analysis of the underlying combinatorial structure of the problem allows us to obtain significantly better lower bounds than those obtained in prior work. For a given pair $(\alpha, \beta)$ and number of users $K$, it is intuitively clear that the lower bound on $\alpha R^\star + \beta M$ will be large if the number of files $N$ is large. We define the notion of a saturated instance, which are directed trees and corresponding labelings that give the largest possible lower bound (using our technique) using as few files as possible. An analysis of saturated instances allows us to always improve on the cutset bound and in most ranges of $M$, our bound is strictly better.
\item Our machinery allows us to show that the achievable rate of \cite{maddahN14} is within a multiplicative factor of four of our proposed lower bound for all values of $N$ and $K$. This is possible by analyzing some combinatorial properties of saturated instances. Note that the multiplicative gap of four is currently the best known for this problem.
\item Our proposed technique also applies to other variants of coded caching problem. We discuss the application of our work to the case of D2D wireless networks and coded caching with multiple requests as well.
\end{itemize}

\begin{figure}[!t]
\centering
\includegraphics[scale=0.5]{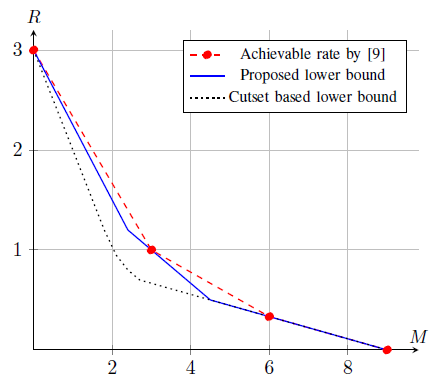}
\caption{{\small An example of a coded caching system with $N=9$ files, $K=3$ users. Note that the proposed lower bound is better than the cutset bound and matches the achievable rate points at multiples of $N/K$.}}
\label{Fig:showsuperioteryofresult}
\end{figure}
As an example, Fig. \ref{Fig:showsuperioteryofresult} illustrates the tightness of the proposed lower bound for a coded caching system with a server that contains $N=9$ files and $K=3$ users. Specifically, our proposed bound demonstrates the optimality of the achievable scheme for values of $M$ that are integer multiples of $N/K$ in this specific case. 


\section{Lower Bound on $R^{\star}(M)$}
\label{sec:lower_bound}
In this section we present our proposed lower bound on $R^\star(M)$. We begin with an example that demonstrates the core idea of our approach.
\begin{example}
\label{exp:TreeInstanceExmp}
Consider a coded caching system with $N = K = 3$. Then, the following sequence of information theoretic inequalities hold.
\begin{align*}
&2R^{\star}F + 2MF \geq H(Z_1, X_{123}) + H(Z_2, X_{312}) \\
&\stackrel{(a)}{=} I(W_1;Z_1, X_{123}) +  H(Z_1, X_{123} | W_1) + I(W_1;Z_2, X_{312}) + H(Z_2, X_{312} | W_1)\\
&= H(W_1) - H(W_1 | Z_1, X_{123}) + H(Z_1, X_{123} | W_1) + H(W_1) - H(W_1|Z_2, X_{312}) + H(Z_2, X_{312} | W_1)\\
&\stackrel{(b)}{\geq } F(1 - \epsilon) + F(1 - \epsilon)+ H(Z_1, Z_2, X_{123}, X_{312} | W_1)\\
&= 2F(1 -\epsilon) + I(W_2, W_3 ; Z_1, Z_2, X_{123}, X_{312} | W_1) + H(Z_1, Z_2, X_{123}, X_{312} | W_1, W_2, W_3)\\
&\stackrel{(c)}{\geq} 2F(1-\epsilon) + 2F(1-\epsilon) = 4F(1-\epsilon),
\end{align*}
where equality (a) holds by the definition of mutual information. Inequality (b) holds by Fano's inequality since the file $W_1$ can be recovered with $\epsilon$-error from the pairs $(Z_1, X_{123})$ and $(Z_2, X_{312})$ and by the fact that  conditioning reduces entropy. Similarly, inequality (c) holds by Fano's inequality since the files $W_2$ and $W_3$ can be recovered with $\epsilon$-error from $(Z_1, Z_2, X_{123}, X_{312})$. This holds for arbitrary $\epsilon > 0$ and $F$ large enough. Dividing throughout by $F$ we have the required result.
\end{example}

Thus, the key idea of the above bound is to choose the delivery phase signals in such a manner so that the various terms that are combined allow the ``reuse" of the same file multiple times. For instance, in step (a) of the above bound, we use the definition of mutual information to rewrite the terms $H(Z_1,X_{123})$ and $H(Z_2,X_{312})$. Note that both pairs $(Z_1, X_{123})$ and $(Z_2, X_{312})$ allow the recovery of the {\it same} file $W_1$, resulting in a contribution of $2F$ to the lower bound. On the other hand, the files $W_2$ and $W_3$ are recovered only once. The overall result is a lower bound of $4F$.

Thus, our lower bound works with judiciously chosen labels for the delivery phase signals and combines them with the cache signals in an appropriate way such that a given file is recovered a large number of times. It turns out that doing this systematically and tractably requires the development of several new ideas. For instance, the aforementioned chain of inequalities can be equivalently represented in terms of a directed tree with appropriate labels on its leaves and edges as shown in Fig. \ref{Fig:TreeInstanceExmp}. In particular, the leaves of the tree are labeled with cache signals $Z_1$ and $Z_2$ and delivery phase signals $X_{123}$ and $X_{312}$.
Each internal node of the tree corresponds to the operation of combining the signals and its outgoing edge is labeled by the newly recovered file(s), e.g., at node $u_1$, the file $W_1$ is recovered. Likewise at node $u^*$, the files $W_2$ and $W_3$ are recovered. The lower bound can be obtained by summing the cardinalities of the edge labels. Towards the goal of generating these bounds in a systematic manner, we introduce the following definitions.
\begin{definition} {\it Directed in-tree.}
A directed graph $\calT = (V, A)$, is called a directed in-tree if there is one designated node called the root such that from any other vertex $v \in V$ there is exactly one directed path from $v$ to the root.
\end{definition}
The nodes in a directed in-tree that do not have any incoming edges are referred to as the leaves. The remaining nodes, excluding the leaves and the root are called internal nodes. Each node in a directed in-tree has at most one outgoing edge. We have the following definitions for a node $v \in V$.
\begin{align*}
out(v) &= \{u \in V : (v,u) \in A\}, \text{~(outgoing neighbor) and},\\
in(v) &= \{u \in V: (u,v) \in A\} \text{~(incoming neighbor set).}\\
in-edge(v) &= \{e \in A: e = (u,v)\} \text{~(incoming edge set)}.
\end{align*}
In this work, we exclusively work with trees which are such that the in-degree of the root equals 1. There is a natural topological order in $\calT$ whereby for nodes $u \in \calT$ and $v \in \calT$, we say that $u \succ v$ if there exists a sequence of edges that can be traversed to reach $v$ from $u$. This sequence of edges is denoted $path(u,v)$.
\begin{algorithm}[!t]
\caption{Lower Bound Algorithm}\label{Alg:Labeling}
\algrenewcommand\algorithmicrequire{\textbf{Input:}}
\algrenewcommand\algorithmicensure{\textbf{Output:}}
\algrenewcommand\algorithmicfunction{\textbf{Initialization}}
\begin{algorithmic}[1]
\Require $\calT = (V, A)$ with leaves $v_1, \ldots, v_\ell$ and $\{ label(v_i)\}_{i=1}^\ell$, such that $\dsW(v_i) = \emptyset, i=1,\dots, \ell$.
\algrenewcommand\algorithmicrequire{\textbf{Initialization:}}
\Require
\For{$i \gets 1, \dots \ell$}
\State $W_{new}(v_i) = \Delta(v_i,v_i)$.
\State $x_{(v_i,out(v_i))} = W_{new}(v_i)$.
\State $y_{(v_i,out(v_i))} = |W_{new}(v_i)|$.
\EndFor
\While{there exists an unlabeled edge}
\State Pick an unlabeled node $u \in V$ such that all edges in $in-edge(u)$ are labeled.
\State $\dsW(u) = \cup_{v \in in(u)} \dsW(v) \cup W_{new}(v)$.
\State $\dsZ(u) = \cup_{v \in in(u)} \dsZ(v)$.
\State $\dsD(u) = \cup_{v \in in(u)} \dsD(v)$.
\State $W_{new}(u) = \Delta(u,u) \setminus \dsW(u)$.
\State $x_{(u,out(u))} = W_{new}(u)$.
\State $y_{(u,out(u))} = |W_{new}(u)|$.
\EndWhile
\Ensure $L=\sum_{e \in A} y_{e} $.
\end{algorithmic}
\end{algorithm}

\begin{definition}
\label{defn:meeting_point}
{\it Meeting point of nodes in a directed tree.} Consider nodes $v_1$ and $v_2$ in a directed in-tree $\calT = (V,A)$. We say that $v_1$ and $v_2$ meet at node $u$ if there exist $path(v_1,u)$ and $path(v_2,u)$ in $\calT$ such that $path(v_1,u) \cap path(v_2,u) = \emptyset$. As there exists a path from any node in $\calT$ to the root node, it follows that the existence of node $u$ is guaranteed.
\end{definition}
\par Let $D = \cup_{d_1 \in [N], \dots, d_K \in [N]} \{ X_{d_1, \dots, d_K}\}$.
\begin{definition} {\it Labeling of directed in-tree.} Each node $v \in \calT$ is assigned a label, denoted $label(v)$, which is a subset of $\{W_1, \dots, W_N\} \cup \{Z_1, \dots, Z_K\} \cup D$. Moreover, we also specify $\dsW(v) \subseteq \{W_1, \dots, W_N\} $, $\dsZ(v) \subseteq \{Z_1, \dots, Z_K\} $ and $\dsD(v) \subseteq D$ so that $label(v) =  \dsW(v) \cup \dsZ(v) \cup \dsD(v)$. 
\end{definition}
In our formulation, the leaf nodes are denoted $v_i, i = 1, \dots, \ell$ are such that $\dsW(v_i) = \emptyset$.
\begin{definition}
We say that a singleton source subset $\{W_i\}$ is recoverable from the pair $(Z_j, X_{d_1, \dots, d_K})$ if $d_j = i$. 
Similarly, for a given set of caches $Z' \subseteq \{Z_1, \dots, Z_K\}$ and delivery phase signals $D' \subseteq D$, we define a set $Rec(Z',D') \subseteq \{W_1, \dots, W_N\}$ to be the subset of the sources that can be recovered from pairs of the form $(Z_i, X_{J})$ where $Z_i \in Z'$ and $J$ is a multiset of cardinality $K$ with entries from $[N]$ such that $X_{J} \in D'$.
\end{definition}
We let the entropy of a set of random variables equal the joint entropy of all the random variables in the set. We also let $[x]^+ = \max(x,0)$.

Given a directed tree $\calT$ with appropriate labels on its leaves we present an algorithm that generates an inequality of the form $\alpha R^{\star} + \beta M \geq L(\alpha, \beta)$. For nodes $u,v \in \calT$, we define the following. 
\begin{align}
\Delta(u,v) &= Rec(\dsZ(u),\dsD(v)), \text{~and} \nonumber\\
W_{new}(u) &= \Delta(u,u) \setminus \dsW(u). \label{eq:wnew_def}
\end{align}

\begin{figure}[!t]
\centering
%
%
%
\includegraphics[scale=0.5]{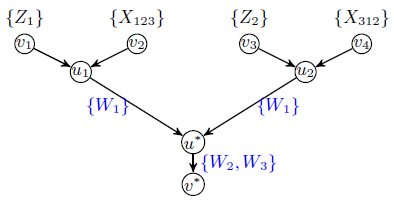}
\caption{{\small Problem instance for Example \ref{exp:TreeInstanceExmp}. For clarity of presentation, only the $W_{new}(u)$ label has been shown on the edges.}}
\label{Fig:TreeInstanceExmp}
\end{figure}



Algorithm \ref{Alg:Labeling} operates as follows. It takes as input a directed in-tree $\calT$ where each leaf $v_i, i = 1, \dots, \ell$ has labels $\dsZ(v_i)$ and $\dsD(v_i)$ ($\dsW(v_i)$ is set to $\emptyset$). The algorithm determines the files that are recovered at each $v_i$ and labels the corresponding outgoing edge with $W_{new}(v_i)$ and $|W_{new}(v_i)|$. Following this, the algorithm propagates the labels further down the tree in the following manner. For a given node $u$ whose incoming edges are labeled, we set  $\dsZ(u) = \cup_{v \in in(u)} \dsZ(v)$ and $\dsD(u) = \cup_{v \in in(u)} \dsD(v)$, i.e., each of these labels is set to the union of the corresponding labels of the nodes that belong to the incoming node set of $u$. Next, it sets $\dsW(u) = \cup_{v \in in(u)} \dsW(v) \cup W_{new}(v)$, i.e., in addition to the $\dsW$-labels of the incoming node set, $\dsW(u)$ also contains the new files that are recovered on the incident edges. Note that at each internal node certain cache signals and delivery phase signals {\it meet}, e.g., $Z_1$ and $X_{123}$ meet at node $u_1$ in Fig. \ref{Fig:TreeInstanceExmp}. The outgoing edge of an internal node is labeled by the {\it new} files that are recovered at the node, e.g., at $u_1$ the signals $Z_1$ and $X_{123}$ recover the file $W_1$. We call a file new if it has not been recovered upstream of a given node. In a similar manner at $u^*$ one can recover all the files $W_1, \dots, W_3$; however only the set $\{W_2,W_3\}$ is labeled on edge $(u^*, v^*)$ as $W_1$ was recovered upstream. This process is continued recursively, i.e., we label the outgoing edges with the new files that are recovered at node $u$, propagate the labels and continue thereafter. The algorithm continues until it labels the last outgoing edge.

It can be seen that the operation of Algorithm \ref{Alg:Labeling} is in one to one correspondence with the new files recovered in the sequence of inequalities in the lower bound. For example, the outgoing labels of $u_1$ and $u_2$ in Fig. \ref{Fig:TreeInstanceExmp} correspond to step (a) in the inequalities in Example \ref{exp:TreeInstanceExmp}.
We formalize this statement in the Appendix (Lemma \ref{lemma:general_lower_bd}) where we show that a valid lower bound is always obtained when applying Algorithm \ref{Alg:Labeling}.


\begin{definition} {\it Problem Instance.}
Consider a given tree $\calT$ with leaves $v_i, i = 1, \dots, \ell$ that are labeled as discussed above. Let $\alpha = \sum_{i=1}^{\ell} |\dsD(v_i)|$ and $\beta = \sum_{i=1}^{\ell} |\dsZ(v_i)|$. Suppose that the lower bound computed by Algorithm \ref{Alg:Labeling} equals $L$. We define the associated problem instance as $P(\calT, \alpha, \beta, L, N, K)$. We also define $\hat{\alpha} = |\cup_{i=1}^{\ell} \dsD(v_i)|$ and $\hat{\beta} = |\cup_{i=1}^{\ell} \dsZ(v_i)|$. A problem instance $P(\calT, \alpha, \beta, L, N, K)$ is said to be optimal if all instances of the form $P'(\calT',\alpha, \beta, L', N, K)$ are such that $L' \leq L$.
\end{definition}
It is worth emphasizing that $\hat{\alpha} \leq \alpha$ and $\hat{\beta} \leq \beta$ as some cache and delivery phase signals may be repeated.

In the subsequent discussion, we focus on understanding the characteristics of optimal problem instances. Towards this end, we shall often start with a problem instance $P$ and modify it in appropriate ways to arrive at another instance $P'$. For ease of presentation, when needed we shall refer to quantities in instance $P$($P'$) by using the corresponding superscripts. For example for a node $u$ in $P$ ($P'$), we will denote the set of new files by $W_{new}^P(u)$ ($W_{new}^{P'}(u)$).

It is not too hard to see that it suffices to consider directed trees whose internal nodes have an in-degree at least two. In particular, if $u$ has in-degree equal to $1$, it is evident that $W_{new}(u) = \emptyset$ and thus, $|W_{new}(u)| = 0$. In addition, we claim that w.l.o.g. it suffices to consider trees where internal nodes have in-degree at most two.  Therefore, we will assume that all internal nodes have degree equal to two. More specifically, we can show the following property of problem instances (the proof appears in the Appendix).
\begin{claim}
\label{claim:incomingedgelimits}
Consider a problem instance $P(\calT, \alpha, \beta, L, N, K)$ such that there exists a node $u \in \calT$ with $|in(u)| \geq 3$. Then, there exists another instance $P'(\calT', \alpha, \beta, L', N, K)$ where $L' \geq L$ and $|in(u)| \leq 2$ for all nodes $u \in \calT'$.
\end{claim}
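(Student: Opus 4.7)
My proof plan is to reduce the in-degree at $u$ by one through a local binary-splitting transformation, and then iterate. Specifically, if $u$ has incoming neighbors $v_1, \ldots, v_k$ with $k \geq 3$, I will construct $\calT'$ by inserting a new internal node $u'$ between $\{v_1, v_2\}$ and $u$: set $in(u') = \{v_1, v_2\}$ and $out(v_1) = out(v_2) = u'$, let $out(u') = u$, and leave the rest of the tree untouched so that now $in(u) = \{u', v_3, \ldots, v_k\}$ in $\calT'$. Since only internal (unlabeled) nodes are added and no leaves are changed, the $\dsZ$ and $\dsD$ labels on the leaves are identical in $P$ and $P'$, so $\alpha$ and $\beta$, as well as $N$ and $K$, are preserved.

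The core of the proof is to show $L' \geq L$ by a careful local computation at $u$ together with an invariance argument downstream of $u$. By the recursion in Algorithm~\ref{Alg:Labeling}, the sets $\dsZ(u)$ and $\dsD(u)$ depend only on the union of the $\dsZ$ and $\dsD$ labels of the leaves upstream of $u$, and this set is identical in $\calT$ and $\calT'$. Therefore $\Delta^{P'}(u,u) = \Delta^{P}(u,u)$. Moreover, because $\dsZ(u') \subseteq \dsZ(u)$ and $\dsD(u') \subseteq \dsD(u)$, we have $W_{new}^{P'}(u') \subseteq \Delta(u,u)$. Computing the labels in $\calT'$, one obtains
\begin{align*}
\dsW^{P'}(u) &= \dsW^{P}(u) \cup W_{new}^{P'}(u'), \\
W_{new}^{P'}(u) &= \Delta(u,u) \setminus \dsW^{P'}(u) = W_{new}^{P}(u) \setminus W_{new}^{P'}(u'),
\end{align*}
so the combined contribution of $u'$ and $u$ to $L'$ is
\begin{equation*}
|W_{new}^{P'}(u')| + |W_{new}^{P'}(u)| = |W_{new}^{P'}(u') \cup W_{new}^{P}(u)| \geq |W_{new}^{P}(u)|,
\end{equation*}
which accounts for exactly the extra edge introduced by the transformation.

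It remains to verify that the labels on edges strictly downstream of $u$ are unchanged. This is the step I expect to require the most care, since $\dsW^{P'}(u)$ differs from $\dsW^{P}(u)$. The key observation is that what actually propagates to any descendant $w$ of $u$ in the recursion is $\dsW(u) \cup W_{new}(u)$, and by the identities above,
\begin{equation*}
\dsW^{P'}(u) \cup W_{new}^{P'}(u) \;=\; \dsW^{P}(u) \cup W_{new}^{P'}(u') \cup W_{new}^{P}(u) \;=\; \dsW^{P}(u) \cup \Delta(u,u) \;=\; \dsW^{P}(u) \cup W_{new}^{P}(u),
\end{equation*}
since $W_{new}^{P'}(u') \subseteq \Delta(u,u)$. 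Combined with the fact that $\dsZ$, $\dsD$ labels at all downstream nodes are also unchanged (they only depend on upstream leaves), an easy induction along the topological order shows that $W_{new}^{P'}(w) = W_{new}^{P}(w)$ for every $w \succ u$. Hence every edge outside the modified region contributes identically, and $L' - L = |W_{new}^{P'}(u') \cup W_{new}^{P}(u)| - |W_{new}^{P}(u)| \geq 0$.

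Finally, applying this transformation repeatedly, each application decreases $|in(u)|$ by exactly one while weakly increasing $L$ and leaving $\alpha,\beta,N,K$ fixed, and introduces only binary internal nodes. Iterating over all offending nodes (in any topological order, say from leaves to root) terminates in finitely many steps and produces a tree $\calT'$ in which every internal node has in-degree at most two, as claimed.
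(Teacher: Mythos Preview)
Your proof is correct and follows essentially the same approach as the paper. The only cosmetic difference is that the paper replaces a node $u$ of in-degree $\delta$ in one step by a ``caterpillar'' subtree with $\delta-2$ new internal binary nodes, whereas you peel off two in-neighbors at a time and iterate; in both cases the core argument is identical --- the local contribution at $u$ can only increase, and since $\dsW(u)\cup W_{new}(u)=\dsW(u)\cup\Delta(u,u)$ is preserved by the transformation, all downstream $W_{new}$ labels are unchanged. (One small notational slip: with the paper's convention $u\succ w$ means $u$ is upstream of $w$, so your ``for every $w\succ u$'' should read ``for every $w$ with $u\succ w$''.)
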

Henceforth, we assume that all internal nodes in the problem instances under consideration have in-degree equal to two. Claim \ref{claim:incomingedgelimits} can also be used to conclude that each leaf $v$ in an instance $P$ is such that either $|\dsZ(v)| = 1$ or $|\dsD(v)| = 1$ but not both. Indeed, if there exists a leaf $v$ that violates this condition, we can use the modification in the proof of Claim \ref{claim:incomingedgelimits} to replace $v$ by a directed in-tree so that the condition is satisfied.
If $|\dsZ(v)| = 1$, we call $v$ a cache node; if $|\dsD(v)| = 1$ we call it a delivery phase node. In the subsequent discussion we will assume that the delivery phase nodes are labeled in an arbitrary order $v_1, \dots, v_\alpha$ and the cache nodes from $v_{\alpha+1}, \dots, v_{\alpha+\beta}$, where we note that $\alpha+\beta = \ell$. Moreover, we let $\calD = \{v_1, \dots, v_\alpha\}$ and $\calC = \{v_{\alpha+1}, \dots, v_{\alpha+\beta}\}$.

\begin{figure}[t]
\centering
\includegraphics[scale=0.5]{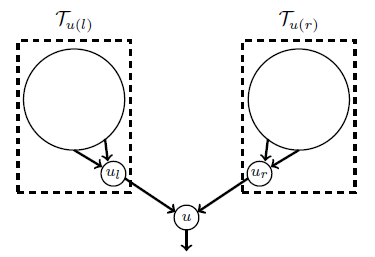}
\caption{For a given node $u \in \calT$, its in-neighbors are denoted $u_l$ and $u_r$. The corresponding subtrees are denoted $\calT_{u(l)}$ and $\calT_{u(r)}$ and are shown enclosed in the dotted boxes.}
\label{fig:subproblem_instances}
\end{figure}

In the tree $\calT$ corresponding to problem instance $P(\calT, \alpha, \beta, L, N , K)$, consider an internal node $u$ and the edge $e = (u,v)$. In the subsequent discussion, we shall use $\calT_u$ to refer to the subtree that has its last edge as $(u,out(u))$, i.e., the subtree that is rooted at $out(u)$. The incoming edges into $u$, denoted $(u_l, u)$ and $(u_r,u)$ are the last edges of the disjoint left and right subtrees denoted $\calT_{u(l)}$ and $\calT_{u(r)}$ respectively (see Fig. \ref{fig:subproblem_instances}).
Each of these subtrees defines a problem instance $P_l = P(\calT_{u(l)}, \alpha_l, \beta_l, L_l, N, K)$ and $P_r = P(\calT_{u(r)}, \alpha_r, \beta_r, L_r, N, K)$. We denote the set of delivery phase nodes and cache nodes in $\calT_{u(r)}$ by
\begin{align*}
\calD_{u(r)} &= \{v \in \calD: v \in \calT_{u(r)}\} \text{~and}\\
\calC_{u(r)} &= \{v \in \calC: v \in \calT_{u(r)}\},
\end{align*}
with similar definitions for $\calD_{u(l)}$ and $\calC_{u(l)}$. We also let
\begin{align*}
\calD_u &= \calD_{u(l)} \cup \calD_{u(r)}, \text{~and}\\
\calC_u &= \calC_{u(l)} \cup \calC_{u(r)}.
\end{align*}

Let $\Gamma_l = \cup_{v \in \calT_{u(l)}} W_{new}(v)$ and $\Gamma_r = \cup_{v \in \calT_{u(r)}} W_{new}(v)$, i.e., $\Gamma_l$ and $\Gamma_r$ are the subsets of $\{W_1, \dots, W_N\}$ that are used up in the problem instances $P_l$ and $P_r$ respectively. It can be observed that $\Gamma_l = \Delta(u_l, u_l)$ and $\Gamma_r = \Delta(u_r, u_r)$.

We shall often need to reason about the files recovered at the node $u$ from the different subtrees. For instance, the set of cache nodes in $\calT_{u(r)}$ and the delivery phase signals in $\calT_{u(l)}$ meet and recover a subset of the files at $u$. This set of files corresponds to those recovered from $\dsZ(u_r) \setminus \dsZ(u_l)$ and $\dsD(u_l)$, and can be informally thought of as the {\it files recovered when going from right to left}. Accordingly, we have the following definitions.
\begin{align*}
\Delta_{rl}(u) &= Rec(\dsZ(u_r) \setminus \dsZ(u_l),  \dsD(u_l)), \text{~and}\\
\Delta_{lr}(u) &= Rec(\dsZ(u_l) \setminus \dsZ(u_r),  \dsD(u_r)).
\end{align*}
Note that by definition, we have
\begin{align*}
\Delta(u,u) &= Rec(\dsZ(u),\dsD(u))\\
&=Rec(\dsZ(u_l) \cup \dsZ(u_r), \dsD(u_l) \cup \dsD(u_r))\\
&=Rec(\dsZ(u_l), \dsD(u_l)) \cup Rec(\dsZ(u_r), \dsD(u_r)) \cup Rec(\dsZ(u_l), \dsD(u_r)) \cup Rec(\dsZ(u_r), \dsD(u_l))\\
&\stackrel{(a)}{=} Rec(\dsZ(u_l), \dsD(u_l)) \cup Rec(\dsZ(u_r), \dsD(u_r)) \cup Rec(\dsZ(u_l) \setminus \dsZ(u_r), \dsD(u_r)) \cup Rec(\dsZ(u_r) \setminus \dsZ(u_l), \dsD(u_l))\\
&= \underbrace{\Delta(\dsZ(u_l),\dsD(u_l))}_{\text{from $\calT_{u(l)}$}} \cup \underbrace{\Delta(\dsZ(u_r),\dsD(u_r))}_{\text{from $\calT_{u(r)}$}} \cup \Delta_{lr}(u) \cup \Delta_{rl}(u), \text{~and}\\
\dsW(u) &= \Delta(\dsZ(u_l),\dsD(u_l)) \cup \Delta(\dsZ(u_r),\dsD(u_r)),
\end{align*}
where $(a)$ follows since the $Rec(\dsZ(u_l), \dsD(u_r))$ potentially contains some files that have already been recovered in $Rec(\dsZ(u_r), \dsD(u_r))$. The other equality holds because of similar reasoning.
Therefore, it follows that
\begin{align}
W_{new}(u) &= \Delta(u,u) \setminus \dsW(u) \nonumber\\
&= \Delta_{rl}(u) \cup \Delta_{lr}(u) \setminus \dsW(u). \label{eq:wnew_2}
\end{align}

Note that based on Algorithm \ref{Alg:Labeling}, we can conclude that
\begin{align}
\label{eq:W-lab-union}
\dsW(u) &= \cup_{v\in \{u_r,u_l \}} \dsW(v) \cup W_{new}(v) \nonumber\\
&= \cup_{v \succ u} W_{new}(v) \text{~(by arguing inductively)}.
\end{align}


\begin{algorithm}[!t]
\caption{Computing $\psi$}\label{Alg:AssignPsi}
\algrenewcommand\algorithmicrequire{\textbf{Input:}}
\algrenewcommand\algorithmicfunction{\textbf{Initialization}}
\begin{algorithmic}[1]
\Require $P(\calT,\alpha, \beta, L, N, K)$, Array $\Omega(u, \delta_u)$, where $u \in \calT$, $\delta_u \subseteq W_{new}(u), |\delta_u| = 1$.
\Function{}{}
\ForAll {$u \in \calT$, $\delta_u \subseteq W_{new}(u)$ where $|\delta_u| = 1$}
\State $\Omega(u, \delta_u) \gets 0$,
\EndFor
\EndFunction
\For{$i \leftarrow 1$ to $\alpha$}
\ForAll{$v' \in \calC$}
\State Let $u$ be the meeting point of $v_i$ and $v'$.
\State $\delta_u = \Delta(v', v_i)$.
\If{$\delta_u \in W_{new}(u)$ and $\Omega(u,\delta_u)==0$}
\State $\psi(v_i, v') \gets 1$, and $\Omega(u, \delta_u) \gets 1$.
\Else
\State $\psi(v_i, v') \gets 0$.
\EndIf
\EndFor
\EndFor
\end{algorithmic}
\end{algorithm}
For the subsequent discussion, it will be useful to express the value of the lower bound $L$ for an instance $P(\calT,\alpha, \beta, L, N, K)$ in a functional form. In particular, we define the function $\psi:\calD \times \calC \rightarrow \{0,1\}$ that allows us to express $L$ in another way. For nodes $v_i \in \calD, v' \in \calC$ we can define their meeting point $u \in \calT$. The function $\psi(v_i, v')$ is determined by means of Algorithm \ref{Alg:AssignPsi}, where the sequence in which we pick the nodes $v_1, \dots, v_\alpha$ is fixed. Each element of $W_{new}(u)$ can be recovered from multiple pairs of nodes that meet there. The array $\Omega(u, \delta_u)$ keeps track of the first time the file $\delta_u$ is encountered. The function $\psi(v_i, v')$ takes the value $1$ if the file $W^*$ recovered from the pair $(\dsZ(v'),\dsD(v_i))$ at $u$ belongs to $W_{new}(u)$ and has not been encountered before and 0 otherwise. A formal description is given in Algorithm \ref{Alg:AssignPsi}.
%
%
\begin{claim}
\label{clm:PsiFunction}
For an instance $P(\calT, \alpha, \beta, L, N , K)$ the following equality holds
\begin{align}
\label{eq:LrelatedtoPsifunction}
L = \sum_{i=1}^\alpha \sum_{v' \in \calC} \psi(v_i, v').
\end{align}
\end{claim}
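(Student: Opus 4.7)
The plan is to recast $L$ in two ways and show they agree. First, directly from Algorithm~\ref{Alg:Labeling} the output satisfies $L = \sum_{e \in A} y_e = \sum_{u : u \text{ has outgoing edge}} |W_{new}(u)|$. At a cache leaf $v$ we have $\dsD(v) = \emptyset$, so $\Delta(v,v) = Rec(\dsZ(v), \emptyset) = \emptyset$, and at a delivery leaf similarly $\Delta(v,v) = \emptyset$; hence leaves contribute nothing and $L = \sum_{u \in V_{\mathrm{int}}} |W_{new}(u)|$, where $V_{\mathrm{int}}$ is the set of internal nodes (all of in-degree exactly two by Claim~\ref{claim:incomingedgelimits} and the ensuing assumption). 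Since the target form of $L$ is a double sum over delivery-cache leaf pairs, the natural approach is to redistribute the mass $|W_{new}(u)|$ at each internal node $u$ across the pairs $(v_i, v') \in \calD \times \calC$ that meet at $u$.

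Next I would set up the pair-to-node correspondence. By Definition~\ref{defn:meeting_point}, each pair $(v_i, v') \in \calD \times \calC$ meets at a unique node $u \in V_{\mathrm{int}}$; because every internal node has in-degree two, $(v_i, v')$ meets at $u$ if and only if one of $v_i, v'$ lies in $\calT_{u(l)}$ and the other in $\calT_{u(r)}$. In particular, the pairs meeting at $u$ split into the two ``cross'' collections $\calD_{u(l)} \times \calC_{u(r)}$ and $\calD_{u(r)} \times \calC_{u(l)}$. For any such pair, $\delta_u = \Delta(v', v_i) = Rec(\dsZ(v'), \dsD(v_i))$ is either empty or a singleton, since $|\dsZ(v')| = |\dsD(v_i)| = 1$.

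The main step is the per-node identity
\begin{equation*}
|W_{new}(u)| \;=\; \#\bigl\{(v_i, v') \text{ meeting at } u : \psi(v_i, v') = 1\bigr\}.
\end{equation*}
For the direction ``$\leq$'', I would invoke the decomposition $W_{new}(u) = (\Delta_{lr}(u) \cup \Delta_{rl}(u)) \setminus \dsW(u)$ from~(\ref{eq:wnew_2}): every $W \in W_{new}(u)$ arises from at least one cross pair, obtained by tracing $W$ back to a singleton cache label in one subtree and a singleton delivery label in the other. When Algorithm~\ref{Alg:AssignPsi} encounters the first such pair in its lexicographic sweep over $(i, v')$, the flag $\Omega(u, \{W\})$ is still $0$, so $\psi$ is set to $1$ exactly once per $W$. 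The direction ``$\geq$'' is immediate from the algorithm: the assignment $\psi(v_i, v') \gets 1$ forces $\delta_u \in W_{new}(u)$ together with $\Omega(u, \delta_u) = 0$ at the time of assignment, so the map from pairs with $\psi = 1$ meeting at $u$ into $W_{new}(u)$ is injective.

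Summing over $u \in V_{\mathrm{int}}$ and using that every $(v_i, v') \in \calD \times \calC$ has a unique meeting point, we obtain
\begin{equation*}
L = \sum_{u \in V_{\mathrm{int}}} |W_{new}(u)| = \sum_{u \in V_{\mathrm{int}}} \sum_{\substack{(v_i, v') \text{ meeting} \\ \text{at } u}} \psi(v_i, v') = \sum_{i=1}^{\alpha} \sum_{v' \in \calC} \psi(v_i, v').
\end{equation*}
The main obstacle I expect is the ``$\leq$'' direction: showing that every $W \in W_{new}(u)$ admits a recovering cross pair at the \emph{leaf} level rather than merely being recoverable from the aggregated labels $\dsZ(u) \times \dsD(u)$. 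This hinges on the fact that $\dsZ(u_r)$ is the union $\cup_{v' \in \calC_{u(r)}} \dsZ(v')$ of singleton cache labels (and similarly $\dsD(u_l) = \cup_{v_i \in \calD_{u(l)}} \dsD(v_i)$), so any recovery using an element of $\dsZ(u_r) \setminus \dsZ(u_l)$ paired with an element of $\dsD(u_l)$ can indeed be realized by a concrete leaf pair straddling $u$.
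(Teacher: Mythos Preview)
Your proposal is correct and follows essentially the same approach as the paper: both arguments establish a bijection between the pairs $(v_i,v')$ with $\psi(v_i,v')=1$ and the pairs $(u,W)$ with $W\in W_{new}(u)$, using the $\Omega$ flag to guarantee injectivity and the decomposition in~(\ref{eq:wnew_2}) (together with the fact that $\dsZ(u),\dsD(u)$ are unions of singleton leaf labels) to guarantee surjectivity. The only difference is organizational---you localize the counting to a per-node identity $|W_{new}(u)|=\#\{(v_i,v')\text{ meeting at }u:\psi(v_i,v')=1\}$, whereas the paper argues the two inclusions globally via the terminal state of the $\Omega$ array.
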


\begin{proof}
We first note that at the end of Algorithm \ref{Alg:AssignPsi}, we have $\Omega(u, \delta_u) = 1$ for all $u \in \calT$ and all $\delta_u \subseteq W_{new}(u)$ such that  $|\delta_u| = 1$. To see this suppose that there is a $u_1 \in \calT$ and a singleton subset $\delta_{u_1}$ of $W_{new}(u_1)$ such that $\Omega(u_1, \delta_{u_1}) = 0$. Now $\delta_{u_1}$ is recovered from some delivery phase node and cache node, otherwise it would not be a subset of $W_{new}(u_1)$. As our algorithm considers all pairs of delivery phase nodes and cache nodes, at the end of the algorithm it has to be the case that $\Omega(u_1, \delta_{u_1}) = 1$.

Next, we note that for each pair $(u_1, \delta_{u_1})$ where $u_1 \in \calT$ and $\delta_{u_1}$ is singleton subset of $W_{new}(u_1)$, we can identify a unique pair of nodes $(v_i, v')$ where $v_i \in \calD$ and $v' \in \calC$ such that $\psi(v_i, v')$ and $\Omega(u_1, \delta_{u_1})$ are set to 1 at the same step of the algorithm. The remaining pairs $(v_i, v')$ that cannot be put in one to one correspondence with a pair $(u_1, \delta_{u_1})$ are such that $\psi(v_i, v')$ are set to 0.  Moreover as $\sum_{u \in \calT} \sum_{\delta_u \subseteq W_{new}(u), |\delta_u| =1} \Omega(u, \delta_u) = \sum_{u \in \calT} |W_{new}(u)| = L$, it follows that $L = \sum_{i=1}^\alpha \sum_{v' \in \calC} \psi(v_i, v')$.
\end{proof}

\begin{figure}[!t]
\centering
\includegraphics[scale=0.5]{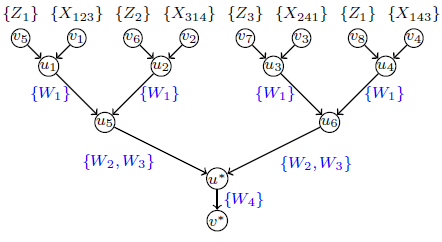}
\caption{{\small Problem instance of Example \ref{exm:PsiFunctionforNonSaturatd}. There are three users and the server contains four files.}}
\label{Fig:TreeInstanceExmpN4K3}
\end{figure}
We now illustrate the definitions introduced above by means of the following example.
\begin{example}
\label{exm:PsiFunctionforNonSaturatd}
The problem instance in Fig. \ref{Fig:TreeInstanceExmpN4K3} has seven internal nodes, $\{u_1, \ldots,u_6,u^*\}$. In the initialization step, Algorithm \ref{Alg:AssignPsi} sets $\Omega(u_i,\{W_1\}) = 0$ for $1\leq i \leq 4$, $\Omega(u_i,\{W_2\})= \Omega(u_i,\{W_3\})=0$ for $i=5,6$ and $\Omega(u^*,\{W_4\})=0$. In the next step, for node $v_1$ it sets $\psi(v_1,v_5) = 1$, $\Omega(u_1,\{W_1\})=1$ (for $v_5 \in \calC$) and $\psi(v_1,v_6) = 1$, $\Omega(u_5,\{W_2\})=1$ (for $v_6\in \calC$). For $v_7 \in \calC$ we have $\delta_{u^*} = \Delta(v_7,v_1) = \{W_3\}$ and since $W_3 \notin W_{new}(u^*)=\{W_4\}$ therefore $\psi(v_1,v_7)=0$. By the same argument we have $\psi(v_1,v_8)=0$. Thus, the contribution of $v_1$ to the lower bound, namely $\sum_{v' \in \calC} \psi(v_1, v') = 2$. The complete description of the steps after the initialization, is shown in Table \ref{tabl:ProcedureofPsiFunc}. The table should be read in column order from left to right. Within a column, the order of the operations is from top to bottom.
Note that there are two cases, $v_3 \in \calD,v_6 \in \calC$ and $v_4 \in \calD,v_6 \in \calC$ where $\psi(\cdot, \cdot)$ value is set to $0$  (since the corresponding $\Omega(\cdot, \cdot)$ values are already $1$). In both cases $\delta_{u^*}=\{W_4\}$ and since $W_4$ is recovered already, $\Omega(u^*,\{W_4\})$ has already been set to $1$ when considering $v_2 \in \calD,v_7 \in \calC$. Therefore $\psi(v_4,v_6)=\psi(v_3,v_6) = 0$. Another point to be noted is that delivery phase node $v_2$ contributes three files towards $L$ while the other delivery nodes contribute only two files each.
\end{example}

\begin{table}[h]
\centering
\begin{tabular}{|c|c|c|c|c|}
\hline
setting & $v_1$  & $v_2$  & $v_3$  & $v_4$ \\ \hline
 \multirow{3}{*}{$v_5$} & $\delta_{u_1} = W_1$ & $\delta_{u_5} = W_3$ & $\delta_{u^*} = W_2$ & $\delta_{u^*} = W_1$ \\
 & $\psi(v_1,v_5)=1$ & $\psi(v_2,v_5)=1$ & $\psi(v_3,v_5)=0$ & $\psi(v_4,v_5)=0$ \\
 & $\Omega(u_1,W_1)=1$ & $\Omega(u_5,W_3)=1$ &  &  \\ \hline
 \multirow{3}{*}{$v_6$} & $\delta_{u_5} = W_2$ & $\delta_{u_2} = W_1$ & $\delta_{u^*} = W_4$ & $\delta_{u^*} = W_4$ \\
 & $\psi(v_1,v_6)=1$ & $\psi(v_2,v_6)=1$ & $\psi(v_3,v_6)=0$ & $\psi(v_4,v_6)=0$ \\
 & $\Omega(u_5,W_2)=1$ & $\Omega(u_2,W_1)=1$ & $\Omega(u^*,W_4)=1$ & $\Omega(u^*,W_4)=1$ \\ \hline
 \multirow{3}{*}{$v_7$} & $\delta_{u^*} = W_3$ & $\delta_{u^*} = W_4$ & $\delta_{u_3} = W_1$ & $\delta_{u_6} = W_2$ \\
 & $\psi(v_1,v_7)=0$ & $\psi(v_2,v_7)=1$ & $\psi(v_3,v_7)=1$ & $\psi(v_4,v_7)=1$ \\
 &  & $\Omega(u^*,W_4)=1$ & $\Omega(u_3,W_1)=1$ & $\Omega(u_6,W_2)=1$ \\ \hline
 \multirow{3}{*}{$v_8$} & $\delta_{u^*} = W_1$ & $\delta_{u^*} = W_3$ & $\delta_{u_6} = W_2$ & $\delta_{u_4} = W_3$ \\
 & $\psi(v_1,v_8)=0$ & $\psi(v_2,v_8)=0$ & $\psi(v_3,v_8)=1$ & $\psi(v_4,v_8)=1$ \\
 &  &  & $\Omega(u_6,W_2)=1$ & $\Omega(u_4,W_3)=1$ \\ \hline
\end{tabular}
\caption{\label{tabl:ProcedureofPsiFunc}{\small The steps in Algorithm \ref{Alg:AssignPsi} after initialization when applied to Example \ref{exm:PsiFunctionforNonSaturatd}. The steps flow from the leftmost to the rightmost column, and in each column from the top to the bottom row.}}
\end{table}
\begin{corollary}
\label{corollay:sat_rem}
For an instance $P(\calT,\alpha, \beta, L, N, K)$, we have $L \leq \alpha \min (\beta, K)$. Moreover, if
$N \geq \alpha \min(\beta,K)$, then there exists an instance such that $L = \alpha \min(\beta, K)$.

\end{corollary}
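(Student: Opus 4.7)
The plan is to derive both parts directly from the functional representation of $L$ given in Claim \ref{clm:PsiFunction}, namely
\[
L \;=\; \sum_{i=1}^{\alpha} \sum_{v' \in \calC} \psi(v_i, v'),
\]
together with a simple counting argument on the files that can possibly be recovered through any single delivery phase node.

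For the upper bound, I would first observe that $\psi(v_i, v') \in \{0,1\}$ and $|\calC| = \beta$, so trivially $\sum_{v' \in \calC} \psi(v_i,v') \le \beta$ for each $v_i \in \calD$. Next I would argue that this sum is also at most $K$: the leaf $v_i$ carries a single delivery phase signal $X_{d_1,\dots,d_K}$, and by definition any file recovered from a pair of the form $(\dsZ(v'), \dsD(v_i))$ must lie in $\{W_{d_1},\dots,W_{d_K}\}$, a set of size at most $K$. Because $\psi$ is assigned the value $1$ at most once per distinct file $\delta_u \subseteq W_{new}(u)$ (Algorithm \ref{Alg:AssignPsi} sets $\Omega(u,\delta_u) \gets 1$ the first time), distinct pairs $(v_i, v')$ with $\psi(v_i,v')=1$ must correspond to distinct files recovered through $v_i$, of which there are at most $K$. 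Combining, $\sum_{v' \in \calC} \psi(v_i, v') \le \min(\beta, K)$, and summing over $i \in [\alpha]$ gives $L \le \alpha \min(\beta, K)$.

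For the achievability part, assuming $N \geq \alpha \min(\beta, K)$, I would construct an explicit instance. Take any full binary in-tree with $\alpha$ delivery phase leaves $v_1,\dots,v_\alpha$ and $\beta$ cache leaves $v_{\alpha+1},\dots,v_{\alpha+\beta}$; the exact shape is irrelevant. Set $t = \min(\beta,K)$. Choose $t$ distinct cache indices $j_1,\dots,j_t \in [K]$, and label the cache leaves using only these indices (when $\beta \le K$ each cache appears once; when $\beta > K$ some caches repeat, which is permitted). For each delivery leaf $v_i$, label it with $X_{d^i_1,\dots,d^i_K}$, choosing the entries $d^i_{j_1},\dots,d^i_{j_t}$ to be $t$ fresh, pairwise distinct file indices drawn from a pool of $\alpha t$ globally distinct indices (the remaining entries can be chosen arbitrarily). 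The existence of such a pool uses the hypothesis $N \ge \alpha t = \alpha \min(\beta,K)$.

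It remains to verify $L = \alpha \min(\beta,K)$ for this instance. For each $v_i$ and each cache leaf $v' \in \calC$, the pair recovers exactly one file from the chosen pool, and by the global distinctness of our file choices, no two distinct meeting points ever recover the same file. Thus at every internal node $u$, each file recovered there belongs to $W_{new}(u)$ (nothing was recovered upstream), and Algorithm \ref{Alg:AssignPsi} sets $\psi(v_i,v')=1$ exactly once per distinct file produced at the meeting point of $v_i$ and $v'$. Per $v_i$ this gives $t$ ones (one per distinct cache index among $j_1,\dots,j_t$), so $L = \alpha t = \alpha \min(\beta, K)$, matching the upper bound. The main obstacle is the bookkeeping in the case $\beta > K$, where caches necessarily repeat and one must ensure that the repeats do not cause $\Omega(u,\delta_u)$ to be set before the intended meeting point; the distinctness of the $\alpha t$ chosen files across all $(v_i, v')$ pairs is precisely what rules this out.
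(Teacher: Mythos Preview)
Your proposal is correct and follows essentially the same approach as the paper. Both the upper bound and the construction match the paper's proof: you bound $\sum_{v'\in\calC}\psi(v_i,v')$ per delivery leaf and then sum, and you build an arbitrary tree with $\min(\beta,K)$ distinct cache indices and $\alpha\min(\beta,K)$ globally distinct file indices spread across the delivery leaves. The only stylistic difference is that you cap the per-$v_i$ sum by $K$ via the number of distinct files a single delivery signal can expose, whereas the paper caps it by $\hat{\beta}=|\cup_{v'\in\calC}\dsZ(v')|$; both routes yield $\min(\beta,K)$. One small point worth tightening in your write-up: the $\Omega$ array is indexed by pairs $(u,\delta_u)$, not by files alone, so the ``at most once per distinct file'' claim also uses that for fixed $v_i$ all meeting points lie on the $v_i$--root path and a file can belong to $W_{new}$ of at most one node along any such path.
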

\begin{proof} For a node $v_i$, where $1 \leq i \leq \alpha$, we have
\begin{align}
\label{eq:BoundOnPsi}
\sum_{v' \in \calC} \psi(v_i, v') &\leq |\cup_{v'\in \calC} \dsZ(v')| \nonumber \\
&= \hat{\beta}, \nonumber\\
& \leq \min (\beta, K).
\end{align}
Let $u$ denote the meeting point of $v'$ and $v_i$. The first inequality above holds since $\psi(v_i,v')=1$ implies that $\delta_u = \Delta(v',v_i) \subseteq W_{new}(u)$ and
$$\sum_{v' \in \calC} \psi(v_i,v') \leq |\cup_{v' \in \calC} Rec(\dsD(v_i),\dsZ(v'))| =  | Rec(\dsD(v_i),\cup_{v' \in \calC} \dsZ(v'))| \leq |\cup_{v' \in \calC} \dsZ(v')|.$$
From eq. (\ref{eq:BoundOnPsi}) we can conclude that $L = \sum_{i=1}^\alpha \sum_{v' \in \calC} \psi(v_i, v') \leq \alpha \min(\beta, K)$.
If $N \geq \alpha \min(\beta,K)$, it is easy to construct an instance with $L = \alpha \min(\beta, K)$.
We simply pick any directed tree on $\alpha + \beta$ leaves. Let the cache node indices be $Z_1$ repeated $\beta - \min(\beta, K) + 1$ times and  $Z_2, Z_3, \dots, Z_{\min(\beta, K) - 1}, Z_{\min(\beta, K)}$.
Suppose that node $v \in \calD, v' \in \calC'$ meet at node $u$. We label the delivery phase leaves such that $|\cup_{(v,v') \in \calD \times \calC'} \Delta(v',v)| = \alpha \min(\beta,K)$.
This can be done since $N$ is large enough so that we can choose the labels such that $Rec(\dsZ(v_1'), \dsD(v_1)) \cap Rec(\dsZ(v_2'), \dsD(v_2)) = \emptyset$ for $v_1',v_2' \in \calC'$ and $v_1, v_2 \in \calD$. For instance, initialize $\dsD(v)=X_{1,1,\ldots,1}$ for all $v \in \calD$ and then set $\dsD(v_i)=X_{d_1,\ldots,d_K}$, $d_j = (i-1)\alpha+j$ for $j=1,\ldots,\min(\beta,K)$, and $i =1,\ldots,\alpha$.   
\end{proof}
We illustrate the construction outlined above by means of the following example.
\begin{example}
\label{exm:trivial_saturated_instance}
Let $\alpha=\beta=2$, $K=2$, and $N=4$. We arbitrary pick a directed tree with $v_1,v_2$ as delivery nodes and $v_3,v_4$ as cache nodes. We label $\dsZ(v_3)=Z_1$ and $\dsZ(v_3)=Z_2$, and delivery nodes as $\dsD(v_1)=X_{1,2}$ and $\dsD(v_2)=X_{3,4}$. Such a problem instance is illustrated in Fig. \ref{Fig:SaturatedInstanceK2a2b2} $(a)$. As we will see later, this instance is not efficient in reusing files.
\end{example}

At this point we have established that for a given problem instance $P(\calT, \alpha,\beta,L,N,K)$, we can always generate an inequality of the form $\alpha R^{\star} + \beta M \geq L$. It is natural to therefore consider the {\it optimal} problem instances that maximize the lower bound for a given value of $\alpha,\beta,N$ and $K$.
\begin{definition}
For given $\alpha,\beta,N$ and $K$, we say that a problem instance $P(\calT^*,\alpha,\beta, L^*, N, K)$ is optimal if all problem instances $P'(\calT,\alpha,\beta, L, N, K))$ are such that $L^* \geq L$.
\end{definition}

Recall that $\hat{\beta} = |\cup_{i=1}^{\ell} \dsZ(v_i)|$.
For a problem instance $P(\calT, \alpha, \beta, L, N, K)$, it may be possible that $\hat{\beta} < \min(\beta, K)$. However, given such an instance, we can convert it into another instance where $\hat{\beta} = \min(\beta, K)$ without reducing the value of $L$. In fact the following stronger statement holds (see Appendix \ref{appendix:proofBetahat_Equal_Beta} for a proof).
\begin{claim}
\label{clm:Betahat_Equal_Beta}
For a problem instance $P(\calT, \alpha, \beta, L, N, K)$ suppose that there exists an internal node $u^*$ with associated problem instance $P^* = P(\calT_{u^*}, \alpha^*, \beta^*, L^*, N^*, K)$ such that the following condition holds. $$\hat{\beta}^* < \min(\beta^*,K).$$  Then, there exists another problem instance $P'(\calT', \alpha, \beta, L', N, K)$ where $L' \geq L$ such that the above condition does not hold.
\end{claim}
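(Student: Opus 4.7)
The plan is a local modification argument applied iteratively. Since $\hat{\beta}^* < \beta^*$, some cache index $j$ is repeated at two distinct cache leaves $\bar{v}_1, \bar{v}_2 \in \calT_{u^*}$, while $\hat{\beta}^* < K$ guarantees an index $k$ with $Z_k$ absent from every cache leaf of $\calT_{u^*}$. I would construct $P'$ from $P$ by (i) changing $\dsZ(\bar{v}_1)$ from $\{Z_j\}$ to $\{Z_k\}$, and (ii) for every delivery leaf $v_i \in \calT_{u^*}$ with $\dsD(v_i) = \{X_{d_1,\dots,d_K}\}$, overwriting coordinate $k$ with $d_j$. This leaves $\calT,\alpha,\beta,N,K$ unchanged and increases $\hat{\beta}^*$ at the subtree rooted at $u^*$ by exactly one.

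To show $L' \geq L$, I would invoke Claim \ref{clm:PsiFunction} to express $L = \sum_{v_i, v'} \psi(v_i, v')$ and classify each pair by whether its endpoints lie in $\calT_{u^*}$. For pairs with both endpoints inside $\calT_{u^*}$, the recovered file at their meeting point is unchanged: if $v' \neq \bar{v}_1$ then the cache label and the relevant delivery coordinate are both untouched; if $v' = \bar{v}_1$ then $(Z_k, X_J^{\text{new}})$ recovers $W_{d_k^{\text{new}}} = W_{d_j^{\text{old}}}$, matching the previous recovery via $(Z_j, X_J^{\text{old}})$. For pairs with both endpoints outside $\calT_{u^*}$ nothing changes. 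Thus the contribution of same-side pairs to $L'$ equals that in $L$, and the analysis reduces to mixed pairs, whose meeting points lie on the directed path from $out(u^*)$ to the root.

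The principal obstacle is this mixed-pair analysis, especially when $Z_k$ also occurs at an external cache leaf. In that case two simultaneous effects arise: meetings of $\bar{v}_1$ with external delivery leaves recover $W_{d_k}$ in place of $W_{d_j}$ at the corresponding coordinates, while meetings of external $Z_k$ cache leaves with internal delivery leaves recover $W_{d_j}$ in place of $W_{d_k}$. I would complete the argument by constructing, at each such mixed-meeting point, a bijection between the pairs contributing $\psi = 1$ under $P$ and those contributing $\psi = 1$ under $P'$, exploiting the fact that $\bar{v}_2$ still carries $Z_j$ inside $\calT_{u^*}$, so any $Z_j$-type recovery previously obtained via $\bar{v}_1$ remains available through $\bar{v}_2$. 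Once $L' \geq L$ is established, iterating the construction at most $\min(\beta^*, K) - \hat{\beta}^*$ times yields an instance in which the stated condition no longer holds at $u^*$.
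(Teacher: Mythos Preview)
Your modification has a genuine gap: overwriting coordinate $k$ of each internal delivery label with $d_j$ destroys the information that was at coordinate $k$, and the bijection you sketch cannot in general repair this. Take $K\geq 3$, $N\geq 2$, and the tree in which $u^*$ has as left child the cache leaf $\bar v_2$ (label $Z_1$) and as right child a node $u_r$ whose two children are the cache leaf $\bar v_1$ (label $Z_1$) and the delivery leaf $v_1$ (label $X_{1,2,\dots}$); finally let $u^*$ be the left child of a node $u_1$ whose right child is an external cache leaf labeled $Z_2$. Here $\hat\beta^*=1<2=\min(\beta^*,K)$, $j=1$, $k=2$. In $P$ one has $W_{new}(u_r)=\{W_1\}$, $W_{new}(u^*)=\emptyset$, $W_{new}(u_1)=\{W_2\}$, so $L=2$. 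Under your construction $\bar v_1\to Z_2$ and $v_1\to X_{1,1,\dots}$, giving $W_{new}^{P'}(u_r)=\{W_1\}$, $W_{new}^{P'}(u^*)=\emptyset$, and $W_{new}^{P'}(u_1)=\emptyset$, hence $L'=1<L$. Your proposed compensation through ``$\bar v_1$ meeting external delivery leaves'' is vacuous here because there are none; more generally, the losses from external $Z_k$ caches meeting internal delivery leaves and the gains from $\bar v_1$ meeting external delivery leaves occur at different coordinates of different signals and need not match.

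The paper repairs exactly this point by \emph{swapping} coordinates $j$ and $k$ rather than overwriting, and by applying the swap only to delivery leaves in the single subtree $\calT_{u^*(r)}$ containing the relabeled cache leaf (having first chosen $u^*$ topologically highest so that $\calT_{u^*(r)}$ contains no other $Z_j$ leaf). The swap is the key: once $\bar v_1$ carries $Z_k$, every node $u$ on the path from $u^*$ to the root has $\{Z_j,Z_k\}\subseteq\dsZ^{P'}(u)$, and since transposing positions $j,k$ in a delivery label leaves $Rec(\{Z_j,Z_k\},\cdot)$ invariant, one obtains $\Delta^{P'}(u,u)\supseteq\Delta^{P}(u,u)$ for all such $u$. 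In the example above the swap gives $v_1\to X_{2,1,\dots}$, so $\bar v_2$ now recovers $W_2$ at $u^*$ and $L'=2=L$. Your overall plan (relabel one duplicated cache leaf, adjust delivery signals, iterate) is the right one, but the adjustment must be a transposition, not an overwrite.
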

The next claim formalizes the intuitive fact that permuting the cache nodes and the delivery phase signals by the same permutation does not change the $\dsW$ labels and the lower bound of the instance.
\begin{claim}
\label{clm:bijective_mapping}
Let $P(\calT,\alpha,\beta,L,N,K)$ to be a problem instance and let $\pi:[K]\longrightarrow [K]$ to be a bijective mapping with inverse $\sigma$. Assume that the problem instance $P'(\calT',\alpha,\beta,L',N,K)$ is obtained from $P$ under the following changes for all $v \in \calD$ and $v' \in \calC$,
\begin{itemize}
\item assume $\dsZ^{P}(v)=Z_i$, then set $\dsZ^{P'}(v)=Z_{\pi(i)}$,
\item assume $\dsD^{P}(v)=X_{d_1,\ldots,d_K}$, then set $\dsD^{P'}(v)=X_{d_{\sigma(1)},\ldots,d_{\sigma(K)}}$,
\end{itemize}
then $W_{new}^{P'}(u)=W_{new}^{P}(u)$, $\dsW^{P'}(u)=\dsW^{P}(u)$ for $u\in\calT$, and $L'=L$.
\end{claim}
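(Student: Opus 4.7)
The plan is to prove the claim by induction in the topological order on $\calT$ from the leaves toward the root, relying on a single key invariance property of the $Rec$ operator. The key identity to establish first is: for any $Z^{*}\subseteq\{Z_1,\ldots,Z_K\}$ and $D^{*}\subseteq D$, if one forms the image sets $Z^{*\prime}=\{Z_{\pi(i)}:Z_i\in Z^{*}\}$ and $D^{*\prime}=\{X_{d_{\sigma(1)},\ldots,d_{\sigma(K)}}:X_{d_1,\ldots,d_K}\in D^{*}\}$, then $Rec(Z^{*\prime},D^{*\prime})=Rec(Z^{*},D^{*})$. This is immediate from the definition of $Rec$: the singleton $\{W_i\}$ is recovered from a pair $(Z_j,X_{d_1,\ldots,d_K})$ iff $d_j=i$, and the corresponding permuted pair $(Z_{\pi(j)},X_{d_{\sigma(1)},\ldots,d_{\sigma(K)}})$ recovers $\{W_i\}$ iff its $\pi(j)$-th coordinate, which is $d_{\sigma(\pi(j))}=d_j$, equals $i$. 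Since $\sigma\circ\pi$ is the identity, the two conditions coincide pair by pair, and the equality then extends to arbitrary unions.

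Next, I would carry out the induction establishing jointly, for every $u\in\calT$: (i) $\dsZ^{P'}(u)=\{Z_{\pi(i)}:Z_i\in\dsZ^{P}(u)\}$; (ii) $\dsD^{P'}(u)=\{X_{d_{\sigma(1)},\ldots,d_{\sigma(K)}}:X_{d_1,\ldots,d_K}\in\dsD^{P}(u)\}$; and (iii) $\dsW^{P'}(u)=\dsW^{P}(u)$ together with $W_{new}^{P'}(u)=W_{new}^{P}(u)$. The base case at a leaf $v$ is immediate: (i) and (ii) hold by the very construction of $P'$, while $\dsW(v)=\emptyset$ and $W_{new}(v)=Rec(\dsZ(v),\dsD(v))=\emptyset$ in both instances because exactly one of $\dsZ(v),\dsD(v)$ is empty at a leaf. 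For the inductive step at an internal node $u$ with in-neighbors $u_l,u_r$, Algorithm \ref{Alg:Labeling} sets $\dsZ(u)=\dsZ(u_l)\cup\dsZ(u_r)$ and $\dsD(u)=\dsD(u_l)\cup\dsD(u_r)$, so (i) and (ii) at $u$ are inherited directly from the inductive hypothesis. Applying the key identity of the first paragraph then yields $Rec(\dsZ^{P'}(u),\dsD^{P'}(u))=Rec(\dsZ^{P}(u),\dsD^{P}(u))$, i.e.\ $\Delta^{P'}(u,u)=\Delta^{P}(u,u)$. Combined with $\dsW^{P'}(u)=\dsW^{P'}(u_l)\cup\dsW^{P'}(u_r)\cup W_{new}^{P'}(u_l)\cup W_{new}^{P'}(u_r)=\dsW^{P}(u)$ (again by the hypothesis), the defining relation $W_{new}(u)=\Delta(u,u)\setminus\dsW(u)$ from eq.~(\ref{eq:wnew_def}) gives (iii).

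Finally, the equality $L'=L$ is immediate from the output line of Algorithm \ref{Alg:Labeling}: $L=\sum_{e\in A}y_e=\sum_{u}|W_{new}(u)|$ depends only on the cardinalities of the $W_{new}$ labels, which match across the two instances by (iii). There is no real obstacle beyond careful bookkeeping; the only subtle point is tracking the two distinct permutation actions ($\pi$ applied to cache subscripts versus coordinate permutation by $\sigma$ applied to delivery vectors) and observing that they cancel in precisely the correct slot of the recovery test $d_j=i$. Once this cancellation is isolated into the key identity of the first paragraph, the rest of the induction is mechanical.
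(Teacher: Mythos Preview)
Your proposal is correct and follows essentially the same approach as the paper: both isolate the key cancellation $d_{\sigma(\pi(j))}=d_j$ to conclude that $Rec$ is invariant under the simultaneous relabeling, then propagate this to $\Delta(u,u)$, $\dsW(u)$, and $W_{new}(u)$ throughout the tree. Your write-up is more explicit about the induction over the topological order, whereas the paper compresses this step by invoking $\dsW(u)=\Delta(u_l,u_l)\cup\Delta(u_r,u_r)$ directly, but the underlying argument is the same.
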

\begin{proof}
We note that
$$Rec(Z_i,X_{d_1,\ldots,d_K})=W_{d_i} = W_{d_{\sigma(\pi(i))}} = Rec(Z_{\pi(i)},X_{d_{\sigma(1)},\ldots,d_{\sigma(K)}})$$
for $i=1,\ldots,K$. Therefore, for any $v \in \calD$ and $v' \in \calC$, we have $\Delta^{P'}(v',v) =\Delta^{P}(v',v)$ and more generally $\Delta^{P'}(u,u) = \Delta^{P}(u,u)$. From this and that $\dsW^P(u)=\Delta^{P}(u_l,u_l) \cup \Delta^{P}(u_r,u_r)$, we have $\dsW^P(u)=\dsW^{P'}(u)$ for any $u \in \calT$. Using eq. (\ref{eq:wnew_def}), we have $W^{P'}_{new}(u)=W^{P}_{new}(u)$ for all $u \in \calT'$. Consequently, it follows that $L'=L$.
\end{proof}

Henceforth, we will assume w.l.o.g. that $\hat{\beta} = \min(\beta, K)$ and that Claim \ref{clm:Betahat_Equal_Beta} holds.
Our next lemma shows a structural property of problem instances. Namely for an instance where $L < \alpha \min(\beta,K)$, increasing the number of files allows us to increase the value of $L$. This lemma is a key ingredient in our proof of the main theorem (the proof appears in the Appendix).
\begin{lemma}
\label{lemma:Increase_N_Increase_L}
Let $P=P(\calT,\alpha,\beta,L,K,N)$ be an instance where $L < \alpha \min(\beta,K)$.
Then, we can construct a new instance $P'=P(\calT',\alpha,\beta,L',K,N+1)$, where $L'=L+1$.
\end{lemma}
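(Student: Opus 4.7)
The plan is to construct $P'$ from $P$ by replacing a single coordinate of one delivery phase label with the new file $W_{N+1}$, after identifying where the current labeling is ``slack''.

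First, by Claim \ref{clm:PsiFunction}, $L=\sum_{i=1}^{\alpha} A_i$ where $A_i:=\sum_{v'\in\calC}\psi(v_i,v')$, and the argument in Corollary \ref{corollay:sat_rem} gives $A_i\le\hat\beta:=\min(\beta,K)$. The hypothesis $L<\alpha\hat\beta$ yields some $v_{i_0}$ with $A_{i_0}<\hat\beta$. Moreover, for any fixed label $Z_j$, the cache leaves $v'$ with $\dsZ(v')=Z_j$ jointly contribute at most one to $A_{i_0}$: all their meeting points with $v_{i_0}$ lie on the single path from $v_{i_0}$ to the root and are therefore linearly ordered, so once $W_{d_j}$ appears in $W_{new}$ at the topmost such meeting point it is blocked from every lower one via $\dsW$. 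Hence there exists $Z_{j^*}\in\bigcup_{v'\in\calC}\dsZ(v')$ that is \emph{unproductive for $v_{i_0}$}: $\psi(v_{i_0},v')=0$ for every $v'$ with $\dsZ(v')=Z_{j^*}$.

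Next, I set $\calT'=\calT$ and retain every leaf label except $\dsD^{P'}(v_{i_0})=X_{d_1,\ldots,d_{j^*-1},\,N+1,\,d_{j^*+1},\ldots,d_K}$, where $\dsD^{P}(v_{i_0})=X_{d_1,\ldots,d_K}$. Parameters $\alpha,\beta,K$ are preserved and $N$ grows to $N+1$. Let $u^*$ denote the meeting point of $v_{i_0}$ with the first $Z_{j^*}$-labeled cache leaf encountered along $v_{i_0}$'s path to the root. Since $W_{N+1}$ is recoverable only via the pair $(Z_{j^*},\dsD^{P'}(v_{i_0}))$ and $W_{N+1}\notin\dsW^{P'}(u^*)$, we obtain $W_{N+1}\in W_{new}^{P'}(u^*)$, contributing $+1$ to $L'$.

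It remains to verify no other $|W_{new}(u)|$ changes. For $u$ off $v_{i_0}$'s path or strictly upstream of $u^*$ on that path, $Z_{j^*}\notin\dsZ(u)$, so $R^{P'}(u)=R^{P}(u)$ and an induction from the leaves gives $W_{new}^{P'}(u)=W_{new}^{P}(u)$. At $u^*$ and strictly downstream of it, I case-split on how $W_{d_{j^*}}$ is recovered in $P$: (a) if $\dsD^{P}(v_{i_0})$ is duplicated by another delivery leaf, or if $W_{d_{j^*}}$ is recoverable at $u^*$ via some label-pair other than $(Z_{j^*},\dsD^{P}(v_{i_0}))$, then $W_{d_{j^*}}$ remains in $R^{P'}(u)$ for every weakly-downstream $u$, so nothing is lost from the recovery sets; (b) otherwise $W_{d_{j^*}}$ is uniquely recovered at $u^*$ in $P$ through $(Z_{j^*},\dsD^{P}(v_{i_0}))$ and no other delivery leaf shares that label. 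In sub-case (b), unproductivity of $Z_{j^*}$ rules out an $\Omega$ pre-claim, because any earlier claiming pair $(v_{i'},v'')$ would need $\dsD^P(v_{i'})=\dsD^P(v_{i_0})$---contradicting the absence of duplicates---hence $W_{d_{j^*}}\notin W_{new}^P(u^*)$, equivalently $W_{d_{j^*}}\in\dsW^P(u^*)\subseteq\dsW^P(u)$ for every weakly-downstream $u$, so its removal from $R^{P'}$ is benign. A direct computation using $\dsW^{P'}(u^*)=\dsW^{P}(u^*)$ and $\dsW^{P'}(u)=\dsW^{P}(u)\cup\{W_{N+1}\}$ for $u$ strictly downstream of $u^*$ then yields $W_{new}^{P'}(u^*)=W_{new}^{P}(u^*)\cup\{W_{N+1}\}$ and $W_{new}^{P'}(u)=W_{new}^{P}(u)$ elsewhere, giving $L'=L+1$.

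The main obstacle is step (b): establishing that unproductivity together with the uniqueness of recovery pins $W_{d_{j^*}}$ inside $\dsW^{P}(u^*)$, which makes the disappearance of the old pair harmless across all nodes weakly downstream of $u^*$. This hinges on the observation that the set-valued $\dsD(u^*)$ collapses duplicate delivery labels into a single element, so the only way an earlier pair can preempt a uniquely recovered file is to carry an identical delivery label to $v_{i_0}$'s.
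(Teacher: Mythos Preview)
Your construction coincides with the paper's: pick $v_{i_0}$ with $A_{i_0}<\hat\beta$, find an unproductive cache label $Z_{j^*}$, set $u^*$ to be the topologically highest meeting point of $v_{i_0}$ with a $Z_{j^*}$-leaf, and overwrite the $j^*$-th coordinate of $\dsD(v_{i_0})$ by $N+1$. The upstream and downstream bookkeeping you outline also matches the paper.

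The problem is your handling of sub-case (b). Under the hypotheses of (b) one can compute $W_{d_{j^*}}\notin\dsW^P(u^*)$ directly: the only label pair producing $W_{d_{j^*}}$ in $\calT_{u^*}$ is $(Z_{j^*},\dsD^P(v_{i_0}))$, but $Z_{j^*}\notin\dsZ(u^*_l)$ by your choice of $u^*$ and $\dsD^P(v_{i_0})\notin\dsD(u^*_r)$ by the absence of duplicates, so $W_{d_{j^*}}$ lies in neither $\Delta(u^*_l,u^*_l)$ nor $\Delta(u^*_r,u^*_r)$. Thus $W_{d_{j^*}}\in W_{new}^P(u^*)$, the \emph{opposite} of what you conclude. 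What is really going on is that sub-case (b) is vacuous: with $W_{d_{j^*}}\in W_{new}^P(u^*)$ and no admissible $\Omega$ pre-claim (any pre-claiming pair must use the delivery label $\dsD^P(v_{i_0})$, hence must be $v_{i_0}$ itself paired with a $Z_{j^*}$-leaf, and the first such pair would have $\psi=1$), unproductivity is contradicted outright. The same vacuity covers the slice of (a1) where every duplicate of $\dsD^P(v_{i_0})$ sits outside $\calT_{u^*}$, so your claim ``$W_{d_{j^*}}$ remains in $R^{P'}(u)$ for every weakly-downstream $u$'' is not justified by (a1) alone. Your proof survives only because these cases cannot occur, not for the reason you give.

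The paper avoids the case split. It observes that $\psi(v_{i_0},v_{j^*})=0$ already forces a second \emph{node} pair $(v,v')\neq(v_{i_0},v_{j^*})$ inside $\calT_{u^*}$ with $\Delta(v',v)=\{W_{d_{j^*}}\}$: either $W_{d_{j^*}}\in\dsW^P(u^*)$, or some earlier pair set $\Omega(u^*,W_{d_{j^*}})$. By the choice of $u^*$ as the highest meeting point with an unproductive cache leaf, this second pair cannot involve the coordinate you changed (if $v=v_{i_0}$ then $\dsZ(v')\neq Z_{j^*}$, else $\dsD(v)$ is untouched), so $W_{d_{j^*}}\in\Delta^{P'}(u^*,u^*)$ unconditionally. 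This gives $\Delta^{P'}(u^*,u^*)=\Delta^{P}(u^*,u^*)\cup\{W_{N+1}\}$ in one stroke, after which the rest proceeds exactly as you wrote.
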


Informally, another property of optimal problem instances is that the same file is recovered as many times as possible at the same level of the tree. For instance, in Fig. \ref{Fig:TreeInstanceExmp} $W_1$ is recovered in both $\calT_{u^*(l)}$ and $\calT_{u^*(r)}$. In fact, intuitively it is clear that the same set of files can be reused in any subtrees of an internal node. Our next claim formalizes this intuition. Recall that for a node $u$, $\Gamma_l = \cup_{v \in \calT_{u(l)}} W_{new}(u)$ and $\Gamma_r = \cup_{v \in \calT_{u(r)}} W_{new}(u)$.
\begin{claim}
\label{clm:Gamma_right_subset_Gamma_left}
Consider an instance $P=P(\calT,\alpha,\beta,L,K,N)$. For all nodes $u \in \calT$, suppose w.l.o.g. that $|\Gamma_l| \geq |\Gamma_r|$. Suppose that there exist a node $u \in \calT$ such that such that $\Gamma_r \nsubseteq \Gamma_l$. Then there exists another instance $P'(\calT', \alpha, \beta, L', N', K)$ such that $N' \leq N$, $L' \geq L$, and $\Gamma_r \subseteq \Gamma_l$ for all $u \in \calT'$.
\end{claim}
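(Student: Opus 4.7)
The plan is to prove the claim by induction on the number of internal nodes of $\calT$. If $\calT$ has no internal nodes the statement is vacuous; otherwise, let $u_0$ denote the unique in-neighbor of the root. By applying the inductive hypothesis separately to the two subtrees $\calT_{u_0(l)}$ and $\calT_{u_0(r)}$ (each being a problem instance on a strictly smaller tree), I can replace them by subtree instances in which $\Gamma_r\subseteq\Gamma_l$ already holds at every internal node, with no increase in file count and no decrease in the subtree's local bound contribution. It therefore remains only to ``repair'' the single node $u_0$.

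To repair $u_0$, I apply a local relabeling of the right subtree. Since $|\Gamma_l^{u_0}|\geq|\Gamma_r^{u_0}|$, I can choose an injection from $\Gamma_r^{u_0}\setminus\Gamma_l^{u_0}$ into $\Gamma_l^{u_0}\setminus\Gamma_r^{u_0}$ and extend it to a permutation $\pi:[N]\to[N]$ that fixes $\Gamma_l^{u_0}\cap\Gamma_r^{u_0}$ together with every index in $[N]\setminus(\Gamma_l^{u_0}\cup\Gamma_r^{u_0})$. I then modify every delivery label $X_{d_1,\ldots,d_K}$ assigned to a leaf of $\calT_{u_0(r)}$ into $X_{\pi(d_1),\ldots,\pi(d_K)}$, leaving all cache labels and all labels of $\calT_{u_0(l)}$ untouched; this defines the candidate instance $P'$.

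I would then verify four properties. First, within $\calT_{u_0(r)}$ the cache-delivery recovery structure is simply $\pi$-permuted, so $W_{new}^{P'}(w)=\pi(W_{new}^{P}(w))$ at every internal $w\in\calT_{u_0(r)}$, which preserves cardinalities and preserves the inductive inclusions (since bijections preserve set containment). Second, at $u_0$ we have $\Gamma_r^{P',u_0}=\pi(\Gamma_r^{P,u_0})\subseteq\Gamma_l^{u_0}$ by construction of $\pi$, so the violation is resolved. Third, no new file index is introduced, whence $N'\leq N$. Fourth, the key inequality $|W_{new}^{P'}(u_0)|\geq|W_{new}^{P}(u_0)|$ is obtained via a direct set-theoretic calculation using $\dsW^{P'}(u_0)=\Gamma_l^{u_0}\cup\pi(\Gamma_r^{u_0})=\Gamma_l^{u_0}$ together with the identities $\Delta_{rl}^{P'}(u_0)=\Delta_{rl}^{P}(u_0)$ and $\Delta_{lr}^{P'}(u_0)=\pi(\Delta_{lr}^{P}(u_0))$, which follow from the fact that $\pi$ alters only the right-subtree delivery labels.

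The main obstacle I anticipate is precisely this last inequality: in the original instance the subtraction is by $\Gamma_l^{u_0}\cup\Gamma_r^{u_0}$, whereas in the modified instance it is by the smaller set $\Gamma_l^{u_0}$, but the set being subtracted from has also changed because $\Delta_{lr}$ has been permuted under $\pi$. A careful case analysis of how $\pi$ acts on the individual elements of $\Delta_{lr}^{P}(u_0)$ and $\Delta_{rl}^{P}(u_0)$—exploiting that $\pi$ is the identity outside $\Gamma_l^{u_0}\cup\Gamma_r^{u_0}$ and that $\pi$ pushes $\Gamma_r^{u_0}\setminus\Gamma_l^{u_0}$ into $\Gamma_l^{u_0}\setminus\Gamma_r^{u_0}$—is needed to show that every element of $W_{new}^{P}(u_0)$ is either preserved or swapped out for a new element of $W_{new}^{P'}(u_0)$, so that cardinalities satisfy the desired inequality. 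Once this is established, combining with the (nondecreasing) inductive contributions from the two subtrees gives $L'\geq L$, completing the induction and yielding the claim.
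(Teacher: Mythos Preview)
Your single-node repair via the permutation $\pi$ is fine, and in fact your ``main obstacle'' dissolves almost immediately: every element of $W_{new}^{P}(u_0)$ lies outside $\Gamma_l^{u_0}\cup\Gamma_r^{u_0}$, hence is fixed by $\pi$, so it survives in $A\cup\pi(B)$ and is not removed when you subtract only $\Gamma_l^{u_0}$; thus $W_{new}^{P}(u_0)\subseteq W_{new}^{P'}(u_0)$ outright. The genuine gap is elsewhere, in your induction scheme. When you invoke the inductive hypothesis on $\calT_{u_0(l)}$ and $\calT_{u_0(r)}$ as black boxes, you only learn that the \emph{local} subtree bounds do not decrease. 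But $L = L_l + L_r + |W_{new}(u_0)|$, and the inductive modifications to the subtrees may relabel delivery signals in ways that alter the cross terms $\Delta_{rl}(u_0)$ and $\Delta_{lr}(u_0)$, and hence shrink $|W_{new}(u_0)|$. Nothing in your inductive hypothesis controls this, so you cannot conclude $L'\geq L$ by simply adding up the three pieces.

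The paper avoids this by processing in the \emph{opposite} order: it repeatedly picks a topologically highest violating node $u^*$, performs the relabeling there, and then checks two things---that the inclusion property is preserved at every node upstream of $u^*$, and that $|W_{new}(u)|$ does not decrease at any node downstream of $u^*$ (via an inductive argument along the path to the root showing $\dsW^{P'}(u)\subseteq\dsW^{P}(u)$ while $\Delta_{rl}(u)$ and $\Delta_{lr}(u)$ stay the same). The paper also only alters the coordinates $d_{i_j}$ of the right-subtree delivery signals at positions $i_j\in\dsZ(u^*_r)$, rather than applying $\pi$ to all coordinates as you do; this has the pleasant consequence that $\Delta_{lr}(u^*)$ is literally unchanged (not merely $\pi$-permuted), which is what makes the downstream analysis go through cleanly. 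Your bottom-up induction could perhaps be salvaged by strengthening the hypothesis to assert that the subtree modification is a pure file-index relabeling (so its effect on the parent's cross terms can be tracked), but as written the argument is incomplete.
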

%
Next, we upper bound the maximum value of $|W_{new}(u)|$ for a node $u \in \calT$.
\begin{figure}[!t]
\centering
\includegraphics[scale=0.5]{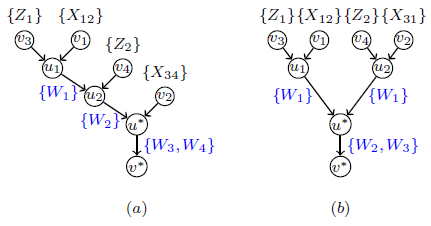}
\caption{{\small $(a)$ Problem instance $P'(\calT', \alpha,\beta,L,N',K)$, $(b)$ problem instance $P(\calT, \alpha,\beta,L,N,K)$ where $\alpha=2$, $\beta=2$ and $K=2$. Both instances reach $L=\alpha \min(\beta,K)=4$ with different number of files $N=3$ and $N'=4$.}}
\label{Fig:SaturatedInstanceK2a2b2}
\end{figure}
\begin{claim}
\label{clm:WnewBound}
In instance $P(\calT, \alpha, \beta, L, N , K)$, consider an internal node $u$. Let $\rho(u) = \hat{\alpha}_l [\min ( \beta_r, K - \beta_l)]^{+} + \hat{\alpha}_r [\min( \beta_l, K-\beta_r)]^{+}$. We have
\begin{align*}
|W_{new}(u)| \leq
 \min \left( \rho(u),
N - |\Gamma_l \cup \Gamma_r| \right).
\end{align*}
\end{claim}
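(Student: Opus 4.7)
The plan is to bound $|W_{new}(u)|$ in two different ways and then take the minimum. The starting point is the characterization in eq. (\ref{eq:wnew_2}), namely $W_{new}(u) = (\Delta_{rl}(u) \cup \Delta_{lr}(u)) \setminus \dsW(u)$, which immediately gives the subadditive bound $|W_{new}(u)| \leq |\Delta_{rl}(u)| + |\Delta_{lr}(u)|$. I will handle $\Delta_{rl}(u)$ explicitly; the argument for $\Delta_{lr}(u)$ is symmetric.

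First I would bound $|\Delta_{rl}(u)|$. By definition $\Delta_{rl}(u) = Rec(\dsZ(u_r)\setminus \dsZ(u_l),\,\dsD(u_l))$. Each pair $(Z_j, X_J)$ recovers exactly one file, so the number of distinct files obtained is at most the product of the number of distinct caches and the number of distinct delivery signals involved, giving
$$|\Delta_{rl}(u)| \leq |\dsZ(u_r)\setminus \dsZ(u_l)| \cdot |\dsD(u_l)| = |\dsZ(u_r)\setminus \dsZ(u_l)|\cdot \hat{\alpha}_l.$$
Next I bound $|\dsZ(u_r)\setminus \dsZ(u_l)|$ in two ways. On the one hand it is at most $|\dsZ(u_r)| = \hat{\beta}_r \leq \beta_r$. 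On the other hand, since $\dsZ(u_l) \cup \dsZ(u_r) \subseteq \{Z_1,\dots,Z_K\}$ we have $|\dsZ(u_r)\setminus \dsZ(u_l)| \leq K - |\dsZ(u_l)|$, and by the assumption that Claim \ref{clm:Betahat_Equal_Beta} has been applied to every internal node, $|\dsZ(u_l)| = \hat{\beta}_l = \min(\beta_l, K)$, so $|\dsZ(u_r)\setminus \dsZ(u_l)| \leq K - \min(\beta_l,K) = [K-\beta_l]^+$. Combining,
$$|\dsZ(u_r)\setminus \dsZ(u_l)| \leq \min\bigl(\beta_r,\,[K-\beta_l]^+\bigr) = [\min(\beta_r, K-\beta_l)]^+,$$
where the last equality is a routine case check on the sign of $K-\beta_l$. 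Hence $|\Delta_{rl}(u)| \leq \hat{\alpha}_l[\min(\beta_r, K-\beta_l)]^+$, and the symmetric argument gives $|\Delta_{lr}(u)| \leq \hat{\alpha}_r[\min(\beta_l, K-\beta_r)]^+$. Adding the two bounds yields $|W_{new}(u)| \leq \rho(u)$.

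For the second bound I would invoke eq. (\ref{eq:W-lab-union}): since $\dsW(u) = \cup_{v \succ u} W_{new}(v)$ and every $v \succ u$ lies in either $\calT_{u(l)}$ or $\calT_{u(r)}$, we have $\dsW(u) = \Gamma_l \cup \Gamma_r$. Because $W_{new}(u) \subseteq \{W_1,\dots,W_N\}\setminus \dsW(u)$ by its definition in eq. (\ref{eq:wnew_def}), this immediately gives $|W_{new}(u)| \leq N - |\Gamma_l \cup \Gamma_r|$. Taking the minimum of the two bounds completes the proof.

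The main obstacle is really just bookkeeping: one must be careful to identify $|\dsD(u_l)|$ with $\hat{\alpha}_l$ (rather than $\alpha_l$) since repeated delivery-phase labels contribute nothing extra to $\Delta_{rl}(u)$, and one must justify the step $|\dsZ(u_l)| = \min(\beta_l,K)$ via the standing assumption following Claim \ref{clm:Betahat_Equal_Beta}. The sign manipulation to collapse $\min(\beta_r,[K-\beta_l]^+)$ into $[\min(\beta_r,K-\beta_l)]^+$ is elementary but worth stating explicitly so that the final expression for $\rho(u)$ is seen to be tight in both the $\beta_l \leq K$ and $\beta_l > K$ regimes.
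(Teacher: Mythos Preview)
Your proof is correct and follows essentially the same route as the paper: both start from eq. (\ref{eq:wnew_2}), bound $|\Delta_{rl}(u)|$ by $\hat{\alpha}_l\cdot|\dsZ(u_r)\setminus\dsZ(u_l)|$, reduce the cache-difference size to $[\min(\beta_r,K-\beta_l)]^+$ via the standing assumption $\hat{\beta}_l=\min(\beta_l,K)$, and obtain the second bound from $\dsW(u)=\Gamma_l\cup\Gamma_r$. The only cosmetic difference is that the paper first writes $\min(\hat{\beta}_r,K-\hat{\beta}_l)$ and then invokes Claim \ref{clm:app_rho_eq_rho_tild}, whereas you collapse directly to $[\min(\beta_r,K-\beta_l)]^+$ by the elementary case check you outlined.
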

\begin{proof}
From eq. (\ref{eq:wnew_2}) it follows that
\begin{align*}
|W_{new}(u)| &\leq  |\Delta_{rl}(u)\setminus \dsW(u)| +  |\Delta_{lr}(u)\setminus \dsW(u)|.
\end{align*}
Next, we observe that
\begin{align*}
&|\Delta_{rl}(u)\setminus \dsW(u)| = |Rec(\dsZ(u_r) \setminus \dsZ(u_l), \dsD(u_l)) \setminus \dsW(u)| \\
&\leq |\dsD(u_l)| \times |\dsZ(u_r) \setminus \dsZ(u_l)| \\
&\stackrel{(a)}{\leq} \hat{\alpha}_l \times \min ( \hat{\beta}_r, K - \hat{\beta}_l),\\
&\stackrel{(b)}{=} \hat{\alpha}_l \times [\min ( \beta_r, K - \beta_l)]^+,
\end{align*}
where inequality (a) holds, since $|\dsD(u_l)| = \hat{\alpha}_l$ and $|\dsZ(u_r) \setminus \dsZ(u_l)| \leq \min ( \hat{\beta}_r, K - \hat{\beta}_l)$. Inequality (b) holds under the conditions $\hat{\beta}_l = \min(\beta_l,K)$ and $\hat{\beta}_r = \min(\beta_r,K)$ (see Claim \ref{clm:app_rho_eq_rho_tild} in Appendix). We can bound $|\Delta_{lr}(u)\setminus \dsW(u)|$ in a similar manner.

To conclude the proof we note that instances $P_l$ and $P_r$ recover a total of $|\Gamma_l \cup \Gamma_r|$ sources. As the total number of sources is $N$, $|W_{new}(u)| \leq N - |\Gamma_l \cup \Gamma_r|$.
\end{proof}

\begin{definition} {\it Saturation number.}
Consider an instance $P^*(\calT^*, \alpha, \beta, L^*, N^* , K)$, where $L^* = \alpha \min(\beta, K)$, such that for all problem instances of the form $P(\calT, \alpha, \beta, L^*, N,K)$, we have $N^* \leq N$. We call $N^*$ the saturation number of instances with parameters $(\alpha, \beta, K)$ and denote it by $N_{sat}(\alpha, \beta, K)$.
\end{definition}
In essence, for given $\alpha, \beta$ and $K$, saturated instances are most efficient in using the number of available files. It is easy to see that $N_{sat}(\alpha, \beta, K) \leq \alpha \min(\beta, K)$ since one can construct an instance with lower bound $\alpha \min(\beta,K)$ when $\alpha \min (\beta,K) \leq N$  (see  Corollary \ref{corollay:sat_rem}).

\begin{example}
Consider the two problem instances $P$ and $P'$ with $\alpha=2,\beta=2$ and $K=2$ that are shown in Fig. \ref{Fig:SaturatedInstanceK2a2b2}. The lower bound for both instances is $L=\alpha \min(\beta,K) = 4.$ However, instance $P$ uses one file less than $P'$. This reduction is accomplished by reusing file $W_1$ at both $\calT_{u^*(l)}$ and $\calT_{u^*(r)}$. The instance $P'$ can be treated as trivial instance constructed by the procedure suggested in the proof of Corollary \ref{corollay:sat_rem} as it uses $N'=\alpha \min(\beta,K) = 4$ files. It can be verified by exhaustive search that $P$ is one of the problem instances associated with $N_{sat}(2,2,2)$; therefore, $N_{sat}(2,2,2) = 3$.
\end{example}

\begin{definition} {\it Atomic problem instance.}
For a given optimal problem instance $P(\calT, \alpha, \beta, L, N, K)$ it is possible that there exist other optimal problem instances $P_i(\alpha_i, \beta_i, L_i, N, K), i = 1, \dots, m$ with $m \geq 2$ such that $\sum_{i=1}^m \alpha_i = \alpha, \sum_{i=1}^m \beta_i = \beta$ and $\sum_{i=1}^m L_i = L$, i.e., the value of $L$ follows from appropriately combining smaller problems. In this case we call the instance $P$ as non-atomic. Conversely, if such smaller problem instances do not exist, we call $P$ an atomic problem instance.
\end{definition}

\begin{figure}[!t]
\centering
\includegraphics[scale=0.5]{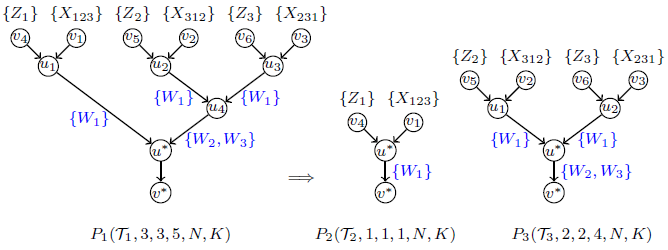}
\caption{{\small Problem instances with $N=K=3$. Instance $P_1$ is non-atomic as the corresponding lower bound can be obtained by summing the lower bounds from $P_2$ and $P_3$.}}
\label{Fig:AtomicInstance}
\end{figure}

\begin{example}
Consider the problem instance $P_1$ shown in Fig. \ref{Fig:AtomicInstance} with $N=K=3$. The lower bound associated with this instance, $3R^\star+3M \geq 5$, can be obtained by combining the lower bounds acquired by $P_2$ and $P_3$. Specifically, instance $P_2$ yields $R^\star+M \geq 1$ and instance $P_3$ yields $2R^\star+2M \geq 4$. Note that in $P_1$ the last edge $(u^*,v^*)$ is such that $W_{new}(u^*) =\emptyset$. Thus, the tree can be split in two separate instances at $u^*$. Thus it is non-atomic.
\end{example}

It is evident that instances where no new file is recovered in the last edge are non-atomic. However, we emphasize that there are other instances that are non-atomic as well. For example, consider instance $P_1'$, obtained from $P_1$ where we change the label $\dsD(v_3)$ to $X_{221}$. In $P_1'$, the labels of edges $(u_4,u^*)$ and $(u^*,v^*)$ will change to $\{W_2\}$ and $\{ W_3 \}$ respectively; none of the other labels will change. Even though $W_{new}(u^*)$ is nonempty in $P_1'$, but we still call it non-atomic since the associated lower bound does not change.


The following theorem and its corollary are the main results of our paper and can be used to identify optimal problem instances. 
\begin{theorem}
\label{thm:Lvalue}
Suppose that there exists an optimal and atomic problem instance $P_o(\calT = (V, A), \alpha, \beta, L_o, N, K)$. Then, there exists an optimal and atomic problem instance $P^*(\calT^* = (V^*, A^*), \alpha, \beta, L^*, N, K)$ where $L^*=L_o$ with the following properties. Let us denote the last edge in $P^*$ with $(u^*, v^*)$. Let $P^*_l = P(\calT^*_{u^*(l)},\alpha_l, \beta_l,L^*_l,|\Gamma_l|,K)$ and $P^*_r = P(\calT^*_{u^*(r)},\alpha_r, \beta_r,L^*_r,|\Gamma_r|,K)$. Then, we have 
\begin{eqnarray}
L^*_l &=& \alpha_l \min (\beta_l, K),\nonumber \\
L^*_r &=& \alpha_r \min (\beta_r, K), \text{~and} \nonumber \\
L^* &=&  \min\left(\alpha \min(\beta, K), L^*_l + L^*_r + N - N_0 \right), \label{eq:cap_lower_bound}
\end{eqnarray}
where $N_0 = \max(N_{sat}(\alpha_l, \beta_l, K), N_{sat}(\alpha_r, \beta_r, K))$\footnote{As the instance is atomic, we have $N > N_0$.}. Furthermore, $\min (\beta_l, \beta_r) < K$.
\end{theorem}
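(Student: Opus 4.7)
The plan is to prove the structural claim $\min(\beta_l, \beta_r) < K$ first, and then to establish the formula for $L^*$ by matching a general upper bound on $L_o$ against an explicit construction of $P^*$.

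For $\min(\beta_l, \beta_r) < K$ I would argue by contradiction. Suppose both $\beta_l \geq K$ and $\beta_r \geq K$. After applying Claim \ref{clm:Betahat_Equal_Beta} to each subtree, we may assume $\hat{\beta}_l = \hat{\beta}_r = K$, which forces $\dsZ(u^*_l) = \dsZ(u^*_r) = \{Z_1, \ldots, Z_K\}$. Then $\dsZ(u^*_l) \setminus \dsZ(u^*_r)$ and $\dsZ(u^*_r) \setminus \dsZ(u^*_l)$ are both empty, so by the definitions of $\Delta_{lr}(u^*), \Delta_{rl}(u^*)$ and eq. (\ref{eq:wnew_2}) we conclude $W_{new}(u^*) = \emptyset$. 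But then $L_o = L^o_l + L^o_r$, expressing $P_o$ as the sum of two strictly smaller instances and contradicting atomicity.

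For the main claim I would first upper bound $L_o$. By Claim \ref{clm:Gamma_right_subset_Gamma_left} assume $\Gamma^o_r \subseteq \Gamma^o_l$, so $|\Gamma^o_l \cup \Gamma^o_r| = |\Gamma^o_l|$. Corollary \ref{corollay:sat_rem} gives $L^o_l \leq \alpha_l \min(\beta_l, K)$ and $L^o_r \leq \alpha_r \min(\beta_r, K)$; the definition of $N_{sat}$ combined with Lemma \ref{lemma:Increase_N_Increase_L} implies that reaching these saturated values requires at least $N_{sat}(\alpha_l, \beta_l, K)$ and $N_{sat}(\alpha_r, \beta_r, K)$ files respectively. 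Claim \ref{clm:WnewBound} then yields $|W^o_{new}(u^*)| \leq N - |\Gamma^o_l \cup \Gamma^o_r| \leq N - N_0$ when both subtrees are saturated with maximal file sharing. Summation gives $L_o \leq \alpha_l \min(\beta_l, K) + \alpha_r \min(\beta_r, K) + (N - N_0)$, and combining with the universal cap $L_o \leq \alpha \min(\beta, K)$ from Corollary \ref{corollay:sat_rem} yields the outer $\min$ in eq. (\ref{eq:cap_lower_bound}).

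For the matching construction of $P^*$, I would explicitly assemble saturated subtrees $P^*_l, P^*_r$ using $N_{sat}(\alpha_l, \beta_l, K)$ and $N_{sat}(\alpha_r, \beta_r, K)$ files respectively, choosing the file universes to coincide as much as possible so that their union has size exactly $N_0$. Claim \ref{clm:bijective_mapping} then lets me align the cache-index labels across the two subtrees consistently with the global parameter $\beta$. The delivery-phase labels at the leaves are chosen so that at $u^*$ a fresh set $W^*_{new}(u^*)$ of $N - N_0$ files drawn from $\{W_{N_0+1}, \ldots, W_N\}$ is recovered, subject to the global cap $\alpha\min(\beta,K)$. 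Since $P^*$ is a valid instance with the same parameters, optimality of $P_o$ forces $L^* \leq L_o$, and the upper bound gives $L^* = L_o$; atomicity of $P^*$ is inherited because any nontrivial decomposition of $P^*$ with value $L_o$ would induce one of $P_o$. The main technical obstacle is verifying that $N - N_0$ new files can actually be harvested at $u^*$ despite the $\rho(u^*)$ restriction of Claim \ref{clm:WnewBound}, which is absent from the formula; the condition $\min(\beta_l, \beta_r) < K$ supplies enough asymmetry between the cache-label sets of the two subtrees to guarantee $\rho(u^*) \geq N - N_0$ whenever the global cap has not already been reached.
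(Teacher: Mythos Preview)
Your upper-bound argument has a genuine gap. You bound $L^o_l \leq \alpha_l \min(\beta_l,K)$, $L^o_r \leq \alpha_r \min(\beta_r,K)$, and $|W^o_{new}(u^*)| \leq N - N_0$, then sum. But the third bound is only valid \emph{after} you know the two subtrees are saturated and use exactly $N_0$ files between them. For an arbitrary optimal atomic $P_o$ the subtrees need not be saturated; in that case $|\Gamma^o_l \cup \Gamma^o_r|$ can be strictly smaller than $N_0$, so Claim~\ref{clm:WnewBound} allows $|W^o_{new}(u^*)| > N - N_0$. The three inequalities are coupled through a file-budget tradeoff and cannot be summed independently. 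Your phrase ``when both subtrees are saturated with maximal file sharing'' is exactly the hypothesis you have not yet secured.

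The paper does not attempt a direct upper bound. Instead it \emph{transforms} $P_o$ into $P^*$ while preserving the value $L_o$: if $L^o_l < \alpha_l\min(\beta_l,K)$, Lemma~\ref{lemma:Increase_N_Increase_L} lets one move a single file $W^* \in W^{P_o}_{new}(u^*)$ into the left subtree, raising $L_l$ by one and lowering $|W_{new}(u^*)|$ by one; the total $L_o$ is unchanged and the instance stays optimal and atomic. Iterating this shift saturates both subtrees (and shows along the way that they cannot both be unsaturated, else $L_o$ could be increased). Only once $P^*$ has saturated subtrees does the paper compute $|W^{P^*}_{new}(u^*)|$ and compare with $N - N_0$, at which point your Claim~\ref{clm:WnewBound} reasoning becomes legitimate. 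This shifting step is the missing idea in your proposal; your from-scratch construction of $P^*$ then becomes unnecessary, since $P^*$ is obtained from $P_o$ with $L^* = L_o$ automatically. Your argument for $\min(\beta_l,\beta_r) < K$ matches the paper's.
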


\begin{proof}
Note that we assume that the problem instance $P_o$ is atomic. This implies that $W_{new}^{P_o}(u^*) \neq \emptyset$ and, consequently, $N > |\Gamma_l|,|\Gamma_r|$. Using Claim \ref{clm:Betahat_Equal_Beta} we can assert that $\hat{\beta}_l = \min(\beta_l,K)$ and $\hat{\beta}_r = \min(\beta_r,K)$.


%
%
We denote by $(u^*,v^*)$, the last edge in $P_o$. We let $P_l=P(\calT_{u^*(l)}, \alpha_l,\beta_l,L_l,|\Gamma_l|, K)$ and $P_r=P(\calT_{u^*(r)}, \alpha_r,\beta_r,L_r,|\Gamma_r|, K)$. It is easy to see that $L_o = L_l + L_r+|W_{new}^{P_o}(u^*)|$.
Suppose that $L_l < \alpha_l \min (\beta_l, K)$. We apply the result of Lemma \ref{lemma:Increase_N_Increase_L}, by noting that $|\Gamma_l| < N$, and conclude that there exists another instance $P_l^{**} = P(\calT^{**}_{u^*(l)},\alpha_l, \beta_l,L^*_l+1,|\Gamma_l| +1,K)$ that can replace $P_l$, where the new file is denoted $W^*$. We also note that in $P_o$, $W^* \in W_{new}^{P_o}(u^*)$. Let us denote the new instance $P'_o$.
We emphasize that the nature of the modification in Lemma \ref{lemma:Increase_N_Increase_L} is such that $\Delta^{P'_o}(u^*, u^*) = \Delta^{P_o}(u^*, u^*)$. Moreover, we note that $\dsW^{P'_o}(u^*) = \dsW^{P_o}(u^*) \cup \{W^*\}$. Thus,
\begin{align*}
&W_{new}^{P'_o}(u^*) = \Delta^{P'_o}(u^*, u^*) \setminus \dsW^{P'_o}(u^*)\\
&= \Delta^{P'_o}(u^*, u^*) \setminus \dsW^{P_o}(u^*) \cup \{W^*\}\\
&= W^{P_o}_{new} (u^*) \setminus \{W^*\}.
\end{align*}
The problem instance $P_o'$ is also optimal since $L_l$ is increased by one and $|W_{new}^{P_o}(u^*)|$ is decreased by one, leaving $L_o$ unchanged. Therefore, moving files from $W_{new}^{P_o}(u^*)$ to either $P_l$ or $P_r$ preserves optimality. In addition, from $L_o'=L_o$ and that $P_o$ is atomic, $P_o'$ is atomic.
Based on this argument, we can immediately conclude that we cannot have $L_l < \alpha_l \min (\beta_l, K)$ and $L_r < \alpha_r \min (\beta_r, K)$ as the file $W^*$ can be used to simultaneously modify the instance $P_r$. Upon this modification, we can conclude that $L_o$ can be increased by one, which contradicts the optimality of the instance $P_o$. Thus we assume that $L_r = \alpha_r \min (\beta_r, K)$. We can repeatedly apply the operation of moving files from $W^{P_o}_{new}(u^*)$ to $P_l$ until we have $L^*_l = \alpha_l \min (\beta_l, K)$. It has to be the case that $|W^{P_o}_{new}(u^*)| > \alpha_l \min (\beta_l, K) - |\Gamma_l|$ so that we can repeatedly apply the operation of moving the files, for if this were not true, the instance $P_o$ would not be atomic.

We will denote the instance that we arrive at after completing these modification by $P^*$ which is optimal and atomic.
We can also observe at this point that if we have $\beta_l \geq K$ and $\beta_r \geq K$ so that $\hat{\beta}_l = \hat{\beta}_r = K$, then $W_{new}^{P^*}(u^*) = \emptyset$ (by Claim \ref{clm:WnewBound}) which implies that the original instance $P_o$ is not atomic. Thus, either $\beta_l$ or $\beta_r$ or both have to be strictly smaller than $K$. In the discussion below we assume w.l.o.g. that $\beta_r < K$.
\begin{figure}[!t]
\centerline{
\includegraphics[scale=0.5]{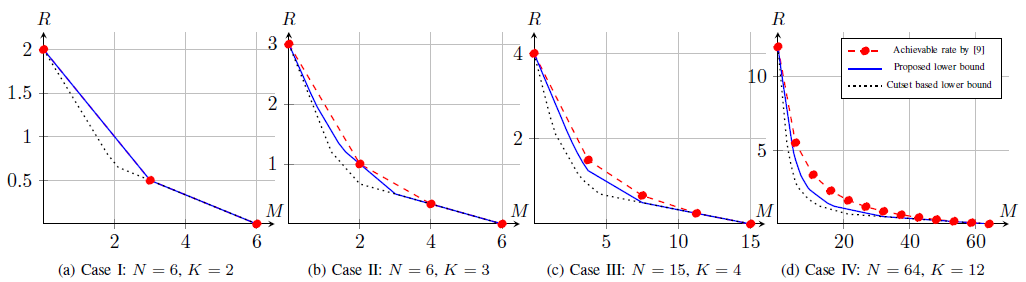}
}
\caption{{\small Comparison of the proposed lower bound and the cutset bound.}}
\label{Fig:suboptimalboound}
\end{figure}
It is easy to see that
\begin{align*}
L^* = L^*_l + L^*_r + |W^{P^*}_{new}(u^*)|.
\end{align*}

We define $\tilde{\rho}(u^*) = \alpha_l \times [\min ( \beta_r, K - \beta_l)]^{+} + \alpha_r \times [\min(\beta_l, K-\beta_r)]^{+}$ where $\tilde{\rho}(u^*) \geq \rho(u^*)$ due to the fact that $\alpha_l \geq \hat{\alpha}_l $ and $\alpha_r \geq \hat{\alpha}_r $. Using this and Claim \ref{clm:WnewBound}, we have that
\begin{align*}
|W^{P^*}_{new}(u^*) | &\leq \min\left(\tilde{\rho}(u^*) , N - \max (|\Gamma^*_l|, |\Gamma^*_r|)\right).
\end{align*}
For an optimal instance, we claim that the above inequality is met with equality. If $L^* = \alpha\min(\beta,K)$ there is nothing to prove. In this case, $|W^{P^*}_{new}(u^*) |=\alpha\min(\beta,K) - L^*_l-L^*_r=\tilde{\rho}(u^*)$ (see Claim \ref{clm:ab_eq_albl_plus_arbr_plus} in Appendix) and the above inequality is met with equality.

Otherwise, we have $L^* < \alpha \min(\beta,K)$ which implies $\tilde{\rho}(u^*) > |W^{P^*}_{new}(u^*)|$ and $\tilde{\rho}(u^*) > N- \max (|\Gamma^*_l|, |\Gamma^*_r|)$. From the Claim \ref{clm:Gamma_right_subset_Gamma_left}, we can assume that either $\Gamma^*_l \subseteq \Gamma^*_r$ or $\Gamma^*_r \subseteq \Gamma^*_l$. In $P^*$, $N_{used}= \max \left(|\Gamma^*_l|, |\Gamma^*_r|\right)+|W^{P^*}_{new}(u^*)|$ files are used so far. Now, if $N > N_{used}$, we can use Lemma \ref{lemma:Increase_N_Increase_L} to conclude that there exists a problem instance $P''(\calT'', \alpha,\beta, L'',N'',K)$ where $N'' = N_{used} +1 \leq N$ and $L'' = L^*+1$. This is a contradiction since we assumed that $P^*$ is optimal. Therefore, $N \leq N_{used}$. In addition, since the number of available files is $N$ thus $N \geq N_{used}$. As a result, $N = N_{used} = \max \left(|\Gamma^*_l|, |\Gamma^*_r|\right)+|W^{P^*}_{new}(u^*)|$ and the inequality is met with equality. In both cases, we conclude that
\begin{align*}
|W^{P^*}_{new}(u^*) | = \min\left(\tilde{\rho}(u^*) , N - \max (|\Gamma^*_l|, |\Gamma^*_r|)\right).
\end{align*}
%
%
It follows that
\begin{align*}
L^* = \min\left(\alpha \min(\beta, K), L_l^*+L_r^*+N-\max(|\Gamma^*_l|, |\Gamma^*_r|) \right).
\end{align*}
If $L^* = \alpha \min(\beta,K)$ the saturated instance associated to $N_{sat}(\alpha,\beta,K)$ is an optimal instance. Otherwise, $L^* < \alpha \min(\beta, K)$, we have
\begin{align}
\label{eq:BoundonWnewTheorem}
|W^{P^*}_{new}(u^*)| &= N - \max (|\Gamma^*_l|, |\Gamma^*_r|) \\
&\leq N -  \max (N_{sat}(\alpha_l,\beta_l, K), N_{sat}(\alpha_r, \beta_r, K)). \nonumber
\end{align}
We claim that for $P^*$ to be optimal, $P^*_l$ and $P^*_r$ have to be such that $\max(|\Gamma^*_l|,|\Gamma^*_r|) = \max (N_{sat}(\alpha_l,\beta_l, K), N_{sat}(\alpha_r, \beta_r, K))$. To see this we proceed as follows. Note that by the definition of saturation number, there exist problem instances $P'_l(\calT'_l, \alpha_l, \beta_l, L'_l, N'_l, K)$ and $P'_r(\calT'_r, \alpha_r, \beta_r, L'_r, N'_r, K)$ such that $L'_l = L^*_l$, $L'_r = L^*_r$, $N'_l = N_{sat}(\alpha_l,\beta_l,K)$ and $N'_r = N_{sat}(\alpha_r,\beta_r,K)$. W.l.o.g. let assume $N'_l \geq N'_r$. By the Claims \ref{clm:Betahat_Equal_Beta} and \ref{clm:Gamma_right_subset_Gamma_left} problem instances $P'_l$ and $P'_r$ can be modified in such a way that $\hat{\beta}'_l = \min(\beta_l,K)$, $\hat{\beta}'_r = \min(\beta_r,K)$ and $\Gamma'_l \subseteq \Gamma'_r$. Also, by Claim \ref{clm:bijective_mapping} we can set $\cup_{v \in \calC'_l}\dsZ(v) = \{ Z_1, \ldots, Z_{\hat{\beta}'_l} \}$ and $\cup_{v \in \calC'_r}\dsZ(v) = \{Z_{K-\hat{\beta}'_r+1}, \ldots,Z_K \}$. This ensures that $\hat{\beta}_l = \min(\beta_l,K)$, $\hat{\beta}_r = \min(\beta_r,K)$, and $\hat{\beta} = \min(\beta,K)$ hold in the defined problem instance.
Now, consider the problem instance $P'=P(\calT',\alpha, \beta, L', N, K)$ with last edge $(u',v')$ where $P'_l$ and $P'_r$ are instances of $u'_l$ and $u'_r$ respectively. The instance $P'$ uses $N'_l+|W^{P'}_{new}(u')|$ files. If $N-N'_l-|W^{P'}_{new}(u')| \geq 1$, then we are able to apply Lemma \ref{lemma:Increase_N_Increase_L} $N-N'_l-|W^{P'}_{new}(u')|$ times and come up with a modified version of $P'$ so that either $L'=\alpha \min(\beta,K)$ or $N-N'_l-|W^{P'}_{new}(u')|=0$. The first case cannot happen since by assumption $P^*$ is optimal and $L' \leq L^* < \alpha\min(\beta,K)$. Therefore, $|W^{P'}_{new}(u')|= N-N'_l$ and $L'=L^*_l+L^*_r+N-N'_l$. Finally, as $L' \leq L^*$ and $L^* \leq L^*_l+L^*_r+N-N'_l$, we conclude that $L' = L^*$.
%
\end{proof}

\begin{corollary}
\label{corollary:Lvalue}
Suppose that there exists an optimal and atomic problem instance $P_o(\calT = (V, A), \alpha, \beta, L_o, N, K)$. Consider problem instances $P'_l(\alpha'_l,\beta'_l,L'_l,N,K)$ and $P'_r(\alpha'_r,\beta'_r,L'_r,N,K)$ such that $\alpha'_l + \alpha'_r = \alpha$ and $\beta'_l+\beta'_r = \beta$ such that $N \geq N'_0 = \max(N_{sat}(\alpha'_l,\beta'_l,K),N_{sat}(\alpha'_r,\beta'_r,K))$. Then we have
$$L_o \geq \min\left(\alpha \min(\beta, K), L'_l + L'_r + N - N'_0) \right).$$
\end{corollary}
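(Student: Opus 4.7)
The plan is to prove the corollary by exhibiting an explicit problem instance $P$ with parameters $(\alpha, \beta, N, K)$ whose associated lower bound $L$ attains at least $\min(\alpha \min(\beta, K), L'_l + L'_r + N - N'_0)$. Since $P_o$ is optimal among all problem instances with these parameters, we would then conclude $L_o \geq L$, which is precisely the claim. The construction I would use mirrors the argument in the last two paragraphs of the proof of Theorem \ref{thm:Lvalue}; the corollary is essentially that construction applied to an arbitrary split of $\alpha$ and $\beta$ rather than only the split distinguished by $P_o$.

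For the construction, I would first invoke the definition of the saturation number to select saturated subinstances $\bar{P}_l$ and $\bar{P}_r$ achieving $\bar{L}_l = \alpha'_l \min(\beta'_l, K)$ and $\bar{L}_r = \alpha'_r \min(\beta'_r, K)$ using exactly $N_{sat}(\alpha'_l, \beta'_l, K)$ and $N_{sat}(\alpha'_r, \beta'_r, K)$ files, respectively. Next I would apply Claims \ref{clm:Betahat_Equal_Beta}, \ref{clm:Gamma_right_subset_Gamma_left}, and \ref{clm:bijective_mapping} to ensure, without loss of generality (after assuming $N_{sat}(\alpha'_l, \beta'_l, K) \geq N_{sat}(\alpha'_r, \beta'_r, K)$), that $\hat{\beta}_l = \min(\beta'_l, K)$, $\hat{\beta}_r = \min(\beta'_r, K)$, $\Gamma_r \subseteq \Gamma_l$, and that the cache index sets of the two subinstances together cover $\min(\beta, K)$ distinct caches in the merged instance. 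Joining $\bar{P}_l$ and $\bar{P}_r$ at a common last edge $(u^*, v^*)$ then produces an instance $P$ with $|\Gamma_l \cup \Gamma_r| = N'_0$ files used and initial lower bound equal to $\bar{L}_l + \bar{L}_r$.

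Next I would apply Lemma \ref{lemma:Increase_N_Increase_L} iteratively to $P$: as long as $L < \alpha \min(\beta, K)$ and the file budget has not been exhausted, each application increases both the number of files used and the value of $L$ by one. After at most $N - N'_0$ such steps, one arrives at an instance with $L = \min(\alpha \min(\beta, K), \bar{L}_l + \bar{L}_r + N - N'_0)$. Because $L'_l \leq \bar{L}_l$ and $L'_r \leq \bar{L}_r$ by Corollary \ref{corollay:sat_rem}, and because the function $\min(\alpha \min(\beta, K), \cdot)$ is non-decreasing in its second argument, we obtain $L \geq \min(\alpha \min(\beta, K), L'_l + L'_r + N - N'_0)$. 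The optimality of $P_o$ then yields $L_o \geq L$, completing the argument.

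The main technical obstacle lies in the merging step, namely verifying that the joined instance $P$ indeed attains $L = \bar{L}_l + \bar{L}_r$ rather than some smaller quantity. This reduces to checking that $W_{new}(u^*) = \emptyset$ at the root of the merged tree, which follows because after Claim \ref{clm:Gamma_right_subset_Gamma_left} every file index appearing in either subtree already lies in $\Gamma_l$, so any cross-subtree recovery at $u^*$ contributes no new files; the precise labeling bookkeeping can be carried out just as in the proof of Theorem \ref{thm:Lvalue}. Once this initial configuration is secured, the iterative application of Lemma \ref{lemma:Increase_N_Increase_L} is routine: its precondition $L < \alpha \min(\beta, K)$ is exactly the case in which we still have a strict increase available, and the process terminates either when saturation is reached or when the file budget $N$ is exhausted.
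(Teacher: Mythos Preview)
Your proposal is correct and follows essentially the same route as the paper, which simply instructs the reader to re-run the construction in the last two paragraphs of the proof of Theorem~\ref{thm:Lvalue} with the chosen split $(\alpha'_l,\beta'_l),(\alpha'_r,\beta'_r)$ in place of the one distinguished by $P_o$. One small correction: your argument that $W_{new}(u^*)=\emptyset$ after merging is not quite right, since Claim~\ref{clm:Gamma_right_subset_Gamma_left} only forces $\Gamma_r\subseteq\Gamma_l$ while cross-subtree recoveries at $u^*$ read delivery-label coordinates at cache indices absent from the originating subtree, and those coordinates are a priori unconstrained; the paper handles this by simply carrying $|W_{new}(u^*)|$ into the file count (the merged instance uses $N'_0+|W_{new}(u^*)|$ files and has $L=\bar L_l+\bar L_r+|W_{new}(u^*)|$) before iterating Lemma~\ref{lemma:Increase_N_Increase_L}, and the arithmetic telescopes to the same conclusion.
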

\begin{proof}
The result follows by applying the arguments in the proof of Theorem \ref{thm:Lvalue}, to the problem instance where $P^*_l$ and $P^*_r$ are replaced by $P'_l$ and $P'_r$ respectively.
\end{proof}
The following example demonstrates the effectiveness of Corollary \ref{corollary:Lvalue}.
\begin{example}
Consider a system with $N=64$, $K=12$ and cache size $M = 16/3$. The cut-set bound for such a system provides a lower bound $R^{\star}(M) \geq 77/27=2.852$. Now, using the approach of Theorem \ref{thm:Lvalue} for $\alpha = 12$, $\beta=8$, $(\alpha_l,\beta_l) = (\alpha_r,\beta_r) =(6,4)$ yields $12R^{\star}+8M \geq \min(12 \times 8, 24+24+64-N_{sat}(6,4,12))$. 
It can be shown that $N_{sat}(6,4,12) \leq 17$ (see Algorithm \ref{Alg:SubOptimalInstance} below). Therefore, $R^{\star}(M) \geq 157/36=4.361$. This is significantly closer to the achievable rate of $5.5$ (from \cite{maddahN14}).
\end{example}

Theorem \ref{thm:Lvalue} can be leveraged effectively if it can also yield the optimal values of $\alpha_l, \beta_l$ and $\alpha_r, \beta_r$. However, currently we do not have an algorithm for picking them in an optimal manner. Moreover, we also do not have an algorithm for finding $N_{sat}(\alpha, \beta, K)$. Thus, we have to use Corollary \ref{corollary:Lvalue} with an appropriate upper bound on $N_{sat}(\alpha, \beta,K)$ in general. 

Algorithm \ref{Alg:SubOptimalInstance} in Section \ref{sec:subsec_anlytic_bound_Nsat} provides a constructive algorithm for upper bounding $N_{sat}(\alpha, \beta, K)$. Setting $\alpha_l = \lceil \alpha /2 \rceil$, $\beta_l = \lfloor \beta /2 \rfloor$ in Theorem \ref{thm:Lvalue} and applying this approach to upper bound the saturation number, we can obtain the results plotted in Fig. \ref{Fig:suboptimalboound}.
\subsection{An analytic bound on the saturation number}
\label{sec:subsec_anlytic_bound_Nsat}
Recall that the saturation number for a given $\alpha, \beta$ and $K$ is the minimum value of $N$ such that there exists a problem instance $P(\calT, \alpha, \beta, L, N, K)$ with $L = \alpha \min (\beta, K)$.
In particular, this implies that if we are able to construct a problem instance with $N'$ files with a lower bound equal to $\alpha \min (\beta, K)$, then, $N_{sat}(\alpha,\beta, K) \leq N'$. In Algorithm \ref{Alg:SubOptimalInstance}, we create one such problem instance.

\begin{algorithm}[!t]
\caption{Instance construction for upper bounding $N_{sat}(\alpha,\beta,K)$}\label{Alg:SubOptimalInstance}
\algrenewcommand\algorithmicrequire{\textbf{Input:}}
\algrenewcommand\algorithmicfunction{\textbf{Initialization}}
\begin{algorithmic}[1]
\small{
\Require $\alpha$, $\beta$ and $K$.
\Function{}{}
\State Let $(u^*, v^*)$ be last edge and set $U_{new}= \{ u^* \}$.
\State \label{alglin:Set_aus_bus}Set $\dsZ(u^*) = \{Z_1, Z_2, \dots, Z_{\min(\beta,K)}\}$ and $b(u^*)=\beta$,  $a(u^*)=\alpha$.
\State $\calC = \emptyset$ and $\calD = \emptyset$.
\EndFunction
\Procedure{Tree Construction \& Cache nodes labeling}{}
\While{$U_{new}$ is nonempty}
\State Pick $u \in U_{new}$, create nodes $u_l$ and $u_r$,  edges $(u_l,u)$ and $(u_r,u)$, add them to $\calT_0$.
\State \label{alglin:Set_au_bu}Set $a(u_l) = \lceil a(u)/2 \rceil$, $b(u_l) = \lfloor b(u)/2 \rfloor$ and  $a(u_r) = a(u) - a(u_l)$, $b(u_r) = b(u) - b(u_l)$.
\State Set $\dsZ(u_l)$ and $\dsZ(u_r)$ be subsets of $\dsZ(u)$ of sizes $\min(b(u_l), K)$ and $\min(b(u_r), K)$ respectively with minimum intersection.
\State Remove $u$ from $U_{new}$.
\If{$a(u_l)+ b(u_l) \geq 2$} 
\State Add $u_l$ to $U_{new}$.
\Else
\State If $b(u_l)==1$ add $u_l$ to $\calD$ otherwise to $\calC$.
\EndIf
\If{$a(u_r)+ b(u_r) \geq 2$}
\State Add $u_r$ to $U_{new}$.
\Else
\State If $b(u_r)==1$ add $u_r$ to $\calD$ otherwise to $\calC$.
\EndIf
\EndWhile
\EndProcedure
\Procedure{Delivery nodes labeling}{}
\State Let $\calD = \{v_{1}, \ldots, v_{\alpha} \}$.
\For{$r = 1, \ldots, \min(\beta,K)$}
\State Pick a node $v \in \calC$ with $\dsZ(v)=\{Z_r\}$ and denote it by $v_{r+\alpha}$.
\EndFor
\State Let $\calC \setminus \{v_{\alpha+1}, \ldots,  v_{\alpha+\min(\beta,K)}\} = \{v_{\alpha+\min(\beta,K)+1}, \ldots,v_{\beta} \}$.
\For{$t = 1, \ldots, \alpha$}
\For{$r = 1, \ldots, \min(\beta,K)$}
\State \label{alglin:dlv_d_ass}$d_{r} = (t-1)\min(\beta,K)+r$.
\EndFor
\For{$r = \min(\beta,K) + 1, \ldots, K$}
\State $d_{r} = 1$.
\EndFor
\State Set $\dsD(v_{t}) = X_{d_1,\ldots,d_K}$
\EndFor
\EndProcedure
\Procedure{Modify Delivery phase signals}{}
\State \label{alglin:zero_instance} Denote current instance by $P_0(\calT_0,\alpha, \beta, L_0, N_0,K)$.
\State \label{alglin:dlv_clm_gamma}Modify $P_{0}(\calT_{0},\alpha, \beta,L_{0},N_{0},K)$ by Claim \ref{clm:Gamma_right_subset_Gamma_left} to obtain $P(\calT,\alpha, \beta, L, \hat{N}_{sat},K)$.
\EndProcedure
\algrenewcommand\algorithmicrequire{\textbf{Output:}}
\Require $\hat{N}_{sat}(\alpha, \beta, K) = |\Gamma(v^*)|$, $P(\calT,\alpha,\beta,L,\hat{N}_{sat},K) $.
}
\end{algorithmic}
\end{algorithm}

The basic idea of Algorithm \ref{Alg:SubOptimalInstance} is as follows. 
The first part focuses on the construction of the tree, without labeling the leaves.
For a given $\alpha$ and $\beta$, we first initialize a tree that just consists of a single edge $(u^*, v^*)$. Following this, we partition $\alpha$ into two parts $\alpha_l = \lceil \alpha/2 \rceil$ and $\alpha_r = \alpha - \alpha_l$. On the other hand, $\beta$ is split into $\beta_l = \lfloor \beta/2 \rfloor $ and $\beta_r = \beta - \beta_l$. The algorithm, then recursively constructs the left and right subtrees of $u^*$.
It is important to note that the split in the $(\alpha, \beta)$ pair is done in such a manner that each subtree gets the floor and the ceiling of the one of the quantities. Moreover, the labeling of the cache node leaves is such that for a given node $u$, $|\dsZ(u_l) \cap \dsZ(u_r)|$ is as small as possible. The underlying reason for such a labeling is to ensure that the condition of Claim \ref{clm:Betahat_Equal_Beta} doesn't hold for any $u \in \calT$.

Following, the construction of the tree, the second phase of the algorithm labels each of the delivery phase nodes, so that the computed lower bound is $L = \alpha \beta$. In this step we use $N=\alpha\beta$ files (see the procedure discussed in the proof of Corollary \ref{corollay:sat_rem}). In the third and final phase of the algorithm we modify the instance so that for any node $u \in \calT$, we have that either $\Gamma_l \subseteq \Gamma_r$ or $\Gamma_r \subseteq \Gamma_l$; we use  Claim \ref{clm:Gamma_right_subset_Gamma_left} to achieve this. 
In the beginning all recovered files in the constructed instance are distinct so that $\Gamma(u_l) \cap \Gamma(u_r) = \emptyset$ for all nodes $u$. W.l.o.g. assume that $|\Gamma(u_r)| \leq |\Gamma(u_l)|$. An application of Claim \ref{clm:Gamma_right_subset_Gamma_left} will thus cause a significant reduction in the number of files that are used. The following lemma quantifies this reduction.
\begin{lemma}
\label{lem:UpBoundNsat} For given $\alpha$, $\beta$ and $K$ if $\beta \leq K$ then,
\beqa
N_{sat}(\alpha, \beta, K) \leq \left\lfloor \frac{2\alpha \beta + \alpha + \beta}{3} \right\rfloor.
\eeqa
\end{lemma}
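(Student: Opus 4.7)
The plan is to bound the number of files $N^*(\alpha, \beta)$ used by the specific construction in Algorithm~\ref{Alg:SubOptimalInstance} run on $(\alpha,\beta,K)$ with $\beta \leq K$. The algorithm first builds an instance $P_0$ with $N_0 = \alpha\beta$ distinct files achieving $L_0 = \alpha\beta$ (exactly the construction from the proof of Corollary~\ref{corollay:sat_rem}, valid because $\beta \leq K$), and then invokes Claim~\ref{clm:Gamma_right_subset_Gamma_left}, which can only reduce the file count and cannot decrease $L$; combined with the matching upper bound $L \leq \alpha\min(\beta,K) = \alpha\beta$, this shows that the final instance still achieves $L = \alpha\beta$. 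Hence $N_{sat}(\alpha,\beta,K) \leq N^*(\alpha,\beta)$, and it suffices to prove $N^*(\alpha,\beta) \leq (2\alpha\beta + \alpha + \beta)/3$; the floor in the statement is then automatic since $N^*(\alpha,\beta)$ is an integer.

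Next I would extract a recursion for $N^*$ by examining the last edge $(u^*,v^*)$. The two subtrees $\calT_{u^*(l)}$ and $\calT_{u^*(r)}$ have parameters $(\alpha_l,\beta_l) = (\lceil\alpha/2\rceil, \lfloor\beta/2\rfloor)$ and $(\alpha_r,\beta_r) = (\lfloor\alpha/2\rfloor, \lceil\beta/2\rceil)$. Because every recoverable pair within each subtree is assigned a distinct file in $P_0$ (and Claim~\ref{clm:Gamma_right_subset_Gamma_left} preserves the lower bound), the subtree lower bounds are $L^*_l = \alpha_l\beta_l$ and $L^*_r = \alpha_r\beta_r$, which together with $L = \alpha\beta$ forces $|W_{new}(u^*)| = \alpha\beta - \alpha_l\beta_l - \alpha_r\beta_r$. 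Since Claim~\ref{clm:Gamma_right_subset_Gamma_left} guarantees $\Gamma_r \subseteq \Gamma_l$ at every internal node (w.l.o.g.\ $|\Gamma_l| \geq |\Gamma_r|$) and $W_{new}(u^*)$ is disjoint from $\dsW(u^*) = \Gamma_l \cup \Gamma_r$, we obtain
\begin{align*}
N^*(\alpha, \beta) = \max\!\bigl( N^*(\alpha_l, \beta_l),\, N^*(\alpha_r, \beta_r)\bigr) + \bigl(\alpha\beta - \alpha_l\beta_l - \alpha_r\beta_r\bigr),
\end{align*}
with base cases $N^*(1,1) = 1$ and $N^*(1,0) = N^*(0,1) = 0$.

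The final step is to prove $N^*(\alpha,\beta) \leq (2\alpha\beta + \alpha + \beta)/3$ by strong induction on $\alpha + \beta$, splitting the inductive step into four cases according to the parities of $\alpha$ and $\beta$. A direct computation shows that $\alpha_l\beta_l + \alpha_r\beta_r$ equals $\alpha\beta/2$ unless both $\alpha$ and $\beta$ are odd, in which case it equals $(\alpha\beta-1)/2$. Plugging these values together with the inductive bound on whichever of $N^*(\alpha_l,\beta_l)$ or $N^*(\alpha_r,\beta_r)$ is larger into the recursion and simplifying yields the desired inequality in each of the four cases.

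The main obstacle is the both-odd case: writing $\alpha = 2a+1$ and $\beta = 2b+1$ we have $\alpha_l\beta_l = (a+1)b$ and $\alpha_r\beta_r = a(b+1)$, so the dominant subtree term depends on whether $a \geq b$ or $a < b$, and the two sub-cases must be verified separately. Fortunately the linear term $\alpha + \beta$ in the target bound provides just enough slack (the required inequality in either sub-case reduces to $4\min(a,b) \leq 6\min(a,b)$) to absorb this asymmetry, so the induction closes.
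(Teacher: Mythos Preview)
Your proposal is correct and reaches the same bound, but the argument is organized quite differently from the paper's. Both start identically: run Algorithm~\ref{Alg:SubOptimalInstance}, note that $P_0$ has $L_0=\alpha\beta$, and observe that the Claim~\ref{clm:Gamma_right_subset_Gamma_left} modification keeps $L=\alpha\beta$ while reducing the number of files to some $\hat N_{sat}$. From there the arguments diverge. You set up the two–term recursion $N^*(\alpha,\beta)=\max\bigl(N^*(\alpha_l,\beta_l),N^*(\alpha_r,\beta_r)\bigr)+(\alpha\beta-\alpha_l\beta_l-\alpha_r\beta_r)$ and close it by strong induction on $\alpha+\beta$ with a four–way parity split; the both–odd case is the only delicate one and your observation that the slack reduces to $4\min(a,b)\le 6\min(a,b)$ is exactly right. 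The paper instead unrolls the recursion along a single \emph{saturation path} $u_0\prec u_1\prec\cdots\prec u_t=u^*$ (always entering the subtree with the larger $\Gamma$), obtains the closed form $\hat N_{sat}=\alpha\beta-\sum_i a(w_i)b(w_i)$, and then uses the uniform inequalities $a(u_i)\le a(w_i)+1$ and $b(u_i)\le b(w_i)+1$ (direct consequences of the $\lceil\cdot/2\rceil$, $\lfloor\cdot/2\rfloor$ split) together with $\sum_i a(w_i)\le\alpha$, $\sum_i b(w_i)\le\beta$ to get $\sum_i a(w_i)b(w_i)\ge(\alpha\beta-\alpha-\beta)/3$. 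Your induction is arguably cleaner and entirely self–contained; the paper's path argument avoids the parity case analysis and makes it more transparent why the constant $2/3$ appears. One small point worth tightening in your write-up: the identification $\max(|\Gamma_l|,|\Gamma_r|)=\max(N^*(\alpha_l,\beta_l),N^*(\alpha_r,\beta_r))$ relies on the recursive structure of Algorithm~\ref{Alg:SubOptimalInstance} (each subtree is itself an instance of the algorithm on the sub-parameters) together with the fact that Claim~\ref{clm:Gamma_right_subset_Gamma_left} processes all nodes inside a subtree before touching its root; you use this implicitly, and a one-line remark would make the recursion fully justified.
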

\begin{proof}
We use Algorithm \ref{Alg:SubOptimalInstance} to generate problem instance $P(\calT, \alpha, \beta, L, \hat{N}_{sat},K)$ so that $L = \alpha \beta$. By the definition of the saturation number we have $N_{sat}(\alpha, \beta, K) \leq \hat{N}_{sat} $ hence we just need to show that $\hat{N}_{sat} \leq \frac{2\alpha \beta + \alpha + \beta}{3}$.

First, we need to show that $L=\alpha \beta$. By line \ref{alglin:dlv_d_ass} of the algorithm the file $W_{(t-1)\beta+r}$ is recoverable in instance $P_0$ by the pair $(\dsD(v_t),\dsZ(v_{\alpha+r}))$ or equivalently $\Delta(v_t,v_{\alpha+r})=W_{(t-1)\beta+r}$ for $1\leq t \leq \alpha$ and $1 \leq r \leq \beta$. On the other hand, $\dsW(v^*) = \cup_{t=1}^\alpha \cup_{r=1}^\beta \Delta(v_t,v_{\alpha+r}) $ therefore $\dsW(v^*) = \{W_1, \ldots, W_{\alpha\beta} \}$. Recall that $\dsW(v^*) = \cup_{u \in \calT_0} W_{new}(u)$ and $L_0=\sum_{u \in \calT_0}|W_{new}(u)|$ so we have $L_0 \geq |\dsW(v^*)| = \alpha \beta$. But $L_0 \leq \alpha \beta$, by Corollary \ref{corollary:Lvalue}, therefore $L_0 = \alpha \beta$. In phase III of the Algorithm (Modify Delivery Phase Signals) using Claim \ref{clm:Gamma_right_subset_Gamma_left}, we have $L \geq L_{0}$ and since $L \leq \alpha \beta$ and $L_0 = \alpha \beta$ thus $L = \alpha \beta$.

W.l.o.g we set left incoming node such that $\Gamma(u_r) \subseteq \Gamma(u_l)$. Starting from the root node $v^*$, we let the set $\{u_{0}, u_{1}, \ldots,u_{t} \}$ and $\{w_0, \ldots, w_{t-1} \}$ to be the left and right incoming nodes respectively so that $u_{i}$ is topologically higher than $u_{j}$ for $i < j$, $u_{t} = u^*$ and $u_{0}$ to be a leaf. This is depicted in Fig. \ref{Fig:SaturationPath}. Recall that $\Gamma(u) = W_{new}(u) \cup \Gamma(u_l) \cup \Gamma(u_r)$ and $W_{new}(u) \cap \left( \Gamma(u_l) \cup \Gamma(u_r) \right) = \emptyset$ for any $u \in \calT$. Therefore, recursively we have,
\begin{eqnarray}
\label{eq:Nhat_satGamma}
\hat{N}_{sat} &=& |\Gamma(v^*)| = |\Gamma(u_t)|, \nonumber \\
&=& |W_{new}(u_t)| + |\Gamma(u_{t-1})|, \nonumber \\
&=& \sum_{i=1}^t |W_{new}(u_i)|,
\end{eqnarray}
where we used $W_{new}(u_0) = \emptyset$ since $u_0$ is a leaf.

In Algorithm \ref{Alg:SubOptimalInstance}, $a(u)$ and $b(u)$ denote the number of delivery phase nodes and the number cache nodes, respectively in the subtree rooted at $u$.
Note that by definition, we have
\begin{align*}
L &= |W_{new}(u_t)|+\sum_{u \in \calT_{u_{t-1}}}|W_{new}(u)|+\sum_{u \in \calT_{w_{t-1}}}|W_{new}(u)|.
\end{align*}
Using Corollary \ref{corollary:Lvalue} we conclude that $\sum_{u \in \calT_{u_{t-1}}}|W_{new}(u)| \leq a(u_{t-1})b(u_{t-1})$ and $\sum_{u \in \calT_{w_{t-1}}}|W_{new}(u)| \leq a(w_{t-1}) b(w_{t-1})$.
Similarly, using Claim \ref{clm:WnewBound}, we have that $|W_{new}(u_t)| \leq a(u_{t-1})b(w_{t-1}) + a(w_{t-1})b(u_{t-1})$. In fact, all these inequalities are met with equality. This can be seen as follows. An application of Claim \ref{clm:Gamma_right_subset_Gamma_left} does not change the lower bound, which implies that $L = \alpha\beta = a(u_t)b(u_t)$. But, $a(u_t) = a(u_{t-1}) + a(w_{t-1})$ and $b(u_t) =  b(u_{t-1}) + b(w_{t-1})$ so that
\begin{align*}
L &= a(u_{t-1})b(w_{t-1}) + a(w_{t-1})b(u_{t-1}) + a(u_{t-1})b(u_{t-1}) + a(w_{t-1})b(w_{t-1}).
\end{align*}
An inductive argument can be made to show a similar result for $u_i$, $i = 1, \dots, t-1$.

Using these results and the equality in (\ref{eq:Nhat_satGamma}) yields,
\begin{eqnarray}
\label{eq:Nhat_satAlphBet}
\alpha \beta &=& L, \nonumber \\
&=& \sum_{u \in \calT} |W_{new}(u)|,\nonumber \\
&=& \sum_{i=0}^t |W_{new}(u_i)| + \sum_{i=0}^{t-1} \sum_{u \in \calT_{w_i}} |W_{new}(u)|, \nonumber \\
&=& \hat{N}_{sat} + \sum_{i=0}^{t-1} \left(a(w_i) b(w_i) \right), \nonumber \\
\Rightarrow \hat{N}_{sat} &=& \alpha \beta - \sum_{i=0}^{t-1} a(w_i) b(w_i).
\end{eqnarray}
Considering our setting for $a(u)$ and $b(u)$ in the line \ref{alglin:Set_au_bu} of Algorithm \ref{Alg:SubOptimalInstance} we have
\beq
\label{eq:alpha_ui_alpha_ui1}
a(u_{i+1}) = a(u_i) + a(w_i), ~~ b(u_{i+1}) = b(u_i) + b(w_i),
\eeq
for $0 \leq i \leq t-1$ and either $(a(u_i), b(u_i)) = \left(\lceil a(u_{i+1})/2 \rceil, \lfloor b(u_{i+1})/2 \rfloor \right)$ or $(a(u_i), b(u_i)) = \left(\lfloor a(u_{i+1})/2 \rfloor, \lceil b(u_{i+1})/2 \rceil \right)$. In any case using eq. (\ref{eq:alpha_ui_alpha_ui1}) we have
\beqa
a(u_i) &\leq & \lceil a(u_{i+1})/2 \rceil, \\
&\leq & \frac{a(u_{i+1}) +1}{2}, \\
& = & \frac{a(u_{i}) + a(w_i) +1}{2},\\
\Rightarrow a(u_i) &\leq& a(w_i) + 1.
\eeqa
By a similar argument we have $b(u_i) \leq b(w_i)+1$. Using eq. (\ref{eq:alpha_ui_alpha_ui1}) recursively, it is easy to see that $\alpha = a(u_0)+ \sum_{i=0}^{t-1} a(w_i)$ and $\beta = b(u_0)+ \sum_{i=0}^{t-1} b(w_i)$.
Therefore, using eq. (\ref{eq:Nhat_satAlphBet}) and (\ref{eq:Nhat_satGamma}),
\beqa
\hat{N}_{sat} &=& \alpha \beta - \sum_{i=0}^{t-1} a(w_i) b(w_i), \\
&=& \sum_{i=0}^{t-1} \left( a(u_i) b(w_i) + a(w_i) b(u_i) \right), \\
&\leq & \sum_{i=0}^{t-1} \left( [a(w_i)+1] b(w_i) + a(w_i) [b(w_i)+1] \right), \\
&\leq & \sum_{i=0}^{t-1} \left( 2a(w_i)b(w_i) + a(w_i) + b(w_i) \right), \\
&\leq & \alpha + \beta +  2\sum_{i=0}^{t-1} a(w_i)b(w_i) , \\
\Rightarrow \sum_{i=0}^{t-1} a(w_i)b(w_i) &\geq& \frac{\alpha \beta - \alpha - \beta}{3}.
\eeqa
Finally, using the above inequality and eq. (\ref{eq:Nhat_satAlphBet}), we have
\beqa
N_{sat}(\alpha, \beta, K) &\leq & \hat{N}_{sat}, \\
&=& \alpha \beta - \sum_{i=0}^{t-1} \alpha(w_i) \beta(w_i), \\
&\leq &  \alpha \beta - \frac{\alpha \beta - \alpha - \beta}{3} = \frac{2\alpha \beta + \alpha + \beta}{3}.
\eeqa
Furthermore as $N_{sat}(\alpha, \beta, K)$ is an integer we conclude that
\begin{align*}
N_{sat}(\alpha, \beta, K) & \leq \left \lfloor \frac{2\alpha \beta + \alpha + \beta}{3} \right \rfloor.
\end{align*}
\end{proof}
The aforementioned proposed upper bound on the saturation number is tight. To see this, let consider $\beta = 1$. It is easy to see that $N_{sat}(\alpha, 1, K) = \alpha$ and using Lemma \ref{lem:UpBoundNsat} we have $N_{sat} \leq \lfloor \alpha + 1/3 \rfloor = \alpha$.

\begin{figure}[!t]
\centering
%
%
%
\includegraphics[scale=0.5]{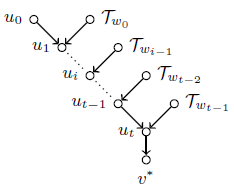}
\caption{{\small Saturation path}}
\label{Fig:SaturationPath}
\end{figure}


\section{Multiplicative Gap between upper and lower bounds}
\label{sec:mult_gap}
We now show that for any set of problem parameters, our proposed lower bound and the achievable rate of \cite{maddahN14} in eq. (\ref{eq:coded_rate}) are within a factor of four, i.e., we show the following result.
\begin{theorem}
\label{thm:mult_gap}
Consider a coded caching system with $N$ files and $K$ users each with a normalized cache size $M$. Then,
\beqa
\gamma(M) = \frac{R_c(M)}{R^\star(M)} \leq 4,
\eeqa
for $0 \leq M \leq N$.
\end{theorem}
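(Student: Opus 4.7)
The plan is to show that the family of lower bounds produced by Corollary \ref{corollary:Lvalue} contains, for every $(N,K)$ and every $M \in [0,N]$, at least one bound within a factor of four of the achievable rate $R_c(M)$. A useful initial simplification is that $R^\star(M)$ is convex and non-increasing in $M$ (by a standard time-sharing argument between achievable $(M,R)$ pairs), while $R_c(M)$ is piecewise linear in $M$ with breakpoints at $M_t = tN/K$ for $t = 0, 1, \ldots, K$. It therefore suffices in essence to exhibit the factor-of-four bound at each corner $M_t$ and then extend to all $M$ via convexity and interpolation.

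At corner $M_t$, the achievable rate equals $(K-t)/(1+t)$ when $N \ge K$, and equals $N - t$ on the branch where the $N/K$ term dominates in $R_c$. To match this, I would invoke Corollary \ref{corollary:Lvalue} with $(\alpha,\beta)$ chosen roughly as $\alpha \approx K/(1+t)$ and $\beta \approx t+1$ (with modest adjustments when $N < K$ or when $t$ is near $\min(N,K)$), and with the symmetric split $\alpha'_l = \lceil \alpha/2\rceil$, $\beta'_l = \lfloor \beta/2\rfloor$, $\alpha'_r = \alpha - \alpha'_l$, $\beta'_r = \beta - \beta'_l$ that is already used inside Algorithm \ref{Alg:SubOptimalInstance} and Lemma \ref{lem:UpBoundNsat}. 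Substituting the upper bound $N_{sat}(\alpha', \beta', K) \le \lfloor (2\alpha'\beta' + \alpha' + \beta')/3 \rfloor$ into the corollary produces a closed-form bound on $\alpha R^\star + \beta M$ which, after solving for $R^\star$, yields a piecewise linear lower bound of the form $R^\star(M) \ge (L(\alpha,\beta) - \beta M)/\alpha$. A direct computation then verifies that $R_c(M_t)/R^\star(M_t) \le 4$ at each corner.

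The main obstacle is the case analysis. One must ensure the chosen $(\alpha,\beta)$ satisfies $N$ large enough that both saturated sub-instances $P'_l$ and $P'_r$ exist, i.e., $N \ge \max(N_{sat}(\alpha'_l, \beta'_l, K), N_{sat}(\alpha'_r, \beta'_r, K))$; otherwise the correction term $N - N'_0$ in Corollary \ref{corollary:Lvalue} becomes negative and the bound degenerates to the trivial $\alpha\min(\beta,K)$ estimate, which may be too weak. This forces a partition of the $(N,K,M)$ parameter space into several regimes, for instance $N \ge K$ versus $N < K$, and small $t$ versus $t$ near $\min(N,K)$, together with a careful boundary check at $t = 0$, $t = \min(N,K)$, and $N = K$. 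Equally important, the constant $4$ (rather than the $8$ or $12$ of prior work) relies essentially on the improved coefficient $2/3$ in the $N_{sat}$ bound of Lemma \ref{lem:UpBoundNsat}: the effective ``gain'' $\alpha\min(\beta,K) - N'_0 \ge \alpha\beta - \lfloor(2\alpha\beta + \alpha + \beta)/3\rfloor \approx \alpha\beta/3$ is roughly three times larger than what a naive $N_{sat} \le \alpha\beta$ estimate would give, and it is precisely this factor of three that drives the multiplicative gap down to $4$ in the regime where $R_c(M)$ is decaying most steeply. Once the regime-by-regime constants are tabulated, taking the worst among them completes the proof.
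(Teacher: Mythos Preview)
Your overall framework is the right one and matches the paper: the proof does rest on Corollary \ref{corollary:Lvalue} with the symmetric split and on the $N_{sat}$ bound of Lemma \ref{lem:UpBoundNsat}, and the $2/3$ coefficient there is indeed what drives the constant down to $4$. However, the reduction you propose in the first paragraph has a genuine gap, and your specific $(\alpha,\beta)$ choices differ from what the paper actually uses.

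The gap is the ``check at corners and interpolate by convexity'' step. Convexity of $R^\star$ says $R^\star$ lies \emph{below} its chords, not above. So knowing $R^\star(M_t)\ge R_c(M_t)/4$ and $R^\star(M_{t+1})\ge R_c(M_{t+1})/4$ does \emph{not} imply $R^\star(M)\ge R_c(M)/4$ on $[M_t,M_{t+1}]$; since $R_c/4$ is linear there (it equals the chord of $R_c/4$), convexity of $R^\star$ allows $R^\star$ to dip strictly below $R_c/4$ in the interior. You could salvage the idea by using a \emph{single} linear lower bound $\ell(M)=(L-\beta M)/\alpha$ on each interval and verifying $R_c\le 4\ell$ at both endpoints, but that is not what ``convexity of $R^\star$'' gives you, and it amounts to a different (interval-by-interval) case analysis.

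The paper avoids this issue by not working corner-by-corner at all. It partitions the $M$-axis into three regions, $0\le M\le\max(1,N/K)$, $\max(1,N/K)<M\le N/2$, and $N/2<M\le N$, and in each region makes an explicit $(\alpha,\beta)$ choice and compares directly to a crude upper bound on $R_c(M)$ valid throughout that region (for instance $R_c(M)\le N/M-1/2$ in the middle region). In the middle region the choice is $M$-dependent, $\alpha=\lfloor 2M\rfloor$, $\beta=\lfloor N/(2M)\rfloor$, so that $\alpha\beta\approx N$; this is quite different from your suggestion $\alpha\approx K/(1+t)$, $\beta\approx t+1$ with $\alpha\beta\approx K$. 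The paper's scaling keeps $N_0\approx 2N/3$ and extracts roughly $N/3$ from the $N-N_0$ term, which after dividing by $2\alpha\approx 4M$ yields the needed $N/(4M)$ piece matching $R_c/4$. Your scaling would require separately controlling the $N-N_0$ term and the $\beta M/\alpha$ penalty, and it is not clear it closes without further work. The paper also disposes of $\min(N,K)\le 4$ at the outset via the trivial bound $R^\star\ge 1-M/N$ (the $\alpha=N$, $\beta=1$ instance), which you should include as well.
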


The key idea in proving this result is to exploit the analytical upper bound on the saturation number $N_{sat}(\alpha, \beta, K)$ proposed in Section \ref{sec:subsec_anlytic_bound_Nsat}. For a given $N$ and $K$, we consider three distinct regions of $M$. For each range, an appropriate $(\alpha, \beta)$ pair allows us obtain a lower bound on the rate that is within a factor of four of the achievable rate. 

%
\begin{proof}

We use Corollary \ref{corollary:Lvalue} with the $2\alpha$ and $2\beta$, so that $P'_l$ and $P'_r$ have parameters $\alpha$ and $\beta$.
 This gives us the following lower bound.
\beqa
2\alpha R^\star(M) + 2\beta M \geq \min\left(2\alpha\min(2\beta,K), 2\alpha \beta + [N-N_0]^{+} \right),
\eeqa
Moreover, we restrict $2\beta \leq K$ so that,
\begin{align}
2\alpha R^\star(M) + 2\beta M &\geq \min\left(4\alpha\beta, 2\alpha \beta +N-N_0 \right) \nonumber\\
\implies R^\star(M) &\geq \min\left(2\beta, \beta + \frac{N-N_0}{2\alpha} \right) - \frac{\beta}{\alpha}M. \label{eq:ProofGapMainbound}
\end{align}

Our first observation is that for $\min(N,K) \leq 4$, the bound is easily seen to be true. Towards this end, by setting $\alpha = N, \beta = 1$ in (\ref{eq:ProofGapMainbound}), we obtain
\begin{align*}
R^\star(M) \geq 1 - \frac{M}{N}.
\end{align*}
where we used $N_{sat}(N, 1, K) = N$. Furthermore, from eq. (\ref{eq:coded_rate}),
\beqa
R_c(M) \leq \min(N,K) \left(1-M/N\right),
\eeqa
This means that $\gamma(M) = \min(N,K) \leq 4$ for $\min(N,K) \leq 4$.

Thus, in the subsequent discussion, we only consider $\min(N,K) \geq 5$.
As in \cite{maddahN14}, we divide the $M$-axis to three separated regions. For given $M$, we explore the space of $(\alpha, \beta)$ pairs to obtain an appropriate lower bound that allows us to show the multiplicative gap of four. 
\subsection{Region I: $0 \leq M \leq \max(1,N/K)$}
\label{sec:RegionI}
First, we consider the range $0 \leq M \leq 1$. In eq. (\ref{eq:ProofGapMainbound}) we set $\alpha = 1,\beta = \lfloor \min(N,K)/2 \rfloor$. By such a setting we have $2\beta \leq \min(N,K) \leq K$ and $N \geq N_{sat}(1,\beta, K) = \beta$. Therefore for $M \leq 1$,
\begin{align*}
R^\star(M) &\geq \min \left(2\beta, \frac{N+\beta}{2}\right) - \beta M\\
&\stackrel{(a)}{\geq}  \min \left(\beta, \frac{N-\beta}{2}\right)\\
&\stackrel{(b)}{\geq}  \min \left(\frac{\min(N,K)-1}{2}, \frac{N-\min(N,K)/2}{2}\right)\\
&\stackrel{(c)}{\geq}  \min \left(\frac{\min(N,K)-1}{2}, \frac{\min(N,K)}{4}\right)\\
&\stackrel{(d)}{\geq}  \frac{\min(N,K)}{4}\\
&\geq \frac{\min(N,K)(1 - M/N)}{4}\\
&\geq R_c(M)/4.
\end{align*}
Here, $(a)$ holds since $M \leq 1$, $(b)$ holds since $(\min(N,K)-1)/2 \leq \beta \leq \min(N,K)/2$, $(c)$ holds since $N \geq \min(N,K)$, and $(d)$ holds since $\min(N,K) \geq 2$.

Next, consider the range $M \in [1, N/K]$. Note that we only need to consider the scenario where $N \geq K$. The achievable rate $R_c(M)$ in this interval is upper bounded by the convex combination of the rates $R_c(0)$ and $R_c(N/K)$ so that
$$R_c(M) \leq \lambda R_c(N/K) + (1-\lambda)R_c(0) = K(1-\lambda/2)-\lambda/2,$$
where $\lambda = KM/N$. Now, we set $\alpha = \lceil N/K \rceil,\beta = \lfloor K/2 \rfloor$ so that $\alpha \beta \leq (N/K+1)K/2 = N/2+K/2 \leq N$. As, $N_{sat}(\alpha,\beta,K) \leq \alpha \beta$, this means that $N \geq N_{sat}(\alpha, \beta,K)$. In addition, note that $2 \beta \leq K$. Therefore, we can use eq. (\ref{eq:ProofGapMainbound}) to obtain
\begin{align}
\label{eq:proof_mult_gap_reg_I_p2}
R^\star(M) &\geq  \min\left\{2\beta,~ \beta + \frac{N - N_{sat}(\alpha, \beta,K)}{2\alpha} \right\} - \frac{\beta}{\alpha}M, \nonumber \\
&\stackrel{(a)}{\geq}  \min\left\{2\beta \left(1-\frac{M}{2\alpha}\right),~ \frac{2\beta}{3} + \frac{N - 2\beta M}{2\alpha} - \frac{\beta}{6\alpha} - \frac{1}{6} \right\} , \nonumber \\
&\stackrel{(b)}{\geq}  \min\left\{(K-1) \left(1-\frac{KM}{2N}\right),~ \frac{\beta}{2} + \frac{N - 2\beta M}{4N/K} -  \frac{1}{6} \right\} , \nonumber \\
& \geq   \min\left\{\frac{K}{2} \left(1-\frac{\lambda}{2}\right),~ \frac{\beta}{2}\left(1-\lambda \right) + \frac{K}{4} -  \frac{1}{6} \right\} , \nonumber \\
& \stackrel{(c)}{\geq} \min\left\{\frac{R_c(M)}{2},~ \frac{K}{2}\left(1-\frac{\lambda}{2}\right) -\frac{(1-\lambda)}{4}-  \frac{1}{6} \right\} , \nonumber \\
&\stackrel{(d)}{\geq}  \min \left\{ \frac{R_c(M)}{2},~ \frac{R_c(M)}{4} + \frac{(K-3)}{4}\left(1-\frac{\lambda}{2}\right) +  \frac{1}{3} \right\} , \nonumber\\
&\stackrel{(e)}\geq  R_c(M)/4,
\end{align}
where in $(a)$ we used Lemma \ref{lem:UpBoundNsat} to bound $N_{sat}(\alpha, \beta,K$), in $(b)$ we used $N-2\beta M \geq 0$, $1 \leq \alpha \leq N/K+1\leq 2N/K$, $(K-1)/2 \leq \beta$, and in $(c)$ we used $\beta \geq (K-1)/2$, $\lambda = KM/N$ and the expression for the upper bound on $R_c(M)$ above. Next, $(d)$ holds because of the achievable rate bound and $(e)$ holds since $\min(N,K) \geq 5$. Therefore, $\gamma(M) \leq 4$ for $M \in [1, N/K]$ and $N \geq K$. Thus, we conclude that we have $\gamma(M) \leq 4$ for $M \in [0, \max(1,N/K)]$.
\subsection{Region II: $\max(1,N/K) < M \leq N/2$}
\label{sec:RegionII}
For any $M \in [\max(N/K,1), N/2]$ we define $t_0 = \lfloor KM/N \rfloor$ so that $t_0N/K \leq M \leq (t_0+1)N/K$. Since $M \geq N/K$ thus $t_0 \geq 1$. Using eq. (\ref{eq:coded_rate}), it turns out that,
\begin{align*}
R_c(M) &\leq  R_c(t_0N/K),\\
&= \frac{K}{t_0+1} - \frac{t_0}{t_0+1}, \\
&\stackrel{(a)}{\leq}  \frac{K}{KM/N} - \frac{1}{2}, \\
&= \frac{N}{M} - \frac{1}{2},
\end{align*}
where $(a)$ holds since $t_0+1 \geq KM/N$ and $t_0 \geq 1$.


Now, consider setting $\alpha =\lfloor 2M \rfloor$ and $\beta = \lfloor N/2M \rfloor$. With this setting we have $\alpha \geq 2$ (since $M \geq 1$), $\beta \geq 1$ (since $M \leq N/2$), and $\beta \leq N/2M \leq K/2$ (since $M \geq N/K$). Furthermore, since $\alpha \beta \leq 2M \times N/2M =N$ and $N_{sat}(\alpha, \beta, K) \leq \alpha \beta$ therefore $N \geq N_{sat}(\alpha, \beta,K)$. This together with $2\beta \leq K$ implies that such a setting is a valid setting to use (\ref{eq:ProofGapMainbound}). Therefore, using Lemma \ref{lem:UpBoundNsat} to bound $N_{sat}(\alpha, \beta,K)$, we have
\beqa
R^\star(M) &\geq & \min\{2\beta, \frac{2\beta}{3} + \frac{N}{2\alpha} - \frac{\beta}{6\alpha} - \frac{1}{6} \} - \frac{\beta}{\alpha}M.
\eeqa

We claim that $2\beta \geq 2\beta/3 + N/2\alpha - \beta/6\alpha - 1/6$ or equivalently $8\alpha \beta + \alpha + \beta \geq 3N$. This can be seen as follows. When, $N/4 < M \leq N/2$ we have $\alpha > N/2, \beta = 1$, so that this holds. On the other hand when $\max(1, N/K) < M \leq N/4$, we have $\alpha \geq 2M -1$, $\beta \geq N/2M - 1$, so that $8\alpha\beta + \alpha + \beta \geq 8N - 7(N/2M+2M) + 6$. It can been seen that $N/2M + 2M \leq N/2+2$ for $1 \leq M \leq N/4$ therefore $8\alpha\beta + \alpha + \beta \geq 9N/2 -8 \geq 3N$ for $N\geq 6$. For $N = 5$, the claim trivially holds since $\alpha \geq 2, \beta \geq 1$ so that $8\alpha\beta + \alpha +\beta \geq 19 \geq 3\times N = 15$.

Thus, we have
\beqa
R^\star(M) &\geq & \frac{2\beta}{3} + \frac{N-2\beta M}{2\alpha} - \frac{\beta}{6\alpha} - \frac{1}{6} , \\
& \stackrel{(a)}{\geq} & \frac{7\beta}{12} + \frac{N-2\beta M}{4M} - \frac{1}{6}, \\
&=& \frac{N}{4M} + \frac{\beta}{12} - \frac{1}{6}, \\
& \stackrel{(b)}{\geq} & \frac{N}{4M}  - \frac{1}{12}, \\
&\geq & \frac{N}{4M}  - \frac{1}{8}\\
& \geq & \frac{R_c(M)}{4},
\eeqa
where in $(a)$ we used $N - 2\beta M \geq 0$, $\alpha \geq 2$ and $\alpha \leq 2M$ and in $(b)$ we used $\beta \geq 1$. Eventually, $\gamma(M) \leq 4$ for $\max(N/K,1) \leq M \leq N/2$.
\subsection{Region III: $N/2 < M \leq N$}
\label{sec:RegionIII}
Let $t_0 = \lfloor K/2 \rfloor$ so that $M \geq t_0N/K$ for $M \in (N/2, N]$. For any $M \in (N/2,N]$ the convex combination of rate $R_c(t_0N/K)$ and $R_c(N)$ gives us $R_c(M) \leq \lambda R_c(t_0 N/K) + (1-\lambda)R_c(N) = \lambda R_c(t_0N/K)$ where $M = \lambda t_0 N/K + (1-\lambda)N$ or equivalently $\lambda = (1-M/N)/(1-t_0/K)$. According to this and eq. (\ref{eq:coded_rate}) we observe that,
\begin{align*}
R_c(M) &\leq \lambda R_c(t_0N/K),\\
&= \frac{(1-M/N)}{(1-t_0/K)} \frac{(K-t_0)}{(t_0+1)}, \\
&= \frac{K(1-M/N)}{(1+t_0)}, \\
&\stackrel{(a)}{\leq}  \frac{K(1-M/N)}{K/2}, \\
&= 2(1-M/N),
\end{align*}
where $(a)$ holds since $1+t_0 = 1 + \lfloor K/2 \rfloor \geq K/2$.

Now if we set $\alpha = N$ and $\beta = 1$ in (\ref{eq:ProofGapMainbound}) we obtain
\beqa
R^\star(M) &\geq & 1 - M/N \\
&\geq & \frac{R_c(M)}{2}.
\eeqa
This implies that $\gamma(M) \leq 2 \leq 4$ for $M \in [N/2, N]$ and concludes the proof.
\end{proof}

\section{Lower bounds on the other variants of the coded caching problem}
\label{sec:variants}
In addition to the original coded caching problem there are many variants of the problem including coded caching with multiple requests \cite{jiTLC14}, decentralized coded caching \cite{maddahN14mr_tradeoff} and caching in device to device wireless networks \cite{jiCM13}. Our proposed strategy applies with minor changes for these problems. 

\subsection{Caching in device to device wireless networks}
Wireless device to device (D2D) networks where communication is limited to be single-hop are studied in \cite{jiCM13}. There are $K$ users who are the nodes of the network. Each user has a cache of size $M$ and $N$ files are stored across the different user caches. Thus, in this setting we necessarily have $KM \geq N$. As in the coded caching problem there are placement and delivery phases. In the placement phase the caches are populated from a server; this phase does not depend on the user demands. The server then leaves the network. We let $Z_i$ represent the cache content of the $i$-th user. In the delivery phase each user requests a file and the remaining users are informed about this request. Based on the requests, each user broadcasts a signal so that all demands can be satisfied. We denote by $X^{(i)}_{d_1,\ldots,d_K}$ the signal that is broadcasted in the delivery phase by the $i$-th user when the $j$-th user requests file $d_j \in [N]$ for $1\leq j \leq K$. The delivery signal sent by each user is function of its cache content so that $H(X^{(i)}_{d_1,\ldots,d_K}|Z_i)=0$. We also denote by $X_{d_1,\ldots,d_K}$ the set of signals sent by all the users, i.e., $X_{d_1,\ldots,d_K} = \{X^{(1)}_{d_1,\ldots,d_K},\ldots,X^{(K)}_{d_1,\ldots,d_K} \}$. The rate of the signal that the $i$-th user sends in the delivery phase is denoted by $R_{i,d_1,\ldots,d_K}(M)$. We are interested in lower bounding the worst case rate that denoted by $R^\star(M)=K\max_{i,d_1,\ldots,d_k} R_{i,d_1,\ldots,d_K}(M)$.

The cut-set technique and Han's inequality have been studied in \cite{jiCM13} and \cite{senguptabeyondd2d} respectively to establish lower bound on $R^\star(M)$. The multiplicative gap established in \cite{jiCM13} depends on $M$ and is not constant, whereas \cite{senguptabeyondd2d} shows a gap of at most $8$.

The D2D setting is almost exactly the same as the coded caching setting studied in our work. Our technique for obtaining lower bounds is applicable here with essentially no change and we can use Theorem \ref{thm:Lvalue} and its corollary. Furthermore, since $H(X^{(i)}_{d_1,\ldots,d_K}|Z_i)=0$ we can get lower bounds that are somewhat tighter. By treating $X_{d_1,\ldots,d_K}$ as the delivery signal of the original coded caching problem, we can our lower bound to show that the multiplicative gap between the achievable rate in \cite{jiCM13} and our proposed lower bounds is at most $4$. The proof is quite similar to that of Theorem \ref{thm:mult_gap} and is omitted.

\subsection{Coded caching with multiple requests}
Coded caching with multiple requests is variation of the original problem in which each user requests $l$ files from the server in the delivery phase. A straightforward achievable scheme in this setting is to apply the scheme of \cite{maddahN14} $l$ times. This problem is investigated in \cite{jiTLC14} where a new achievable scheme is proposed based on multiple groupcast index coding. Furthermore, \cite{jiTLC14} introduce a cut-set type lower bound and show that their scheme is within a multiplicative factor of $18$ to the lower bound. 
In contrast, using our approach we can demonstrate a multiplicative gap of $4$ for this problem as well.


In this setting the only difference with respect to the original problem is that from a cache signal $Z_i$ and delivery signal $X_{d_1,\ldots,d_K}$ one can recover up to $l$ distinct files.
Thus, $d_i$ is a vector of size $l$ containing information about the $l$ files requested by $i$-th user. Therefore, all statements we presented for the original problem are applicable here, bearing in mind that $Rec(Z_i,X_{d_1,\ldots,d_K})$ can be as large as $l$. 
For instance, an extension of eq. (\ref{eq:BoundOnPsi}) gives us $L \leq l \alpha \min(\beta,K)$.
Similarly, the saturation number $N_{sat}(\alpha,\beta,K,l)$ is defined as the minimum $N'$ among all problem instance $P(\calT, \alpha,\beta,L,N',K,l)$ so that $L=l\alpha \min(K,\beta)$. It is easy to verify that $N_{sat}(\alpha,\beta,K,l) \leq l \alpha \min(\beta,K)$ in a similar way. The following claim can be shown (we omit the proof as it very similar to the previous discussion).
\begin{claim}
\label{clm:LwrBndMultRqst}
Consider a coded caching system with a server containing $N$ files and $K$ users. Each user has a cache of size $M$ and demands $l$ files in the delivery phase. The following lower bound holds for $N \geq N_0$ where $N_0 = N_{sat}(\alpha, \beta, K,l)$,
\beqa
\alpha R^\star(M) + \beta M \geq \min\left(2l\alpha\min(\beta,K),~ l\alpha\min(\beta,K)+(N-N_0)/2) \right).
\eeqa
\end{claim}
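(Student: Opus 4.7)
The plan is to extend every component of the single-request machinery of Section \ref{sec:lower_bound} to the multiple-request setting, tracking the fact that each cache--delivery meeting now recovers up to $l$ files rather than a single one, and then invoke the extended analog of Corollary \ref{corollary:Lvalue} applied to a doubled outer instance.

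First, I would redefine a problem instance as $P(\calT, \alpha, \beta, L, N, K, l)$, retaining the leaf-labeling conventions but allowing $Rec(Z_j, X_{d_1,\ldots,d_K})$ to be the set of (at most $l$) files requested by user $j$. Algorithm \ref{Alg:Labeling}, the meeting-point definition, and the $\psi$-function construction apply verbatim with this richer $Rec(\cdot,\cdot)$. The validity proof (Lemma \ref{lemma:general_lower_bd} in the appendix) requires only one substantive change: each Fano step applied to a cache--delivery meeting now contributes $lF(1-\epsilon)$ rather than $F(1-\epsilon)$, because $l$ distinct files can be simultaneously decoded from the same pair $(Z_j, X_{d_1,\ldots,d_K})$. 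Dividing through by $F$ still produces an inequality of the form $\alpha R^\star + \beta M \geq L$.

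Next, I would check that the structural claims of Section \ref{sec:lower_bound} transfer with essentially the same proofs. Claims \ref{claim:incomingedgelimits}, \ref{clm:Betahat_Equal_Beta}, \ref{clm:bijective_mapping}, and \ref{clm:Gamma_right_subset_Gamma_left} are purely combinatorial manipulations of the tree and labels and do not rely on any ``one file per meeting'' hypothesis, so they carry over unchanged. The $\psi$-function analysis still equates $L$ with a sum over delivery/cache-node pairs, except the per-pair contribution now lies in $\{0,1,\dots,l\}$ rather than $\{0,1\}$, producing the generalized ceiling $L \leq l\alpha \min(\beta, K)$. The saturation number $N_{sat}(\alpha, \beta, K, l)$ is defined as the minimum $N$ that attains this ceiling, and is bounded above by $l\alpha\min(\beta,K)$. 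Claim \ref{clm:WnewBound} sharpens to $|W_{new}(u)| \leq \min\bigl(l\rho(u),\; N - |\Gamma_l \cup \Gamma_r|\bigr)$, and Lemma \ref{lemma:Increase_N_Increase_L} continues to hold: whenever $L$ is strictly below its ceiling, one of the $K$ demand coordinates at some delivery node can be redirected to a freshly added file, increasing $L$ by one without perturbing previously recovered files.

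With these extensions in place, the multiple-request analog of Theorem \ref{thm:Lvalue} and Corollary \ref{corollary:Lvalue} follows by the same induction on the tree: for an optimal atomic outer instance split into two saturated subtrees, the new-file contribution at the root edge equals $\min\bigl(l\tilde{\rho}(u^*),\; N - N_0'\bigr)$ with $N_0' = \max\bigl(N_{sat}(\alpha_l,\beta_l,K,l),\, N_{sat}(\alpha_r,\beta_r,K,l)\bigr)$. To obtain the stated bound, I specialize to outer parameters $(2\alpha, 2\beta)$ split into two identical saturated subinstances of parameters $(\alpha, \beta, K, l)$, each of which attains $L^*_l = L^*_r = l\alpha \min(\beta, K)$ with $N_0' = N_{sat}(\alpha,\beta,K,l) = N_0$. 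The extended corollary then reads
\[
2\alpha R^\star + 2\beta M \;\geq\; \min\bigl(2l\alpha\min(2\beta, K),\; 2l\alpha\min(\beta, K) + N - N_0\bigr),
\]
and dividing by $2$, together with $\min(2\beta, K) \geq \min(\beta, K)$ to rewrite the first argument of the min, yields the desired inequality. The main obstacle I expect is the extension of Lemma \ref{lemma:Increase_N_Increase_L} to $l$-vector demands: the single-request proof modifies one coordinate of one delivery label while preserving every other $\Delta$-value, and with $l$ simultaneously active demands per user one must verify that this local redirection does not silently collapse two previously distinct recoveries. Once that bookkeeping is settled, the remainder of the extension is essentially mechanical.
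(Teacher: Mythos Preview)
Your approach is exactly what the paper intends: it explicitly omits the proof as ``very similar to the previous discussion,'' and your plan---extend Lemma~\ref{lemma:general_lower_bd}, Claims~\ref{claim:incomingedgelimits}--\ref{clm:WnewBound}, Lemma~\ref{lemma:Increase_N_Increase_L}, and Corollary~\ref{corollary:Lvalue} to the $l$-request setting, then apply the extended corollary with outer parameters $(2\alpha,2\beta)$ split into two saturated $(\alpha,\beta)$ subinstances---is precisely that extension.

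There is, however, a genuine slip in your final step. After dividing by $2$ you correctly have
\[
\alpha R^\star + \beta M \;\ge\; \min\bigl(l\alpha\min(2\beta,K),\; l\alpha\min(\beta,K) + (N-N_0)/2\bigr),
\]
and you then invoke $\min(2\beta,K)\ge \min(\beta,K)$ to ``rewrite the first argument.'' That inequality is true but useless here: to reach the claim's first argument $2l\alpha\min(\beta,K)$ you would need $\min(2\beta,K)\ge 2\min(\beta,K)$, which is \emph{false} whenever $\beta > K/2$ (then $\min(2\beta,K)=K < 2\min(\beta,K)$). Replacing the first argument of a $\min$ by something larger weakens, not strengthens, a lower bound, so the implication as written runs the wrong direction. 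The correct fix is to impose the restriction $2\beta \le K$, under which $\min(2\beta,K)=2\beta=2\min(\beta,K)$ and the two forms coincide. This is exactly the restriction the paper imposes in the single-request analog (the line ``Moreover, we restrict $2\beta \le K$'' preceding eq.~(\ref{eq:ProofGapMainbound})), and every application of Claim~\ref{clm:LwrBndMultRqst} in the proof of Theorem~\ref{thm:mult_gap_mult_req} satisfies it. So either read the claim with the implicit hypothesis $2\beta\le K$, or state the first argument as $l\alpha\min(2\beta,K)$; with that adjustment your derivation is complete.
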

Similarly, an extension of the Lemma \ref{lem:UpBoundNsat} holds so that $N_{sat}(\alpha,\beta,K,l) \leq l(2\alpha \beta + \alpha +\beta)/3$ for $\beta \leq K$. Exploiting this upper bound and Claim \ref{clm:LwrBndMultRqst}, we are able to show that the multiplicative gap of the straightforward achievable scheme and our lower bound is at most $4$. Let $R^l_c(M) = l R_c(M)$ where $R_c(M)$ is defined in eq. (\ref{eq:coded_rate}).
\begin{theorem}
\label{thm:mult_gap_mult_req}
Consider a coded caching system with a server containing $N$ files and $K$ users. Each user requests $l$ files, and has a cache of size $0 \leq M \leq N$. Then
\beqa
\frac{R^l_c(M)}{R^\star(M)} \leq 4.
\eeqa
\end{theorem}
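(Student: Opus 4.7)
The plan is to mirror the three-region analysis of Theorem~\ref{thm:mult_gap}, replacing Corollary~\ref{corollary:Lvalue} by Claim~\ref{clm:LwrBndMultRqst} and using the multi-request extension of the saturation-number bound, $N_{sat}(\alpha,\beta,K,l) \leq l(2\alpha\beta + \alpha + \beta)/3$ for $\beta \leq K$. The key observation is that both the achievable rate $R^l_c(M) = l R_c(M)$ and the inner expression of Claim~\ref{clm:LwrBndMultRqst}, namely $\min(2l\alpha\min(\beta,K),\, l\alpha\min(\beta,K) + (N-N_0)/2)$, acquire an overall multiplicative factor of $l$ relative to their single-request counterparts. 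The corresponding factor of $l$ in the saturation-number bound forces us to shrink the pairs $(\alpha,\beta)$ used in the single-request proof by a factor of roughly $l$ in one coordinate, so that the hypothesis $N \geq N_0$ of Claim~\ref{clm:LwrBndMultRqst} continues to hold.

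As a warm-up, I would dispose of the easy regime $\min(N,K) \leq 4l$ by setting $\alpha = \lfloor N/l \rfloor$ and $\beta = 1$. Then $N_0$ can be made at most $N$ by a minor adjustment, and Claim~\ref{clm:LwrBndMultRqst} yields $R^\star(M) \geq l(1-M/N)$, while $R^l_c(M) \leq l\min(N,K)(1-M/N)$, so the gap is at most $\min(N,K)/l \leq 4$.

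For $\min(N,K) > 4l$, I would partition the $M$-axis exactly as in Theorem~\ref{thm:mult_gap}. In Region~I ($0 \leq M \leq \max(1,N/K)$), take $\alpha = 1$, $\beta = \lfloor \min(N,K)/(2l) \rfloor$ for $M \leq 1$ and $\alpha = \lceil N/(Kl) \rceil$, $\beta = \lfloor K/(2l) \rfloor$ for $M \in [1,N/K]$. In Region~II ($\max(1,N/K) < M \leq N/2$), take $\alpha = \lfloor 2M/l \rfloor$, $\beta = \lfloor N/(2M) \rfloor$, so that $\alpha\beta \leq N/l$ guarantees $N \geq N_0$. In Region~III ($N/2 < M \leq N$), take $\alpha = \lfloor N/(2l) \rfloor$, $\beta = 1$. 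In every case the chain of inequalities from the proof of Theorem~\ref{thm:mult_gap} carries through with the relevant quantities rescaled by $l$, and the final step yields $R^\star(M) \geq R^l_c(M)/4$.

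The main obstacle will be a careful accounting of the floor and ceiling operations together with the boundary conditions $\alpha,\beta \geq 1$, $2\beta \leq K$, and $N \geq N_0$, now that each quantity carries an explicit dependence on $l$. In particular, the analogue of the algebraic inequality $8\alpha\beta + \alpha + \beta \geq 3N$ used in Region~II of Theorem~\ref{thm:mult_gap} must be revisited with $N/l$ in place of $N$, and the slack introduced by the floors must be checked to remain negligible for all $l \geq 1$. Since no new structural property of saturated instances is required --- only that the lower bound scales in lockstep with the achievable rate in the parameter $l$ --- the remaining manipulations are essentially routine adaptations of the Regions~I--III computations in Sections~\ref{sec:RegionI}--\ref{sec:RegionIII}.
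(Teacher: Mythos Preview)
Your overall strategy---mirror the three-region argument with Claim~\ref{clm:LwrBndMultRqst} in place of Corollary~\ref{corollary:Lvalue} and the scaled saturation bound $N_{sat}(\alpha,\beta,K,l)\leq l(2\alpha\beta+\alpha+\beta)/3$---is exactly the paper's strategy. But there is a concrete gap in how you set the region boundaries.

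The paper does \emph{not} keep the boundary at $M=1$; it replaces it by $M=l$, so Region~I is $0\le M\le\max(l,N/K)$ and Region~II is $\max(l,N/K)\le M\le N/2$. This is essential for your own choice $\alpha=\lfloor 2M/l\rfloor$ in Region~II: with your boundary at $\max(1,N/K)$, whenever $l\ge 2$ and $N/K<l$ there are values of $M$ in Region~II with $2M/l<2$, so $\alpha\in\{0,1\}$ and the chain of inequalities (which uses $\alpha\ge 2$ to control both $\beta/\alpha$ and $(N-2\beta M)/(2\alpha)$) collapses. Shifting the boundary to $M=l$ restores $\alpha\ge 2$, and correspondingly the first sub-case of Region~I must cover $M\le l$ (with $\alpha=1$, $\beta=\lfloor\min(N/(2l),K/2)\rfloor$), not just $M\le 1$. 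The paper states this explicitly: ``$M=l$ plays the same role as $M=1$ in the proof of Theorem~\ref{thm:mult_gap}.''

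A second, smaller issue is the warm-up. Your claim that $\alpha=\lfloor N/l\rfloor$, $\beta=1$ yields $R^\star(M)\ge l(1-M/N)$ and hence a gap of $\min(N,K)/l$ is not right: from $\lceil N/l\rceil R^\star+M\ge N$ you get $R^\star\ge (N-M)/\lceil N/l\rceil\ge (l/2)(1-M/N)$, and dividing $R^l_c(M)\le l\min(N,K)(1-M/N)$ by this gives a gap of $2\min(N,K)$, not $\min(N,K)/l$. The paper instead invokes the single-request lower bound $R^\star(M)\ge 1-M/N$ together with $R^l_c(M)\le\min(N,lK)(1-M/N)$ to dispose of $\min(N,lK)\le 4$, and then assumes $\min(N,lK)\ge 5$ throughout. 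Your Region~I second sub-case also diverges: the paper keeps $\beta=\lfloor K/2\rfloor$ there (not $\lfloor K/(2l)\rfloor$), shrinking only $\alpha$ by the factor $l$. With these corrections the rest of your outline matches the paper.
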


\begin{proof}
We divide the $M$ axis into three regions, $0 \leq M \leq \max(l,N/K)$, $\max(l,N/K) \leq M \leq N/2$, and $N/2 \leq M \leq N$. In each region we show $R^l_c(M)/R^\star(M) \leq 4$ for any $N$ and $K$. In the following proof, $M=l$ plays the same role as $M=1$ in proof of Theorem \ref{thm:mult_gap}. Before embarking on the proof, we note that we only need to analyze the gap for $\min(N,lK) \geq 5$. Note that the lower bounds of the original problem are also valid here. Indeed, if each user instead of requesting $l$ distinct files request the same file $l$ times then the problem will be equivalent to the original one. Now, in (\ref{eq:ProofGapMainbound}) if we set $\alpha=N$ and $\beta=1$ then we get $NR^\star+M\geq N$, or equivalently $R^\star(M)\geq (1-M/N)$, which is applicable to the multiple request problem. Regarding that $R^l_c(M)\leq \min(N,lK)(1-M/N)$, therefore $R^l_c(M) / R^\star(M) \leq 4$ for $(N,lK)\leq 4$.
\subsubsection{Region I: $0 \leq M \leq \max(l,N/K)$}
For $0 \leq M \leq \max(l,N/K)$, we first show that the result holds for $M \leq l$. Since we separately analyze the gap for $M \geq N/2$ we assume $l \leq N/2$ so that $M \leq \max(l,N/K) \leq N/2$. We use result of the Claim \ref{clm:LwrBndMultRqst} with setting $\alpha=1$ and $\beta=\lfloor \min(N/2l,K/2) \rfloor$ where $\beta \geq 1$ from $l \leq N/2$. Following the exact same steps as in Section \ref{sec:RegionI} for $M\leq 1$, it turns out that $R^\star(M) \geq  \min(N,lK)/4 \geq R^l_c(M)/4$ for $M\leq l$.

Now, we assume that $l \leq M \leq \max(l,N/K)$ which is nonempty if $N/K\geq l$. Therefore, we only need to analyze the gap for $N \geq lK$ and $l \leq M \leq N/K$. In this range of $M$ the convex combination of $M=0$ and $M=N/K$ is achievable so that $R^l_c(M) \leq \lambda R^l_c(N/K)+(1-\lambda)R^l_c(0)$. From $R^l_c(0)=lK$ and $R^l_c(N/K)=l(K-1)/2$ we have $R^l_c(M) \leq lK(1-\lambda/2)-l\lambda/2$ where $\lambda = KM/N$. By setting $\alpha = \lceil N/lK \rceil$ and $\beta = \lfloor K/2 \rfloor$, we have $\alpha \beta \leq \alpha K/2 \leq N/2l+K/2 \leq N/l$ (from $lK \leq N$) and that $N_{sat}(\alpha,\beta,K,l)\leq l\alpha\beta \leq N$. This ensures that the setting is valid for using Claim \ref{clm:LwrBndMultRqst}. According to Claim \ref{clm:LwrBndMultRqst} for such a setting we have,
\beqa
R^*(M) &\geq &  \min\left(2l \beta, l \beta + \frac{N- N_{sat}(\alpha, \beta, K, l)}{2\alpha} \right) - \frac{\beta M }{\alpha},\\
&\stackrel{(a)}{\geq} &  \min\left(\frac{l K}{2}\left( 1- \frac{\lambda }{2} \right),  \frac{l K(1-\lambda/2)}{2} - \frac{l (1-\lambda)}{4} - \frac{l}{6} \right) ,\\
&\stackrel{(b)}{\geq} &  \min\left(\frac{R_c(M)}{2},  \frac{l K(1-\lambda/2)}{4} +\frac{l (1-\lambda/2)}{2}- \frac{l(1-\lambda)}{4}-\frac{l}{6} \right) ,\\
&=&  \min\left(\frac{R_c(M)}{2},  \frac{l K(1-\lambda/2)}{4} +\frac{l}{12} \right) ,\\
&\geq &  \min\left(\frac{R_c(M)}{2},  \frac{R_c(M)}{4}  \right) \geq \frac{R_c(M)}{4},
\eeqa
where inequality (a) can be obtained by making the same argument as we made in first five lines of eq. (\ref{eq:proof_mult_gap_reg_I_p2}) and (b) from $K \geq 2$.
\subsubsection{Region II: $\max(l,N/K)\leq M \leq N/2$}
In the first step, we try to get an upper bound on the achievable rate. Letting $t_0=\lfloor KM/N \rfloor$ and following the argument we made in Section \ref{sec:RegionII} gives us $R^l_c(M) \leq lR_c(M) \leq l\left(N/M-1/2\right)$ for $M$ in this range. Next, by setting $\alpha =\lfloor 2M/l \rfloor$ and $\beta = \lfloor N/2M \rfloor$ we have $ N_{sat}(\alpha,\beta,K,l)\leq l\alpha \beta \leq N$ and $\beta \leq 2N/M \leq K/2$ by $M\geq N/K$ which imply that the constraints of the Claim \ref{clm:LwrBndMultRqst} are satisfied. Therefore,
\beqa
R^\star &\geq & \min\left(2l\beta, ~ l\beta+\frac{N-N_{sat}(\alpha,\beta,K,l)}{2\alpha} \right)- \frac{\beta M}{\alpha},\\
&\stackrel{(a)}{\geq} & \min\left(2l\beta\left(1-\frac{M}{2l\alpha} \right), ~ \frac{7l\beta}{12}+\frac{N-2\beta M}{2\alpha} - \frac{l}{6} \right) ,\\
&\stackrel{(b)}{\geq} & \min\left(2l\beta\left(1-\frac{M}{2M} \right), ~ \frac{7l\beta}{12} +\frac{N-2\beta M}{4M/l} - \frac{l}{6}\right) ,\\
&\stackrel{(c)}{\geq} & \min\left(\frac{Nl}{4M}, ~ \frac{Nl}{4M}- \frac{l}{12}\right) ,\\
&\geq & R^l_c(M)/4 ,
\eeqa
where in (a) we used upper bound on $N_{sat}(\alpha,\beta,K,l)$ and that $\beta / \alpha \leq \beta /2$ (from $\alpha \geq 2$), in (b) we used $N-2\beta M \geq 0$, $\alpha \leq 2M/l $, and $\alpha \geq 2M/l-1 \geq M/l$ (from $M \leq l$). In (c) we used $\beta \geq K/4$ (for $K\geq 2$) and $\beta \geq 1$ (from $M \leq N/2$).
\subsubsection{Region III: $N/2 \leq M \leq N$}
Using the same argument we made in Section \ref{sec:RegionIII} the achievable rate is bounded by $R^l_c(M) \leq lR_c(M) \leq 2l\left(1-M/N\right)$. According to Claim \ref{clm:LwrBndMultRqst} by setting $\alpha =\lfloor N/l \rfloor$ and $\beta=1$ one may not recover all $N$ files since $\alpha l \leq N$, but if we increase $\alpha$ to $\lceil N/l \rceil$ then all files will be recovered. Therefore $\alpha R^\star(M) + M \geq N$ or equivalently $R^\star(M) \geq (N-M)/\alpha$. From $N-M \geq 0$ and that $\alpha \leq N/l+1\leq 2N/l$ (since $l \leq N$) it turns out that $R^\star(M) \geq l(1-M/N)/2 \geq 4R^l_c(M)$ for $N/2 \leq M \leq N$. This concludes the proof.
\end{proof}

\subsection{Decentralized coded caching}
In the original coded caching problem the placement phase is managed by a central server. However, in many scenarios such coordinated placement phase may be impractical. Instead, a decentralized placement phase was investigated in \cite{maddahN14mr_tradeoff} where the users cache random subsets of the bits of each file while respecting the cache size constraint. Even in this setting a multiplicative gap of $12$ to the cut-set lower bound was obtained. Note that the lower bounds established for the centralized coded caching problem are also applicable to the decentralized case. By similar techniques to those used in proof of Theorem \ref{thm:mult_gap} we can establish a multiplicative gap of $4$. The proof is omitted as it is quite similar.


\section{Comparison with existing results}
\label{sec:comparison}
Lower bounds on the coding caching rate have been proposed in independent work as well. In this section we compare our lower bounds with other approaches.
\subsection{Comparison with cutset bound}
Our first observation is that the cutset bound in \cite{maddahN14} is a special case of the bound in eq. (\ref{eq:cap_lower_bound}). In particular, suppose that $\alpha = \lfloor N / s \rfloor$, $\beta = s$ for $s = 1,\ldots,\min(N,K)$. In this case, we have $\alpha\beta \leq N$. Thus, it is easy to construct a problem instance where $L = \alpha \beta$ (see Corollary \ref{corollay:sat_rem}). This also follows from observing that $N_{sat}(\alpha,\beta,K) \leq \alpha \beta$.

Our bound allows us to explore a larger range of $(\alpha, \beta)$ pairs that in turn lead to better lower bounds on $R^{\star}$. Suppose that for a coded caching system with $N$ files and $K$ users, we first apply the cutset bound with certain $\alpha_1$ and $\beta_1$ such that $\alpha_1\beta_1 < N$. This would result in the inequality
\begin{align*}
\alpha_1 R^{\star} + \beta_1 M \geq \alpha_1 \beta_1.
\end{align*}
However, our approach can do strictly better. To see this note that $\alpha_1 \beta_1 < N$ implies that $N_{sat}(\alpha_1, \beta_1,K) < N$. Now, using Corollary \ref{corollary:Lvalue} we can instead attempt to lower bound $2\alpha_1 R^{\star}+ 2\beta_1 M$ and obtain the following inequality. 
\begin{align*}
2 \alpha_1 R^{\star} + 2 \beta_1 M  &\geq \min\left(4\alpha_1\beta_1,~2 \alpha_1\beta_1 + N - N_{sat}(\alpha_1, \beta_1, K) \right)\\
\implies \alpha_1 R^{\star} + \beta_1 M &\geq \min\left(2\alpha_1\beta_1,~\alpha_1\beta_1 + (N - N_{sat}(\alpha_1, \beta_1, K))/2 \right),
\end{align*}
which is strictly better than the cutset bound since $N - N_{sat}(\alpha_1, \beta_1, K) > 0$.
\begin{example}\label{eg:discuss_1}
Consider a system containing a server with four files and three users, $N=4$ and $K=3$. The cutset bounds corresponding to the given system are
\begin{align*}
4R^{\star} + M &\geq 4,\\
2R^{\star}+2M &\geq 4, \text{~and}\\
R^{\star}+3M &\geq 3.
\end{align*}
A simple calculation shows that if $M=1$, the above inequalities, yield the lower bound $R^{\star} \geq 1$.

Now, consider the second bound, $2R^{\star}+2M \geq 4$ and instead attempt to obtain a lower bound on $4R^{\star}+4M$.
In this case by exhaustive enumeration, it can be verified that $N_{sat}(2,2,3) = 3 < N$. 
Using Corollary \ref{corollary:Lvalue}, this results in the lower bound $L^* \geq \min(4\times 3, 2\times4+4-N_{sat}(2,2,3))=9$. Thus we can conclude $R^{\star}+M \geq 2.25$ which is better than the cutset bound $R^{\star}+M \geq 2$. Moreover, this inequality also yields a better lower bound $R^{\star} \geq 1.25$.
\end{example}


\subsection{Comparison with lower bound of \cite{sengupta2015improved}}
The authors in \cite{sengupta2015improved} use Han's inequality \cite[Theorem 17.6.1]{cover2012elements} to establish the following lower bounds on the coded caching problem.
\beq
\label{eq:AvikBound}
\alpha R^\star(M) + \beta M \geq N - \frac{\mu}{\mu + \beta}[N-\alpha \beta]^+ - [N-\alpha K]^+,
\eeq
where $\mu = \min(\lceil \frac{N-\alpha \beta}{\alpha} \rceil, K- \beta)$, $\beta \in \{1,\ldots,K \}$ and $\alpha \in \{1, \ldots, \lceil \frac{N}{\beta} \rceil\}$. This bound also provides more flexibility in the choice of $\alpha$ as compared to the cutset bound.

An analytical comparison between our bound and the bound in inequality (\ref{eq:AvikBound}) is hard, especially since a priori in all these bounds, for a given $M$, it is unclear which particular $(\alpha,\beta)$ pair gives the best lower bound. Thus, in the discussion below we attempt to analytically compare the bounds for given $(\alpha, \beta)$. We also present a numerical comparison in Section \ref{sec:numerical_comp}. 
\begin{itemize}
\item[(a)] Our bound is superior, when $ 1/\alpha + 1/\beta \leq 0.4$, i.e., when the values of $\alpha$ and $\beta$ are large enough. Note that the best lower bounds on $R^{\star}(M)$ for systems with $N$ and $K$ reasonably large are obtained for higher values of $\alpha$ and $\beta$. Thus, for most parameter ranges our bounds are better.
\item[(b)] The bound in \cite{sengupta2015improved} is better when $\alpha=1$ and $N \leq K$. This in turn means that their corresponding lower bound for small values of $M$ is better than ours.
\item[(c)] We can demonstrate that our proposed lower bound is within a factor of four of the achievable rate, whereas \cite{sengupta2015improved} only demonstrates a multiplicative gap of eight.
\end{itemize}
In the remainder of this discussion we assume that $\alpha \geq 2$ and show these claims. Let $L^*$ denote the value of our lower bound and let $L_{H}$ denote the lower bound of \cite{sengupta2015improved}.

\noindent {\it Case 1:} {\it $\alpha \beta > N$.}\\
Note that $\alpha \leq \lceil N/\beta \rceil$ in inequality (\ref{eq:AvikBound}). Furthermore, $\alpha \geq 2$ implies that $N \geq \beta$. Thus, we can conclude that $\alpha \beta \leq  \lceil N/\beta \rceil \beta \leq 2N$. Now, we use Corollary \ref{corollary:Lvalue} to compare the bounds. Specifically, set $\alpha_l = \lceil \alpha/2 \rceil, \beta_l = \lfloor \beta/2 \rfloor, \alpha_r = \lfloor \alpha/2 \rfloor$ and $\beta_r = \lceil \beta/2 \rceil$. This implies that
\begin{align*}
\max (\alpha_l \beta_l,\alpha_r \beta_r) \leq \frac{\alpha \beta}{2} \leq N.
\end{align*}
Thus, we obtain $L^* = \min\left(\alpha \beta, ~ \alpha_l \beta_l+\alpha_r \beta_r+N-N_0 \right)$. Note that $$N_0 = \max\left(N_{sat}(\alpha_l,\beta_l,K),~N_{sat}(\alpha_r,\beta_r,K) \right) \leq \max(\alpha_l\beta_l, \alpha_r \beta_r) \leq N. \text{~~(from above)}$$
Thus,
\beqa
L^* &= & \min\{\alpha \beta,~ \alpha_l \beta_l + \alpha_r \beta_r + N-N_0 \}\\
&\geq & \min\{\alpha \beta,~ \alpha_l \beta_l + \alpha_r \beta_r + N-\max\left( \alpha_l \beta_l, ~\alpha_r \beta_r \right) \}\\
&= & \min\{\alpha \beta, \min\left( \alpha_l \beta_l, ~\alpha_r \beta_r \right) + N \}\\
&> & N.
\eeqa
On the other hand note that $L_{H}$ is at most $N$. Thus, our bound is strictly better.

\noindent {\it Case 2(a):} {\it $\alpha \beta \leq \alpha K \leq N$.}\\

As $N \geq \alpha \beta \geq N_{sat}(\alpha,\beta,K)$ we use (\ref{eq:ProofGapMainbound}) to obtain
\begin{align*}
L^* = \min\left(\alpha\min(K, 2\beta),~\alpha \beta + (N-N_0)/2 \right).
\end{align*}
The corresponding bound $L_{H}$ is obtained by setting $\mu = K-\beta$.
\begin{align*}
L_{H} &= \alpha K - (1-\beta/K)(N-\alpha\beta)\\
&= \alpha\beta(1+1/x-x)-(1-x)N, \text{~(where $0 \leq x=\beta/K \leq 1$)}\\
&\leq \alpha\beta(2-x), \text{~(since, $N \geq \alpha K =\alpha\beta/x$)}.
\end{align*}
Thus, we conclude that $L_{H} \leq \min (\alpha K, \alpha \beta(2-x)) \leq \alpha\min(K, 2\beta)$. As a result, we only need to examine whether $\alpha \beta + (N-N_0)/2 \geq L_H$. Now, using the fact that $N_0 \leq (2\alpha\beta+\alpha+\beta)/3$, we have that $L^* \geq L_{H}$ when
\begin{align}
2\alpha\beta/3+N/2-(\alpha+\beta)/6 &\geq \alpha\beta(1+1/x-x)-(1-x)N \nonumber\\
\implies (3/2-x)N-(1/x+1/3-x)\alpha \beta -(\alpha+\beta)/6 &\geq 0. \label{eq:intermed_comp_avik}
\end{align}
As $N \geq \alpha K = \alpha\beta/x$, inequality (\ref{eq:intermed_comp_avik}) certainly holds if
\begin{align*}
(1/2x+x-4/3)\alpha\beta - (\alpha+\beta)/6 \geq 0.
\end{align*}
It can be verified that $1/2x+x - 4/3 \geq \sqrt{2} -4/3 \geq 1/15$ for $0\leq x \leq 1$, so that the above inequality will definitely hold if $0.4 \geq 1/\alpha + 1/\beta$ which is the case for $\alpha,\beta \geq 5$.

{\it Case 2(b):} {\it $\alpha \beta \leq N < \alpha K$.}

In this case $\mu = \lceil N/\alpha - \beta \rceil$, so that
\begin{align*}
L_H &\leq N - (1 - \alpha\beta/N)(N - \alpha\beta)\\
&= \alpha\beta(2 - x') \text{~(where $0 \leq x' = \alpha\beta/N \leq 1$)}\\
\end{align*}
As in the previous case, we conclude that $L^* \geq L_H$ if
\begin{align*}
2\alpha\beta/3+N/2-(\alpha+\beta)/6 &\geq \alpha\beta(2-x').
\end{align*}
Upon analysis similar to the previous case, we can conclude that our bound is better when $0.4 \geq 1/\alpha + 1/\beta$.

\subsection{Comparison with lower bound of \cite{ajaykrishnan2015critical}}
The work of \cite{ajaykrishnan2015critical} is closest in spirit to our proposed lower bound. In particular, we show that their lower bound corresponds to specific problem instance as defined in our work. We note however that the work of \cite{ajaykrishnan2015critical} does not analyze the multiplicative gaps between the achievable rates and lower bounds. The lower bounds in \cite{ajaykrishnan2015critical} can be rewritten as

\begin{eqnarray}
\label{eq:TIFR_LowrBound}
2m R^\star + 2tm M \geq L_0, &~& \text{for } t \leq N,~ K\geq 2\\
2tm R^\star + 2m M \geq L_0,&~& \text{for } t \leq N,~ K\geq 2t,\nonumber
\end{eqnarray}
where $L_0=\min\{ 4tm^2,~2tm^2+N-\tilde{N_0}\}$, $\tilde{N}_0 = t(m^2-m+1)$, $m = n-\gamma$ and $n = \lceil (t+\sqrt{t^2+12t(N-t)})/6t \rceil$. Also, $\gamma=\max\left(0,~\lceil n- K/2t\rceil \right)$ and $\gamma=\max\left(0,~\lceil n- K/2\rceil \right)$ in the first and second lower bounds respectively.
We present these bounds using our notation so that $(\alpha,\beta)$ is equal to $(2m,2tm)$ and $(2tm,2m)$ in the first and second lower bounds in (\ref{eq:TIFR_LowrBound}) respectively. Note however, that in the above bound the only free parameter is $t$, i.e., $m$ itself is dependent on $t$. It is easy to see that $\beta \leq K$ therefore, unlike our method, this method cannot be used to obtain lower bounds when $\beta> K$.

\begin{figure}[!t]
\centering
\includegraphics[scale=0.5]{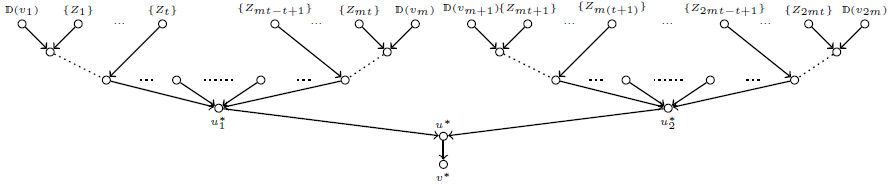}
\caption{{\small Problem instance associated with the lower bounds in \cite{ajaykrishnan2015critical}}}
\label{Fig:TIFR_PrblmInstnc}
\end{figure}

The lower bound $L_0$ in eq. (\ref{eq:TIFR_LowrBound}) above is reminiscent of our lower bound if the term $\tilde{N}_0$ is interpreted as a bound on the saturation number. In fact, for the specific setting of $(\alpha,\beta) = (m,mt)$, we can create a problem instance as described below, that is a saturated instance with exactly $t(m^2 - m+1)$ files, so that we can infer that $N_{sat}(m,tm,K) \leq t(m^2-m+1)$. It turns out that this upper bound on the saturation number may be slightly stronger than the one we derived in Lemma \ref{lem:UpBoundNsat} for general $\alpha$ and $\beta$ when $t$ and $m$ are small.
The associated problem instance of the first lower bound in (\ref{eq:TIFR_LowrBound}) is depicted in Fig. \ref{Fig:TIFR_PrblmInstnc}. The corresponding instance for the second lower bound in (\ref{eq:TIFR_LowrBound}) can be derived in a similar manner. In this figure, delivery phase signals $\dsD(v_1),\ldots, \dsD(v_{2m})$ are same as the delivery phase signals defined in \cite{ajaykrishnan2015critical}.
For this tree, it can be verified that the instance can be saturated with $t(m^2 - m+1)$ files, so that $N_{sat}(m,tm,K) \leq t(m^2-m+1)$.

However, an application of Algorithm \ref{Alg:SubOptimalInstance} will result in even better upper bound on the saturation number as shown in the example below. In particular, Algorithm \ref{Alg:SubOptimalInstance} will generate a different tree when trying to upper bound the saturation number.
\begin{example}
We consider a system with $N=64$ files and $K=8$ users and set $t=2$ in eq. (\ref{eq:TIFR_LowrBound}) so that $m=4$ and $\tilde{N}_0=26$. Algorithm \ref{Alg:SubOptimalInstance} for such a setting returns $N_{sat}(4,8,8)\leq 22$ which is smaller than $\tilde{N}_0$. This reduction in saturation number is a consequence of splitting $\alpha$ and $\beta$ equally in the Algorithm (\ref{Alg:SubOptimalInstance}) and continuing recursively thereafter. On the other hand, it can be noted that in Fig. \ref{Fig:TIFR_PrblmInstnc}, node $u^*_1$ is such that it has $m=4$ incoming edges which makes the corresponding lower bound looser ({\it cf.} Claim \ref{claim:incomingedgelimits}). 
\end{example}

\subsection{Comparison with results in \cite{tian2015note}}
In \cite{tian2015note} the author provides lower bounds for the specific case of $N=K=3$. The inequalities are generated via a computational technique that works with the entropic region of the associated random variables. Some of the bounds presented in \cite{tian2015note} can be obtained via our approach as well. However, the specific inequalities $3R^\star+6M\geq 8$, $18R^\star+12M\geq 29$ and $6R^\star+3M\geq 8$ cannot be obtained using our approach and strictly improves our region.
Note however, that it is not clear whether these inequalities can be obtained in a computationally tractable manner for the case of large $N$ and $K$.

\subsection{Numerical comparison of the various bounds}
\label{sec:numerical_comp}
We conclude this section, by providing numerical results for two cases: (i) $N=16, K=30$ and (ii) $N=64, K=50$. In Fig. \ref{Fig:comparisontoothrbounds} the ratio $R_c(M)/R^\star(M)$ is plotted by lower bounding $R^\star(M)$ by different methods. In case I (see Fig. \ref{Fig:comparisontoothrbounds}) we have $N=16$ and $K=30$. Our bound has the minimum multiplicative gap except in the small range $0\leq M\leq 1$. Specifically, as discussed previously, the bound in \cite{sengupta2015improved} is better than ours when $K \geq N$  and $\alpha=1$ and $0
\leq M \leq 1$. 
In case II, where $N > K$ our bound has minimum multiplicative gap for all range of $M$. 

\section{Conclusions and Future Work}
\label{sec:conclusions}
In this work we have considered a coded caching system with $N$ files, $K$ users each with a normalized cache of size $M$. We demonstrated an improved lower bound on the coded caching rate $R^\star(M)$. Our approach proceeds by establishing an equivalence between a sequence of information inequalities and a combinatorial labeling problem on a directed tree. Specifically, for given positive integers $\alpha$ and $\beta$, we generate an inequality of the form $\alpha R^\star + \beta M \geq L$. We showed that the {\it best} $L$ that can be obtained using our approach is closely tied to how efficiently a given number of files can be used by our proposed algorithm. Formalizing this notion, we studied certain structural properties of our algorithm that allow us to quantify the improvements that our approach affords. In particular, we show a multiplicative gap of four between our lower bound and the achievable rate. An interesting feature of our algorithm is that it is applicable for general value of $N, K$ and $M$ and is strictly better than all prior approaches for most parameter ranges.

There are still gaps between the currently known lower bounds and the achievable rate and an immediate open question is whether this gap can be reduced or closed. It would also be of interest to better understand coded caching rates in more general scenarios such as the hierarchical coded caching setup and for more general network topologies.

\definecolor{mygreen}{RGB}{0,205,205}
\definecolor{mygrey}{RGB}{131,139,139}
\begin{figure}[!t]
\centerline{
\includegraphics[scale=0.5]{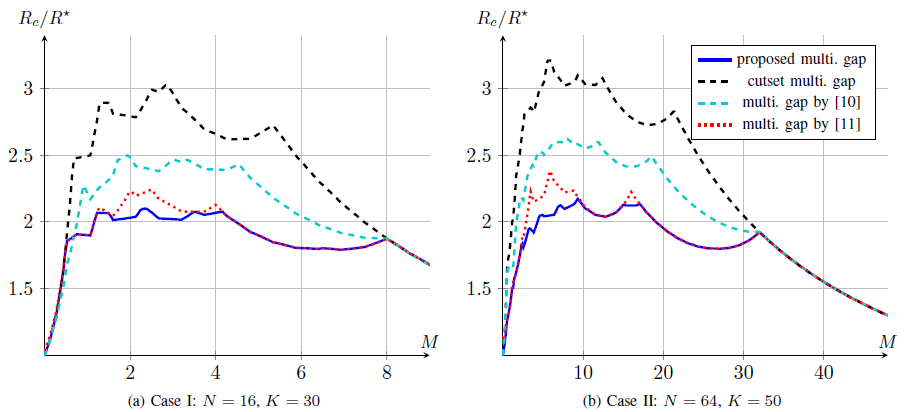}
}
\caption{{\small The plot demonstrates the multiplicative gap between the achievable rate, $R_c(M)$, in \cite{maddahN14} and lower bounds $R^\star(M)$ using different lower bounding techniques. For case II our lower bound results in the least multiplicative gap. In case I, where $N \leq K$, the multiplicative gap obtained by our proposed lower bound is lower than the others for $M \geq 1$. In the range $0 \leq M \leq 1$, \cite{sengupta2015improved} provides a slightly better result. }}
\label{Fig:comparisontoothrbounds}
\end{figure}

\bibliographystyle{IEEETran}
\bibliography{coded_caching,caching_refs}

\begin{thebibliography}{10}
\providecommand{\url}[1]{#1}
\csname url@samestyle\endcsname
\providecommand{\newblock}{\relax}
\providecommand{\bibinfo}[2]{#2}
\providecommand{\BIBentrySTDinterwordspacing}{\spaceskip=0pt\relax}
\providecommand{\BIBentryALTinterwordstretchfactor}{4}
\providecommand{\BIBentryALTinterwordspacing}{\spaceskip=\fontdimen2\font plus
\BIBentryALTinterwordstretchfactor\fontdimen3\font minus
  \fontdimen4\font\relax}
\providecommand{\BIBforeignlanguage}[2]{{%
\expandafter\ifx\csname l@#1\endcsname\relax
\typeout{** WARNING: IEEEtran.bst: No hyphenation pattern has been}%
\typeout{** loaded for the language `#1'. Using the pattern for}%
\typeout{** the default language instead.}%
\else
\language=\csname l@#1\endcsname
\fi
#2}}
\providecommand{\BIBdecl}{\relax}
\BIBdecl

\bibitem{wessels}
D.~Wessels, \emph{Web Caching}.\hskip 1em plus 0.5em minus 0.4em\relax O'
  Reilly, 2001.

\bibitem{MeyersonMP01}
A.~Meyerson, K.~Munagala, and S.~Plotkin, ``Web caching using access
  statistics,'' in \emph{Proc. ACM-SIAM SODA}, 2001, pp. 354--363.

\bibitem{KorupoluPR99}
M.~R. Korupolu, C.~G. Plaxton, and R.~Rajaraman, ``Placement algorithms for
  hierarchical cooperative caching.'' in \emph{Proc. ACM-SIAM SODA}, 1999, pp.
  586--595.

\bibitem{BorstGW10}
S.~C. Borst, V.~Gupta, and A.~Walid, ``Distributed caching algorithms for
  content distribution networks.'' in \emph{Proc. IEEE INFOCOM}, 2010, pp.
  1478--1486.

\bibitem{tanM13}
B.~Tan and L.~Massouli{\'e}, ``Optimal content placement for peer-to-peer
  video-on-demand systems,'' \emph{IEEE/ACM Trans. Netw.}, vol.~21, no.~2, pp.
  566--579, Apr. 2013.

\bibitem{wolman99}
A.~Wolman, M.~Voelker, N.~Sharma, N.~Cardwell, A.~Karlin, and H.~M. Levy, ``On
  the scale and performance of cooperative web proxy caching,'' \emph{ACM
  SIGOPS}, vol.~33, no.~5, pp. 16--31, 1999.

\bibitem{breslau_et_al99}
L.~Breslau, P.~Cao, L.~Fan, G.~Phillips, and S.~Shenker, ``Web caching and
  zipf-like distributions: evidence and implications,'' in \emph{Proc. IEEE
  INFOCOM}, 1999, pp. 126--134.

\bibitem{applegate_et_al10}
D.~Applegate, A.~Archer, V.~Gopalakrishnan, S.~Lee, and K.~K. Ramakrishnan,
  ``Optimal content placement for a large-scale vod system,'' in \emph{Proc.
  ACM 6th Intl. Conf. on Emerging Networking Experiments and Technologies
  (Co-NEXT)}, 2010.

\bibitem{maddahN14}
M.~Maddah-Ali and U.~Niesen, ``Fundamental limits of caching,'' \emph{IEEE
  Trans. on Info. Th.}, vol.~60, no.~5, pp. 2856--2867, May 2014.

\bibitem{sengupta2015improved}
A.~Sengupta, R.~Tandon, and T.~C. Clancy, ``Improved approximation of
  storage-rate tradeoff for caching via new outer bounds,'' in \emph{IEEE Intl.
  Symposium on Info. Th.}\hskip 1em plus 0.5em minus 0.4em\relax IEEE, 2015,
  pp. 1691--1695.

\bibitem{ajaykrishnan2015critical}
N.~Ajaykrishnan, N.~S. Prem, V.~M. Prabhakaran, and R.~Vaze, ``Critical
  database size for effective caching,'' in \emph{IEEE 2015 Twenty First
  National Conf. on Comm.}, 2015, pp. 1--6.

\bibitem{barBJK11}
Z.~Bar-Yossef, Y.~Birk, T.~Jayram, and T.~Kol, ``Index coding with side
  information,'' \emph{IEEE Trans. on Info. Th.}, vol.~57, no.~3, pp.
  1479--1494, March 2011.

\bibitem{lubetzkyS09}
E.~Lubetzky and U.~Stav, ``Nonlinear index coding outperforming the linear
  optimum,'' \emph{IEEE Trans. on Info. Th.}, vol.~55, no.~8, pp. 3544--3551,
  Aug 2009.

\bibitem{maddahN14mr_tradeoff}
M.~Maddah-Ali and U.~Niesen, ``Decentralized coded caching attains
  order-optimal memory-rate tradeoff,'' \emph{IEEE/ACM Trans. Netw.}, vol.~23,
  no.~4, pp. 1029--1040, 2015.

\bibitem{pedarsaniMN14}
R.~Pedarsani, M.~Maddah-Ali, and U.~Niesen, ``Online coded caching,'' in
  \emph{IEEE Intl. Conf. Comm.}, June 2014, pp. 1878--1883.

\bibitem{maddahN14nonuniform_demand}
U.~Niesen and M.~Maddah-Ali, ``Coded caching with nonuniform demands,'' in
  \emph{IEEE INFOCOM}, April 2014, pp. 221--226.

\bibitem{jiTLC14zipf}
M.~Ji, A.~M. Tulino, J.~Llorca, and G.~Caire, ``Order optimal coded
  caching-aided multicast under zipf demand distributions,'' in \emph{The 11th
  Intl. Symp. on Wireless Comm. Sys.}, 2014.

\bibitem{hachemKD14a}
J.~Hachem, N.~Karamchandani, and S.~Diggavi, ``Multi-level coded caching,'' in
  \emph{IEEE Intl. Symposium on Info. Th.}, 2014, pp. 56--60.

\bibitem{senguptaTC13}
A.~Sengupta, R.~Tandon, and T.~C. Clancy, ``Fundamental limits of caching with
  secure delivery,'' \emph{IEEE Trans. on Info. Forensics and Security},
  vol.~10, no.~2, pp. 355--370, 2015.

\bibitem{karamNMD14}
N.~Karamchandani, U.~Niesen, M.~Maddah-Ali, and S.~Diggavi, ``Hierarchical
  coded caching,'' in \emph{IEEE Intl. Symposium on Info. Th.}, June 2014, pp.
  2142--2146.

\bibitem{hachemKD14}
\BIBentryALTinterwordspacing
J.~Hachem, N.~Karamchandani, and S.~N. Diggavi, ``Coded caching for
  heterogeneous wireless networks with multi-level access,'' 2014. [Online].
  Available: \url{http://arxiv.org/abs/1404.6560}
\BIBentrySTDinterwordspacing

\bibitem{jiTLC14}
\BIBentryALTinterwordspacing
M.~Ji, A.~M. Tulino, J.~Llorca, and G.~Caire, ``Order optimal coded delivery
  and caching: Multiple groupcast index coding,'' 2014. [Online]. Available:
  \url{http://arxiv.org/abs/1402.4572}
\BIBentrySTDinterwordspacing

\bibitem{jiCM13}
M.~Ji, G.~Caire, and A.~F. Molisch, ``Fundamental limits of distributed caching
  in d2d wireless networks,'' in \emph{IEEE Info. Th. Workshop}, 2013, pp.
  1--5.

\bibitem{senguptabeyondd2d}
A.~Sengupta and R.~Tandon, ``Beyond cut-set bounds-the approximate capacity of
  d2d networks,'' in \emph{IEEE Info. Th. Workshop}, 2015, pp. 78--83.

\bibitem{zhang2015coded}
J.~Zhang, X.~Lin, C.-C. Wang, and X.~Wang, ``Coded caching for files with
  distinct file sizes,'' in \emph{IEEE Intl. Symposium on Info. Th.}, 2015, pp.
  1686--1690.

\bibitem{shanmugamGDMC13}
K.~Shanmugam, N.~Golrezaei, A.~Dimakis, A.~Molisch, and G.~Caire,
  ``Femtocaching: Wireless content delivery through distributed caching
  helpers,'' \emph{IEEE Trans. on Info. Th.}, vol.~59, no.~12, pp. 8402--8413,
  Dec 2013.

\bibitem{golrezaeiMDC13}
N.~Golrezaei, A.~Molisch, A.~Dimakis, and G.~Caire, ``Femtocaching and
  device-to-device collaboration: A new architecture for wireless video
  distribution,'' \emph{IEEE Comm. Magazine}, vol.~51, no.~4, pp. 142--149,
  April 2013.

\bibitem{yueYCGZ14}
J.~Yue, B.~Yang, C.~Chen, X.~Guan, and W.~Zhang, ``Femtocaching in video
  content delivery: Assignment of video clips to serve dynamic mobile users,''
  \emph{Computer Communications}, vol.~51, pp. 60--69, 2014.

\bibitem{tian2015note}
\BIBentryALTinterwordspacing
C.~Tian, ``A note on the fundamental limits of coded caching,'' 2015. [Online].
  Available: \url{http://arxiv.org/abs/1503.00010}
\BIBentrySTDinterwordspacing

\bibitem{cover2012elements}
T.~M. Cover and J.~A. Thomas, \emph{Elements of information theory}.\hskip 1em
  plus 0.5em minus 0.4em\relax John Wiley \& Sons, 2012.

\end{thebibliography}

\appendix
\begin{lemma}\label{lemma:general_lower_bd}
Algorithm \ref{Alg:Labeling} always provides a valid lower bound on $\alpha R^{\star} + \beta M$ where $\alpha = \sum_{i=1}^\ell |\dsD(v_i)|$ and $\beta = \sum_{i=1}^\ell |\dsZ(v_i)|$.
\end{lemma}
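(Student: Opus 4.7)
The plan is to establish $\alpha R^{\star} + \beta M \geq L$ by induction on the tree $\calT$ that mirrors Algorithm \ref{Alg:Labeling} step by step, generalizing the chain of inequalities illustrated in Example \ref{exp:TreeInstanceExmp}. The key observation is that the bookkeeping done by the algorithm tracks exactly the quantities needed in a recursive application of three entropy tools: subadditivity of (conditional) entropy to merge sibling subtrees, the decomposition $H(A \mid C) = I(W; A \mid C) + H(A \mid W, C)$, and Fano's inequality (applied to the independent, uniform files) to convert the mutual-information term into a bound of $|W| F(1-\epsilon) - O(|W|)$ on each batch of newly recovered files.

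Concretely, for each non-root node $u \in V$ let $\calT_u$ denote the subtree consisting of all $u' \succeq u$, write $S(u) = \dsZ(u) \cup \dsD(u)$ for the joint random variable of all cache and delivery signals that have propagated to $u$, and define $\dsW_{\mathrm{all}}(u) = \dsW(u) \cup W_{new}(u)$. I would prove by induction on $u$, in the topological order in which Algorithm \ref{Alg:Labeling} processes nodes, the invariant
\begin{equation*}
\sum_{v \in \mathrm{leaves}(\calT_u)} \bigl(|\dsZ(v)|\, MF + |\dsD(v)|\, RF\bigr) \;\geq\; F(1-\epsilon) \sum_{u' \in \calT_u} |W_{new}(u')| \;+\; H\bigl(S(u) \,\big|\, \dsW_{\mathrm{all}}(u)\bigr),
\end{equation*}
modulo an additive loss $O(|V|)$ that will vanish after normalization by $F$. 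The base case at a leaf $v$ follows from $H(Z) \leq MF$, $H(X) \leq RF$ and subadditivity, which give $\sum_{s \in \mathrm{label}(v)} H(s) \geq H(S(v))$; one then applies the mutual-information decomposition together with Fano to peel off $W_{new}(v)$, which is decodable from $S(v)$ by the definition of $\Delta(v,v) = Rec(\dsZ(v), \dsD(v))$ and the achievability of the underlying caching scheme.

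For the inductive step at an internal node $u$ with in-neighbors $u_l, u_r$, I would sum the two invariants and apply the subadditivity inequality
\begin{equation*}
H\bigl(S(u_l) \mid \dsW_{\mathrm{all}}(u_l)\bigr) + H\bigl(S(u_r) \mid \dsW_{\mathrm{all}}(u_r)\bigr) \;\geq\; H\bigl(S(u) \,\big|\, \dsW(u)\bigr),
\end{equation*}
which holds because conditioning on the larger set $\dsW_{\mathrm{all}}(u_l) \cup \dsW_{\mathrm{all}}(u_r) = \dsW(u)$ only reduces entropy (and the last equality is exactly the definition of $\dsW(u)$ used in Algorithm \ref{Alg:Labeling}). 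I would then extract $W_{new}(u)$ via
\begin{equation*}
H\bigl(S(u) \mid \dsW(u)\bigr) \;=\; I\bigl(W_{new}(u); S(u) \mid \dsW(u)\bigr) + H\bigl(S(u) \mid \dsW_{\mathrm{all}}(u)\bigr) \;\geq\; |W_{new}(u)|\, F(1-\epsilon) + H\bigl(S(u) \mid \dsW_{\mathrm{all}}(u)\bigr),
\end{equation*}
where Fano applies because every $W_i \in W_{new}(u) \subseteq Rec(\dsZ(u), \dsD(u))$ is decodable from some pair $(Z_j, X_J) \in S(u)$, and because independence of the files gives $H(W_{new}(u) \mid \dsW(u)) = |W_{new}(u)|\,F$.

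Finally, instantiating the invariant at the penultimate node $u^*$ (whose outgoing edge is the last edge of $\calT$) and discarding the nonnegative residual $H(S(u^*) \mid \dsW_{\mathrm{all}}(u^*))$ yields $\alpha R F + \beta M F \geq L F(1-\epsilon) - O(|V|)$; dividing by $F$ and letting $F \to \infty$, $\epsilon \to 0$ for any achievable pair $(M,R)$, one obtains $\alpha R + \beta M \geq L$ and thus $\alpha R^{\star} + \beta M \geq L$. The main obstacle I anticipate is the set-theoretic bookkeeping $\dsW(u) = \dsW_{\mathrm{all}}(u_l) \cup \dsW_{\mathrm{all}}(u_r)$ and the inductive identification $\Delta(u,u) \setminus \dsW(u) = W_{new}(u)$ that guarantee the Fano step at $u$ extracts exactly the quantity tallied by the algorithm; but since $|V|$ and $N$ are fixed constants of the instance, independent of $F$, absorbing the additive Fano slack of $O(|V|)$ into the $\epsilon$ margin is routine once the recursion is set up correctly.
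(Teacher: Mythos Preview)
Your proposal is correct and follows essentially the same approach as the paper: the paper also proves, for each internal node, the local inequality $\sum_{u \in in(v)} H(S(u)\mid \dsW(u)\cup W_{new}(u)) \geq |W_{new}(v)|F + H(S(v)\mid \dsW(v)\cup W_{new}(v))$ via (i) enlarging the conditioning to $\dsW(v)$, (ii) subadditivity to merge the children, and (iii) the mutual-information/Fano step to extract $W_{new}(v)$. The only cosmetic difference is that the paper sums these local inequalities over all internal nodes and telescopes, whereas you phrase the same computation as an explicit induction carrying the invariant up the tree; the entropy manipulations and the bookkeeping $\dsW(u)=\dsW_{\mathrm{all}}(u_l)\cup\dsW_{\mathrm{all}}(u_r)$ are identical.
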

\begin{proof}
Consider any internal node $v \in \calT$. We have
\begin{align*}
&\sum_{u \in in(v)} H(\dsZ(u) \cup \dsD(u)|\dsW(u) \cup W_{new}(u)), \\
&\stackrel{(a)}{\geq}  \sum_{u \in in(v)} H( \dsZ(u) \cup \dsD(u)| \dsW(v)),\\
&\stackrel{(b)}{\geq}  H( \dsZ(v) \cup \dsD(v)| \dsW(v)), \\
&\stackrel{(c)}{=} I(W_{new}(v); \dsZ(v) \cup \dsD(v)| \dsW(v)) \\
&+ H(\dsZ(v) \cup \dsD(v) | \dsW(v) \cup W_{new}(v)),
\end{align*}
where inequality in $(a)$ holds since $\dsW(u) \cup W_{new}(u) \subseteq \dsW(v)$ and conditioning decreases entropy, $(b)$ holds since $\cup_{u \in in(v)} \dsZ(u) = \dsZ(v)$ and $\cup_{u \in in(v)} \dsD(u) = \dsD(v)$ and $(c)$ holds by the definition of mutual information. Let $V_{int}$ denote the set of internal nodes in $\calT$. Let $v^*$ denote the root and $(u^*,v^*)$ denote its incoming edge. Then, 
\begin{align*}
& \sum_{v \in V_{int}} \sum_{u \in in(v)} H(\dsZ(u) \cup \dsD(u) |\dsW(u) \cup W_{new}(u)) \geq \\
& \sum_{v \in V_{int}} y_{(v,out(v))}+ \sum_{v \in V_{int}} H( \dsZ(v) \cup \dsD(v) | \dsW(v) \cup W_{new}(v)),
\end{align*}
where we have ignored the infinitesimal terms introduced due to Fano's inequality (for convenience of presentation).
Note that the RHS of the inequality above contains terms of the form $H( \dsZ(v) \cup \dsD(v) | \dsW(v) \cup W_{new}(v))$ for all nodes $v \in V_{int}$ (including $u^*$). 
On the other hand the LHS contains terms of a similar form for all nodes including the leaf nodes but excluding the node $u^*$.
Canceling the common terms, we obtain,
\begin{align*}
&\sum_{i=1}^{\ell} H(\dsZ(v_i)\cup \dsD(v_i) |W_{new}(v_i)) \geq \\
& \left( \sum_{v \in V_i} y_{(v,out(v))} \right)
 + H( Z\cup \dsD(u^*) | \dsW(u^*), W_{new}(u^*)),
\end{align*}
since $\dsW(v_i) = \phi$ for $i = 1,\dots, \ell$. We can therefore conclude that
\begin{align}
\label{eq:whymodlework}
\sum_{i=1}^{\ell} H(\dsZ(v_i), \dsD(v_i)) &\geq  \sum_{v \in V} y_{(v,out(v))}\\
\implies \sum_{i=1}^{\ell} H(\dsZ(v_i)) + \sum_{i=1}^{\ell} H(\dsD(v_i)) &\geq  \sum_{v \in V} y_{(v,out(v))}
\end{align}
Noting that $M \geq H(\dsZ(v_i))$ and $R^{\star} \geq H(\dsD(v_i))$ we have the required result. 
\end{proof}

\subsection{Proof of Claim \ref{claim:incomingedgelimits}}
\begin{proof}
We iteratively modify the problem instance $P(\calT,\alpha,\beta,L,N,K)$ to arrive at an instance where every node has in-degree at most two. Towards this end, we first identify a node $u$ with in-degree $\delta \geq 3$ such that no other node is topologically higher than it (such a node may not be unique).

We modify the instance $P$ by replacing $u$ with a directed in-tree where each node has in-degree exactly two. Specifically, arbitrarily number the nodes in $in(u)$ from $v_1', \dots, v'_\delta$. We replace the node $u$ with a directed in-tree $\calT_{u}$ with leaves  $v_1', \dots, v'_\delta$ and root $u$. $\calT_{u}$ has $\delta-2$ internal nodes numbered $u'_1, \dots, u'_{\delta-2}$ such that $in(u_{i}') = \{u'_{i-1}, v'_{i+1}\}$ where $u_0' = v_1'$ (see Fig. \ref{Fig:TreeModExmp}). Let us denote the new instance by $P_o = P_o(\calT_o,\alpha,\beta,L_o,N, K)$.
\begin{figure}[!t]
\begin{center}
%
%
%
\includegraphics[scale=0.5]{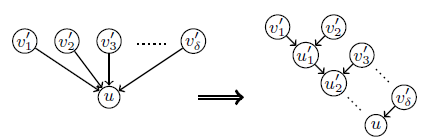}
\caption{{\small Tree modification example \label{Fig:TreeModExmp}}}
\end{center}
\end{figure}
We claim that $L_o \geq L$. 
To see this, suppose that $W^* \in W_{new}^{P}(u)$. We show that $W^* \in \cup_{u' \in \calT_u} W_{new}^{P_o}(u')$. This ensures that $L_o \geq L$.
To see this we note that
\begin{align*}
\dsZ^{P}(u) &= \dsZ^{P_o}(u)\\
\dsD^{P}(u) &= \dsD^{P_o}(u), \text{~and thus,}\\
\Delta^{P}(u,u) &= \Delta^{P_o}(u,u).
\end{align*}
Thus, if $W^* \in W_{new}^{P}(u)$, there exists an internal node $u'_i \in \calT_u$ with the smallest index $i \in \{1, \dots, \delta -2\}$ such that $W^* \in \Delta^{P_o}(u'_i,u'_i)$. Note that if $i > 1$, we have $W^* \in W_{new}^{P_o}(u'_i)$ since $W^* \notin \Delta^{P_o}(u'_{i-1},u'_{i-1})$ which in turn implies that $W^* \notin \dsW^{P_o}(u'_i)$. On the other hand if $i = 1$, then a similar argument holds since it is easy to see that $W^* \notin \dsW^{P_o}(u'_1)$.

Note that the modification in the instance $P$ can only affect nodes that are downstream of $u$. Now consider $u'$ such that $u \in in(u')$. It is evident that $\dsZ^{P_o}(u') = \dsZ^{P}(u')$ and $\dsD^{P_o}(u') =\dsD^{P}(u')$. Moreover $\dsW^{P_o}(u') = \cup_{v \in in(u')} \dsW^{P_o}(v) \cup W_{new}^{P_o}(v)$. Now for $v \neq u$, $\dsW^{P_o}(v) = \dsW^{P}(v)$ and $W_{new}^{P_o}(v) = W_{new}^{P}(v)$ as there are no changes in the corresponding subtrees. Moreover, as $\Delta^{P}(u,u) = \Delta^{P_o}(u,u)$, we have that $\dsW^{P_o}(u) \cup W_{new}^{P_o}(u) = \dsW^{P}(u) \cup W_{new}^{P}(u)$.
This implies that $\dsW^{P_o}(u') = \dsW^{P}(u')$. Thus, we can conclude that $W_{new}^{P_o}(u') = W_{new}^{P}(u')$. Applying an inductive argument we can conclude that the $W_{new}^{P_o}(u') = W_{new}^{P}(u')$ for all $u'$ such that $u \succ u'$.

The above process can iteratively be applied to every node in the instance that is of degree at least three. Thus, we have the required result.
\end{proof}

\subsection{Proof of Claim \ref{clm:Betahat_Equal_Beta}}
\begin{proof}
\label{appendix:proofBetahat_Equal_Beta}
We identify the set $\calU$ as the set of all nodes in $\calT$ such that the specified condition in the claim holds. Let $\calU^* \subset \calU$ denote the set of nodes that are highest in the topological ordering . We modify the instance in a way such that a node $u^* \in \calU^*$ can be removed from $\calU$, i.e., the specified condition no longer holds for it. Moreover, our modification procedure is such that a node $u \succ u^*$ cannot enter $\calU$ at the end of the procedure. 

We now discuss the modification procedure. In the discussion below, for a given node $u$, we can consider the instance obtained with tree $\calT_u$. We let $\beta_u$ denote the number of cache nodes in this instance.
Note that for $u^*$, the condition $\hat{\beta^*} < \min (\beta^*, K)$ holds. This implies that there is a set of cache leaves in $\calT_{u^*}$ denoted $\{v_{i_1}, \ldots, v_{i_m} \}$ such that $\dsZ(v_{i_1}) = \dots = \dsZ(v_{i_m}) = \{Z_j\}$. 
Let $\Lambda = \{u \in \calT_{u^*}: (v_{i_a}, v_{i_b}) \text{ meet at } u, \text{ for all distinct } v_{i_a}, v_{i_b} \in \{v_{i_1}, \ldots, v_{i_m} \}\}$. We identify $u_0 \in \Lambda$ such that no element of $\Lambda$ is topologically higher than $u_0$ (note that $u_0$ may not be unique) and let $v^*_{i_a}$ and $v^*_{i_b}$ be one pair of the corresponding nodes in $\{v_{i_1}, \ldots, v_{i_m} \}$ that meet at $u_0$. W.l.o.g we assume that $v^*_{i_b} \in \calT_{u_0(r)}$ and $v^*_{i_a} \in \calT_{u_0(l)}$.

We claim that $u_0=u^*$. Assume that this is not the case. Since $u_0 \in \calT_{u^*}$ we have $u_0 \succeq u^*$. Using this and the fact that $u_0 \notin \calU$ we have $|\cup_{v \in \calC_{u_0}}\dsZ(v)| = \min(|\calC_{u_0}|,K)$. Now, from $v^*_{i_a}, v^*_{i_b} \in \calC_{u_0}$ and that $\dsZ(v^*_{i_a})=\dsZ(v^*_{i_b})$ we conclude that $\min(|\calC_{u_0}|,K)=K$. Moreover, as $\cup_{u \in \calT_{u_0}}\dsZ(u) \subseteq \cup_{u \in \calT_{u^*}}\dsZ(u)$ we have $\hat{\beta} = K$ which contradicts $\hat{\beta} < \min(\beta,K)$. Therefore $u_0=u^*$.

We construct instance $P'$ (with lower bound $L'$) as follows. Choose a member of $\{ Z_1,\ldots, Z_K \} \setminus \{\dsZ(v'): v' \in \calC_{u^*} \}$ and denote it by $Z_k$. We set $\dsZ^{P'}(v^*_{i_b}) = \{Z_k\}$. Also, for any $u \in \calD_{u_0(r)}$ and $\dsD^{P}(u) = X_{d_1,\ldots,d_K}$ we set $\dsD^{P'}(u) = X_{d'_1, \dots, d'_K}$ such that $d'_j = d_k$ and $d'_k = d_j$ and $d'_i = d_i$ for $i \notin \{j,k\}$, i.e., we interchange the $j$-th and $k$-th labels and keep the other labels the same. With this modification, it can be seen that $\hat{\beta^*} = \min (\beta^*, K)$.

For nodes $u \succ u^*$, the change we applied to cache nodes in $\calC_{u^*}$ to get $P'$ is such that $\hat{\beta}_u$ continues to equal $\min (\beta_u, K)$ since $Z_k$ is chosen from $\{ Z_1,\ldots, Z_K \} \setminus \{\dsZ(v'): v' \in \calC_{u^*} \}$

We now show that $L' \geq L$. 
In particular, for $u \in \calT_{u_0(l)}$, we have $W_{new}^{P'}(u) = W_{new}^{P}(u)$, as there are no changes in the corresponding labels.
Also we claim that $W_{new}^{P'}(u)=W_{new}^{P}(u)$ for $u \in \calT_{u_0(r)}$. To see this, note that for $v \in \calD_{u_0(r)}$ and $v' \in \calC_{u_0(r)}$ we have $\Delta^{P'}(v',v) = \Delta^{P}(v',v)$ if $\dsZ(v') \notin \{Z_j, Z_k\}$. If $\dsZ^{P'}(v') = \{Z_k\}$ and $\dsD^{P'}(v) = X_{d'_1, \dots, d'_K}$ then,
\begin{align*}
\Delta^{P'}(v',v) &= Rec(\{Z_k\}, \{X_{d'_1, \dots, d'_K}\}) \\
&= \{W_{d'_k}\} = \{W_{d_j}\}\\
&= Rec(\{Z_j\}, \{X_{d_1, \dots, d_K}\}) \\
&= \Delta^{P}(v',v).
\end{align*}
Furthermore, note that there does not exist any $v' \in \calC_{u_0(r)}$ such that $\dsZ(v') = \{Z_j\}$ since we picked $u_0$ such that no element of $\Lambda$ is topologically higher than $u_0$. From eq. (\ref{eq:wnew_2}) and (\ref{eq:W-lab-union}), it is not hard to see that this in turn implies that $W_{new}^{P'}(u)=W_{new}^{P}(u)$ for $u \in \calT_{u_0(r)}$. 

It follows therefore that $\dsW^{P'}(u_0)  = \dsW^{P}(u_0)$ (from eq. (\ref{eq:W-lab-union})). Let us now consider the other nodes. As the changes are applied only to $\calT_{u_0(r)}$ so $label(u)$ changes only for nodes $u$ such that $u_0 \succ u$. Consider the subset of internal nodes $U=\{u_0, u_1, \ldots, u_t\}$ such that $(u_i, u_{i+1})$ is an edge, i.e., the set of internal nodes including $u_0$ and all nodes downstream of $u_0$ such that $u_t$ is the last internal node. W.l.o.g we assume that $u_{i-1}\in \calT_{u_{i}(l)}$ for $i \geq 1$. We now show that $\cup_{u \in U}W_{new}^{P}(u) \subseteq \cup_{u \in U}W_{new}^{P'}(u)$. Towards this end we have the following observations for $u \in U$.
\begin{align*}
\dsZ^{P'}(u) &= \dsZ^{P}(u) \cup \{Z_k\} \text{~(from the construction of $P'$)}\\
\Delta^{P'}(u,u) &= \cup_{v \in \calD_{u}} \Delta^{P'}(u,v).
\end{align*}
Now, for $v \notin \calD_{u_0(r)}$ we have $\dsD^{P'}(v) = \dsD^{P}(v)$ so that
\begin{align*}
\Delta^{P'}(u,v) &= Rec(\dsZ^{P'}(u),\dsD^{P'}(v))\\
&=Rec(\dsZ^{P'}(u),\dsD^{P}(v))\\
&\supseteq \Delta^{P}(u,v) (\text{~since~} \dsZ^{P'}(u) \supseteq \dsZ^{P}(u)). 
\end{align*}
Conversely for $v \in \calD_{u_0(r)}$ we have
\begin{align*}
Rec\left(\{Z_j, Z_k\}, \dsD^{P'}(v) \right) &= Rec\left(\{Z_j, Z_k\}, \dsD^{P}(v) \right),
\end{align*}
and
\begin{align*}
Rec\left(\{Z_i\}, \dsD^{P'}(v) \right) &= Rec\left(\{Z_i\}, \dsD^{P}(v) \right) \text{~~(for $Z_i \notin \{Z_j, Z_k \}$)}.
\end{align*}
Now, note that $\{Z_k, Z_j \} \subseteq \dsZ^{P'}(u)$ so that
\begin{align*}
\Delta^{P'}(u,v) &= Rec\left(\dsZ^{P'}(u), \dsD^{P'}(v) \right) \\
&= Rec\left(\dsZ^{P'}(u), \dsD^{P}(v) \right), \\
&\supseteq Rec\left(\dsZ^{P}(u), \dsD^{P}(v) \right)  = \Delta^{P}(u,v),
\end{align*}
since $\dsZ^{P'}(u) \supseteq \dsZ^{P}(u)$.
We can therefore conclude that
\begin{align*}
\Delta^{P}(u,u) &= \cup_{v \in \calD_{u}} \Delta^{P}(u,v) \subseteq \cup_{v \in \calD_{u}} \Delta^{P'}(u,v) = \Delta^{P'}(u,u).
\end{align*}
Now we consider a $W^* \in W_{new}^{P}(u_i)$ so that $W^* \in \Delta^{P}(u_i,u_i)$ which by above condition means that $W^* \in \Delta^{P'}(u_i,u_i)$.
Thus either $W^* \in W_{new}^{P'}(u_i)$ or $W^* \in \dsW^{P'}(u_i)$. In the latter case there exists a node $u_{i'}$ where $0 \leq i' < i$ such that $W^* \in W_{new}^{P'}(u_{i'})$ since $W^* \notin \dsW(u_0)$ and we have shown that $\dsW^{P'}(u_0)  =\dsW^{P}(u_0)$. Thus, we observe that
\begin{align*}
L' &= |\cup_{u \in U}W_{new}^{P'}(u)| + \sum_{u \in \calT', u \notin U} |W_{new}^{P'}(u)| , \\
&\geq |\cup_{u \in U}W_{new}^{P}(u)| + \sum_{u \in \calT, u \notin U} |W_{new}^{P}(u)| , \\
&= L,
\end{align*}
where the second inequality holds since $\sum_{u \in \calT', u \notin U} |W_{new}^{P'}(u)| = \sum_{u \in \calT, u \notin U} |W_{new}^{P}(u)|$ and $|\cup_{u \in U}W_{new}^{P'}(u)| \geq |\cup_{u \in U}W_{new}^{P}(u)|$.

As discussed before, the modification procedure is such that at the end of the operation $u^* \notin \calU$. Moreover nodes $u \succ u^*$ are not in $\calU$ either.
For each node $u \in \calU$ let $d(u)$ denote the number of edges in path connecting $u$ to the root node. Our modification procedure is such that $d^* = \max_{u \in \calU} d(u)$ is guaranteed to decrease over the course of the iterations. Indeed, if $|\calU^*| =1$, then at the end of the iteration $d^*$ will definitely decrease. If $|\calU^*| > 1$, then $d^*$ will definitely decrease after the modification procedure is applied to all the nodes in $\calU^*$. Thus, the sequence of iterations is guaranteed to terminate. This observation concludes the proof.  

\end{proof}

\subsection{Proof of Lemma \ref{lemma:Increase_N_Increase_L}}
\begin{proof}

Given the conditions of the theorem, from Corollary \ref{corollay:sat_rem} we can conclude that there exists an index $i^* \in \{1, \dots, \alpha\}$ such that $\sum_{v' \in \calC} \psi(v_{i^*},v') < \min(\beta, K)$. We set $i^*$ to be the smallest such index. Let $\Pi^1(v_{i^*}) = \{v' \in \calC: \psi(v_{i^*}, v') = 1\}$ and $\Pi^0(v_{i^*}) = \{v' \in \calC: \psi(v_{i^*}, v') = 0, \dsZ(v') \nsubseteq \cup_{v \in \Pi^1(v_{i^*})} \dsZ(v) \}$. Note that  $\Pi^0(v_{i^*})$ is non-empty since $|\cup_{v' \in \calC} \dsZ(v')| = \min(\beta,K)$ and $\sum_{v' \in \calC} \psi(v_{i^*},v') < \min(\beta, K)$.

Next, we determine the set of nodes where $v_{i^*}$ and the nodes in $\Pi^0(v_{i^*})$ meet, i.e., we define $\Lambda^{0}(v_{i^*}) = \{u \in \calT: \exists v' \in \Pi^0(v_{i^*}) \text{~such that~} v_{i^*} \text{~and~} v' \text{~meet at~} u.\}$. Note that there is a topological ordering on the nodes in $\Lambda^{0}(v_{i^*})$. Pick the node $u^* \in \Lambda^{0}(v_{i^*})$ such that no element of $\Lambda^{0}(v_{i^*})$ is topologically higher than $u^*$ ($u^*$ is in the path from $v_{i^*}$ to the root node). Let the corresponding node in $\Pi^0(v_{i^*})$ be denoted by $v_{j^*}$ where $j^* \in \{\alpha + 1, \dots, \alpha+\beta\}$. Note that $v_{j^*}$ might not be unique.

Suppose that $\dsZ(v_{j^*}) = \{Z_k\}$ and that $\dsD(v_{i^*}) = X_{d_1, \dots, d_K}$. We modify the instance $P$ as follows. Set $d_k = N+1$ (i.e., the index of the $N+1$ file). Thus, the only change is in $\dsD(v_{i^*})$. Let us denote the new instance by $P'=P(\calT',\alpha,\beta,L',N+1,K)$.

We now analyze the value of $L'$.
W.l.o.g. we assume that $v_{i^*} \in \calT'_{u^*(l)}$ and $v_{j^*} \in \calT'_{u^*(r)}$. 
Note that $W_{new}^{P'}(u) = W_{new}^{P}(u)$ for $u \in \calT'_{u^*(r)}$ as the subtree $\calT'_{u^*(r)}$ is identical to $\calT_{u^*(r)}$. We also have
$$ W_{new}^{P'}(u) = W_{new}^{P}(u) \text{~for~} u \in \calT'_{u^*(l)}.$$
To see this suppose that this is not true. This implies that the file $W_{N+1}$ is recovered at some node in $\calT'_{u^*(l)}$, i.e., there exists $v' \in \calC$ such that $v' \in \calT'_{u^*(l)}$, $\dsZ(v') = \{Z_k\}$, and that $v'$ and $v_{i^*}$ meet at some $u \succ u^*$. From $v_{j^*} \in \Pi^0(v_{i^*})$ we can conclude that $\{Z_k\} \nsubseteq \cup_{v \in \Pi^1(v_{i^*})}$ and $v' \in \Pi^0(v_{i^*})$ (as $\dsZ(v')=\{Z_k\}$).
However this is a contradiction, since this implies the existence of node $u$ that is topologically higher than $u^*$ in the set $\Lambda^{0}(v_{i^*})$. It follows from eq. (\ref{eq:W-lab-union}) that $\dsW^{P'}(u^*) = \dsW^{P}(u^*)$.

Next, we claim that $W_{new}^{P'}(u^*) =W_{new}^{P}(u^*) \cup \{W_{N+1}\}$. To see this consider the following series of arguments. Let the singleton subset $\Delta^{P}(v_{i^*}, v_{j^*}) = \{W^*\}$. 
Note that $\psi^{P}(v_{i^*},v_{j^*}) = 0$. This implies that there exist $v \in \calD_{u^*}$ and $v' \in  \calC_{u^*}$ such that $v$ and $v'$ meet above $u^*$ and recover the file $W^*$ where $(v,v') \neq (v_{i^*}, v_{j^*})$. 
Thus, as $\dsZ^{P'}(u^*) = \dsZ^{P}(u^*)$, we can conclude that
\begin{align*}
\Delta^{P'}(u^*, u^*) &= Rec(\dsZ^{P'}(u^*),  \dsD^{P'}(u^*)) \\
&= Rec(\dsZ^{P}(u^*),  \dsD^{P'}(u^*)) \\
&=  \Delta^{P}(u^*, u^*) \cup \{W_{N+1}\}.
\end{align*}
Furthermore, we have
\begin{align*}
&W_{new}^{P'}(u^*) = \Delta^{P'}(u^*,u^*) \setminus \dsW^{P'}(u^*)\\
&= \Delta^{P}(u^*,u^*) \cup \{W_{N+1}\} \setminus \dsW^{P}(u^*)\\
&= W_{new}^{P}(u^*) \cup \{W_{N+1}\}, \text{~(since~} W_{N+1} \notin \dsW^{P}(u^*)).
\end{align*}
For $u$ such that $u^* \succ u$ we inductively argue that $W_{new}^{P'}(u) = W_{new}^{P}(u)$. To see this suppose that $u^* = u_r$. It is evident that $\Delta_{rl}^{P'}(u) = \Delta_{rl}^{P}(u)$. Next,
$\Delta_{lr}^{P'}(u) = \Delta_{lr}^{P}(u)$ since $Z_k \notin \dsZ(u_l) \setminus \dsZ(u_r)$. Thus,
\begin{align*}
&W_{new}^{P'}(u) = \Delta_{rl}^{P'}(u) \cup \Delta_{lr}^{P'}(u) \setminus \dsW^{P'}(u)\\
&= \Delta_{rl}^{P}(u) \cup \Delta_{lr}^{P}(u) \setminus \dsW^{P'}(u)\\
&= \Delta_{rl}^{P}(u) \cup \Delta_{lr}^{P}(u) \setminus \dsW^{P}(u) \cup \{W_{N+1}\}\\
&= \Delta_{rl}^{P}(u) \cup \Delta_{lr}^{P}(u) \setminus \dsW^{P}(u)  \text{~(since $W_{N+1} \notin \Delta_{rl}^{P}(u) \cup \Delta_{lr}^{P}(u)$)}\\
&= W_{new}^{P}(u).
\end{align*}

Next, we note that $\dsW(u) = \dsW(u_r) \cup W_{new}(u_r) \cup \dsW(u_l) \cup W_{new}(u_l)$. It is evident that $\dsW^{P'}(u_l)=\dsW^{P}(u_l)$ and $W_{new}^{P'}(u_l)=W_{new}^{P}(u_l)$. Next, $\dsW^{P'}(u_r) = \dsW^{P'}(u^*) = \dsW^{P}(u^*)$ (from above) and $W_{new}^{P'}(u^*) = W_{new}^{P}(u^*) \cup \{W_{N+1}\}$, so that $\dsW^{P'}(u) = \dsW^{P}(u) \cup \{W_{N+1}\}$.

As the induction hypothesis we assume that for any node $u$ downstream of $u^*$, we have $W_{new}^{P'}(u) = W_{new}^{P}(u)$ and $\dsW^{P'}(u) = \dsW^{P}(u) \cup \{W_{N+1}\}$. Consider a node $u'$ such that $u'_r = u$.
As before we have $\dsW^{P'}(u'_l)=\dsW^{P}(u'_l)$, $W_{new}^{P'}(u'_l)=W_{new}^{P}(u'_l)$. Moreover, we have $\dsW^{P'}(u'_r) = \dsW^{P}(u'_r) \cup \{W_{N+1}\}$ and $W_{new}^{P'}(u'_r) = W_{new}^{P}(u'_r)$, by the induction hypothesis, so that $\dsW^{P'}(u') = \dsW^{P}(u') \cup \{W_{N+1}\}$.

Next, we argue similarly as above that $\Delta_{rl}^{P'}(u') = \Delta_{rl}^{P}(u')$ and $\Delta_{lr}^{P'}(u') = \Delta_{lr}^{P}(u')$ and the sequence of equations above can be used to conclude to that $W_{new}^{P'}(u')=W_{new}^{P}(u')$.

We conclude that $L' = L+1$.
\end{proof}

\subsection{Proof of Claim \ref{clm:Gamma_right_subset_Gamma_left}}
\label{appendix:proofGamma_right_subset_Gamma_left}
\begin{proof}

W.l.o.g we assume that $|\Gamma_l| \geq |\Gamma_r|$ for all $u \in \calT$. We identify the set $\calU$ as the set of nodes in $\calT$ such that $\Gamma_r \nsubseteq \Gamma_l$. Let $\calU^* \subset \calU$ denote the set of nodes in $\calU$ that are highest in the topological ordering.

Consider a node $u^* \in \calU^*$. Note that since $|\Gamma_l| \geq |\Gamma_r|$, there exists an injective mapping $\phi:\Gamma_r \setminus \Gamma_l \rightarrow \Gamma_l \setminus \Gamma_r$. Let $\dsZ(u^*_r) = \{Z_{i_1}, \dots, Z_{i_m}\}$.
We construct the instance $P'$ as follows. For each $v \in \calD_{u^*_r}$ suppose $\dsD(v) = \{X_{d_1, \dots, d_K}\}$. For $j = 1, \dots, m$, if $d_{i_j} \in \Gamma_r \setminus \Gamma_l$, we replace it by $\phi(d_{i_j})$; otherwise, we leave it unchanged. In other words, we modify the delivery phase signals  so that the files that are recovered in $\calT_{u^*(r)}$ are a subset of those recovered in $\calT_{u^*(l)}$.


As our change amounts to a simple relabeling of the sources, for $u \in \calT_{u^*(r)}$ we have $|W_{new}^{P'}(u)| = |W_{new}^{P}(u)|$. For any $u \succ u^*$ we have $\Gamma^P_r(u) \subseteq \Gamma^P_l(u)$. Similarly, we can show that $\Gamma^{P'}_r(u) \subseteq \Gamma^{P'}_l(u)$. We note that $\Gamma^{P'}$ and $\Gamma^{P}$ only differ in files like $W_d$ where $d$ is in domain of $\phi(\cdot)$, i.e., if $W_d \in \Gamma^P$ then $W_{\phi(d)} \in \Gamma^{P'}$.
If there exist a file $W_d\in \Gamma^P_r(u)$ with $d$ in domain of $\phi(\cdot)$ then $W_{\phi(d)} \in \Gamma^{P'}_r(u)$ and from $\Gamma^P_r(u) \subseteq \Gamma^P_l(u)$ we have $W_{\phi(d)} \in \Gamma^{P'}_l(u)$. Thus, we have $\Gamma^{P'}_r(u) \subseteq \Gamma^{P'}_l(u)$. This indicates that after applying this change, the property of $\Gamma_r \subseteq \Gamma_l$ still holds in $P'$ for all nodes $u$ that are upstream of $u^*$.
Furthermore, the relabeling of the sources only affects $u \in \calT'$ such that $u^* \succ u$. Note that $\dsW^{P'}(u^*) \subset \dsW^{P}(u^*)$ (the inclusion is strict since at least one source in $\Gamma_r \setminus \Gamma_l$ is mapped to $\Gamma_l \setminus \Gamma_r$) since we have $\Gamma^{P'}_r \subseteq  \Gamma^{P'}_l$ and $\Gamma^{P'}_l =  \Gamma^{P}_l$.

Now, we note that
\begin{align*}
\Delta_{rl}^{P'}(u^*) &= \Delta_{rl}^{P}(u^*), \text{~and}\\
\Delta_{lr}^{P'}(u^*) &= \Delta_{lr}^{P}(u^*),
\end{align*}
where the first equality holds since $\dsZ^{P}(u^*_r) =\dsZ^{P'}(u^*_r)$, $\dsZ^{P}(u^*_l) =\dsZ^{P'}(u^*_l)$ and $\dsD^{P}(u^*_l) =\dsD^{P'}(u^*_l)$. The second equality holds since our modification to the delivery phase signals in $\calT_{u^*(r)}$ does not affect files that are recovered from $\dsZ^{P}(u^*_l) \setminus \dsZ^{P}(u^*_r)$. It follows therefore that $|W_{new}^{P'}(u^*)| \geq |W_{new}^{P}(u^*)|$.

We make an inductive argument for nodes $u$ that are downstream of $u^*$; w.l.o.g. we assume that $u^* \in \calT_{u(r)}$. Specifically, our inductive hypothesis is that for a node $u$ that is downstream of $u^*$, we have $\dsW^{P'}(u) \subseteq \dsW^{P}(u)$, $\Delta_{rl}^{P'}(u) = \Delta_{rl}^{P}(u)$ and $\Delta_{lr}^{P'}(u) = \Delta_{lr}^{P}(u)$.

Now consider a node $u'$ downstream of $u$ such that $u'_r = u$.  We have, $\dsW(u') = \dsW(u'_l) \cup W_{new}(u'_l) \cup \dsW(u) \cup W_{new}(u)$. Note that we can express $ \dsW(u) \cup W_{new}(u) = \dsW(u) \cup \Delta_{rl}(u) \cup \Delta_{lr}(u)$ . It is evident that $\dsW^{P'}(u'_l) = \dsW^{P}(u'_l)$ and $W_{new}^{P'}(u'_l) = W_{new}^{P}(u'_l)$. Moreover, by the induction hypothesis, $\dsW^{P'}(u) \subseteq \dsW^{P}(u)$ and $\Delta_{rl}^{P'}(u) \cup \Delta_{lr}^{P'}(u) =  \Delta_{rl}^{P}(u) \cup \Delta_{lr}^{P}(u)$. Thus, the induction step is proved.


We have shown that after applying the changes for $u^*$, the condition $\Gamma_r \nsubseteq \Gamma_l$ will not hold for $u \succeq u^*$.
For each node $u \in \calU$ let $d(u)$ denote the number of edges in path connecting $u$ to the root node. Our modification procedure is such that $d^* = \max_{u \in \calU} d(u)$ is guaranteed to decrease over the course of the iterations. Indeed, if $|\calU^*| =1$, then at the end of the iteration $d^*$ will definitely decrease. If $|\calU^*| > 1$, then $d^*$ will definitely decrease after the modification procedure is applied to all the nodes in $\calU^*$. Thus, the sequence of iterations is guaranteed to terminate. This observation concludes the proof.  
\end{proof}

\begin{claim}
\label{clm:app_rho_eq_rho_tild}
Under condition of $\hat{\beta}_l = \min(\beta_l,K)$ and $\hat{\beta}_r = \min(\beta_r,K)$ we have $\min(\hat{\beta}_l,K-\hat{\beta}_r) = [\min(\beta_l,K-\beta_r)]^+$ and $\min(\hat{\beta}_r,K-\hat{\beta}_l) = [\min(\beta_r,K-\beta_l)]^+$.
\end{claim}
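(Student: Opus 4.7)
The plan is to establish the first equality $\min(\hat{\beta}_l, K-\hat{\beta}_r) = [\min(\beta_l, K-\beta_r)]^+$ by a direct case analysis on whether each of $\beta_l$ and $\beta_r$ exceeds $K$; the second equality will then follow by swapping the roles of $l$ and $r$, since the hypothesis is symmetric in $l$ and $r$.

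First, I would record the only nontrivial identity being used, namely that for any nonnegative integer $\beta$, $\min(\beta, K) = \beta$ when $\beta \leq K$ and $\min(\beta, K) = K$ when $\beta > K$; and that $\beta_l, \beta_r \geq 0$, so $\min(\beta_l, c)$ cannot be negative when $c \geq 0$. With these in hand, I would split into four cases: (i) $\beta_l \leq K$ and $\beta_r \leq K$; (ii) $\beta_l > K$ and $\beta_r \leq K$; (iii) $\beta_l \leq K$ and $\beta_r > K$; and (iv) $\beta_l > K$ and $\beta_r > K$.

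In case (i), $\hat{\beta}_l = \beta_l$ and $\hat{\beta}_r = \beta_r$, so the LHS is exactly $\min(\beta_l, K-\beta_r)$, and since $K - \beta_r \geq 0$ and $\beta_l \geq 0$ the minimum is nonnegative, agreeing with the RHS. In case (ii), $\hat{\beta}_l = K$ and $\hat{\beta}_r = \beta_r$, so $\min(\hat{\beta}_l, K-\hat{\beta}_r) = K - \beta_r$; and on the RHS, $K - \beta_r < \beta_l$ (since $\beta_l > K \geq K - \beta_r$), so $\min(\beta_l, K-\beta_r) = K - \beta_r \geq 0$, and the $[\cdot]^+$ is harmless. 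In cases (iii) and (iv), $K - \hat{\beta}_r = 0$, so the LHS is $0$; meanwhile on the RHS $K - \beta_r < 0$, and since $\beta_l \geq 0$, $\min(\beta_l, K-\beta_r) = K - \beta_r < 0$, so $[\cdot]^+ = 0$, matching the LHS.

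There is no real obstacle here; the only thing to watch is that the $[\cdot]^+$ on the RHS is exactly what makes the two expressions agree when $\beta_r > K$, since the LHS is forced to be nonnegative by the outer $\min$ with $K - \hat{\beta}_r \geq 0$, whereas without the $[\cdot]^+$ the RHS could become negative. Once the four cases above are verified, the first equality is established, and applying the same argument with the labels $l$ and $r$ interchanged yields the second equality.
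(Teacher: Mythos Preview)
Your proof is correct. Both you and the paper proceed by direct case analysis, so the approaches are essentially the same; the only difference is in how the cases are sliced. You split into four cases according to whether each of $\beta_l$ and $\beta_r$ individually exceeds $K$, whereas the paper uses just two cases according to whether $\beta_l + \beta_r \leq K$ or $\beta_l + \beta_r \geq K$. The paper's split is a bit more economical because $\beta_l + \beta_r$ versus $K$ is precisely the threshold at which both the outer $\min$ on the LHS and the $[\cdot]^+$ on the RHS change behavior: when $\beta_l + \beta_r \leq K$ both sides reduce to $\beta_l$, and when $\beta_l + \beta_r \geq K$ both sides reduce to $[K-\beta_r]^+$. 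Your four cases cover the same ground correctly, just with a little redundancy (your cases (iii) and (iv) collapse, and the sub-splitting on $\beta_l$ versus $K$ in cases (i)--(ii) is not really needed once one observes that $\hat\beta_l \leq K$ always).
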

\begin{proof}
First, we consider the case where $\beta_l+\beta_r \leq K$ so $\beta_l \leq K-\beta_r$ and $[\min(\beta_l,K-\beta_r)]^+ = \beta_l$. By assumption, $\beta_l+\beta_r \leq K$ implies $\hat{\beta}_l+\hat{\beta}_r \leq K$ thus $\min(\hat{\beta}_l,K-\hat{\beta}_r) = \hat{\beta}_l = \beta_l$. We now consider the $\beta_l+\beta_r \geq K$ case which in turns leads to $\hat{\beta}_l +\hat{\beta}_r \geq K$. Therefore, $$\min(\hat{\beta}_l, K-\hat{\beta}_r) = K-\hat{\beta}_r=K-\min(K,\beta_r) = \max(0,K-\beta_r) = [K-\beta_r]^+ = [\min(\beta_l,K-\beta_r)]^+.$$
The same argument will show that $\min(\hat{\beta}_r,K-\hat{\beta}_l) = [\min(\beta_r,K-\beta_l)]^+$.
\end{proof}

\begin{claim}
\label{clm:ab_eq_albl_plus_arbr_plus}
Consider the integers $\alpha,\alpha_l,\alpha_r, \beta,\beta_l,\beta_r,K$ so that $\alpha= \alpha_l+\alpha_r$ and $\beta=\beta_l+\beta_r$. Then
\beqa
\alpha \min(\beta,K) = \alpha_l \min(\beta_l,K) + \alpha_r \min(\beta_r,K) + \alpha_l [\min(\beta_r,K-\beta_l)]^++\alpha_r [\min(\beta_l,K-\beta_r)]^+.
\eeqa
\end{claim}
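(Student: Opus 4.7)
The plan is to reduce both sides to the same expression by repeatedly applying the elementary identity $x + \min(y, K - x) = \min(x + y, K)$, valid whenever $0 \le x \le K$ and $y \ge 0$. To make the case analysis of the $[\,\cdot\,]^+$ terms uniform, I would first rewrite the RHS using $\hat\beta_l := \min(\beta_l, K)$ and $\hat\beta_r := \min(\beta_r, K)$, invoking Claim~\ref{clm:app_rho_eq_rho_tild} to replace
\[
[\min(\beta_r, K-\beta_l)]^+ = \min(\hat\beta_r, K-\hat\beta_l), \qquad [\min(\beta_l, K-\beta_r)]^+ = \min(\hat\beta_l, K-\hat\beta_r).
\]

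Next, I would group the RHS as
\[
\alpha_l \bigl(\hat\beta_l + \min(\hat\beta_r, K-\hat\beta_l)\bigr) + \alpha_r\bigl(\hat\beta_r + \min(\hat\beta_l, K-\hat\beta_r)\bigr),
\]
and apply the identity above (with $x = \hat\beta_l$, $y = \hat\beta_r$, then with $x$ and $y$ swapped) to rewrite both bracketed expressions as $\min(\hat\beta_l + \hat\beta_r, K)$. The identity itself is a one-line check: if $\hat\beta_r \le K - \hat\beta_l$ the left side is $\hat\beta_l + \hat\beta_r$ and $\hat\beta_l + \hat\beta_r \le K$; otherwise it is $\hat\beta_l + (K-\hat\beta_l) = K$ and $\hat\beta_l + \hat\beta_r \ge K$.

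The last step is to show $\min(\hat\beta_l + \hat\beta_r, K) = \min(\beta, K)$. If $\beta \le K$ then $\hat\beta_l = \beta_l$, $\hat\beta_r = \beta_r$, so the two sides both equal $\beta$. If $\beta > K$ one checks by a short case split (either some $\beta_i \ge K$, in which case $\hat\beta_l+\hat\beta_r \ge K$ trivially, or both $\beta_i < K$, in which case $\hat\beta_l+\hat\beta_r = \beta > K$) that $\hat\beta_l + \hat\beta_r \ge K$, so both sides equal $K$. Combining the three steps, the RHS equals $(\alpha_l + \alpha_r)\min(\beta, K) = \alpha \min(\beta, K)$, which is the LHS.

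There is no real obstacle here; the only thing to be careful about is the case $\beta_l > K$ or $\beta_r > K$, where the $[\,\cdot\,]^+$ on the RHS actually does clip a negative value to zero — this is precisely the scenario handled by passing to $\hat\beta_l, \hat\beta_r$ via Claim~\ref{clm:app_rho_eq_rho_tild}, after which all remaining manipulations involve only nonnegative quantities. Alternatively, one could simply split into the cases (i) $\beta_l+\beta_r\le K$, (ii) $\beta_l,\beta_r\le K$ but $\beta_l+\beta_r>K$, (iii) $\max(\beta_l,\beta_r)>K$, and verify the identity by direct substitution in each case, but the min-identity approach above avoids repetition.
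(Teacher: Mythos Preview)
Your proof is correct. The paper's own proof is more direct: it simply splits into the two cases $\beta \le K$ and $\beta \ge K$ and verifies the identity by substitution in each case (in the first case everything is unclipped and the identity becomes $(\alpha_l+\alpha_r)(\beta_l+\beta_r) = \alpha_l\beta_l + \alpha_r\beta_r + \alpha_l\beta_r + \alpha_r\beta_l$; in the second it uses $\min(\beta_i,K) + [K-\beta_i]^+ = K$). Your route through $\hat\beta_l,\hat\beta_r$ and the identity $x+\min(y,K-x)=\min(x+y,K)$ is a bit more structured and handles the clipping uniformly by delegating the $[\,\cdot\,]^+$ analysis to Claim~\ref{clm:app_rho_eq_rho_tild}, but the underlying case work reappears when you verify $\min(\hat\beta_l+\hat\beta_r,K)=\min(\beta,K)$. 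The alternative three-case verification you mention at the end is essentially the paper's argument (with one of its two cases split further). Either way the claim is elementary and both arguments are fine.
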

\begin{proof}
First, we consider the case where $\beta \leq K$ thus $\beta_l \leq K-\beta_r$ and $\beta_r \leq K-\beta_l$. Then, the above relation reduces to $\alpha\beta = \alpha_l \beta_l + \alpha_r \beta_r + \alpha_l\beta_r+\alpha_r \beta_l$ which is true. For the case $\beta \geq K$, the relation reduces to $\alpha K = \alpha_l \left( \min(\beta_l,K) +[K-\beta_l]^+\right) + \alpha_r \left(\min(\beta_r,K)+[K-\beta_r]^+ \right) $. This equation holds since $\min(\beta_l,K) =K-[K-\beta_l]^+$, and same thing for $\beta_r$.
\end{proof}

\end{document}